\documentclass[12pt]{article}

    \usepackage{subcaption} 
\usepackage{booktabs} 
\usepackage{tabularx}
\def\spacingset#1{\renewcommand{\baselinestretch}%
{#1}\small\normalsize} \spacingset{1}

\usepackage[linesnumbered,ruled,vlined]{algorithm2e}
\usepackage{graphicx,xcolor}
\usepackage{amsmath,amsthm,amssymb,bm}
\usepackage{natbib}
\usepackage{enumitem}
\usepackage{multirow}
\usepackage{placeins}
\usepackage{authblk}

\DeclareMathOperator*{\argmin}{argmin}

\newtheorem{assumption}{Assumption}

\newcommand{\suit}[1]{\left( #1\right)}
\newcommand{\msuit}[1]{\left[ #1\right]}
\newcommand{\set}[1]{\left\{ #1\right\}}
\newcommand{\abs}[1]{\left| #1\right|}

\newcommand{\mbX}{\boldsymbol{X}}
\newcommand{\mbx}{\boldsymbol{x}}

\newcommand{\mbu}{\boldsymbol{u}}

\newcommand{\mbv}{\boldsymbol{v}}
\newcommand{\bolbeta}{\boldsymbol{\beta}}

\newcommand{\bE}{\mathbb{E}}
\newcommand{\mI}{\mathcal{I}}
\newcommand{\mL}{\mathcal{L}}
\newcommand{\bolSigma}{\boldsymbol{\Sigma}}

\theoremstyle{plain}
\newtheorem{lemma}{Lemma}

\newtheorem{theorem}{Theorem}
\newtheorem{corollary}{Corollary}
\newtheorem{remark}{Remark}

\theoremstyle{remark}

\title{Generalized Linear Models for Extremes: Estimation and Inference in High Dimensions}

\author[1]{Liujun Chen}
\author[2]{Chen Zhou}

\affil[1]{School of Management, University of Science and Technology of China}
\affil[2]{Econometric Institute, Erasmus University Rotterdam} 
\begin{document}
 
\maketitle

\begin{abstract}
We propose a regression model for the extreme tail of a response variable, in which covariates rescale the tail without changing its shape. A single covariate-dependent function then characterizes the entire conditional tail, in contrast to extreme quantile regression, which targets a quantile at a pre-specified level. The tail shape itself is left unrestricted: heavy-, light- and short-tailed responses are covered by the same framework. We specify the function through a link function and a linear combination of the covariates, which is
in the spirit of a generalized linear model. In estimation, we match the parametric specification to the underlying tail function under a Bregman divergence, over a region localized at the largest observations. The resulting loss is convex, and an $\ell_1$-penalty allows the number of covariates to exceed the effective sample size. 
The tail localization makes the asymptotic theory deviate from that for 
classical penalized generalized linear models. Only the tail 
observations selected by a random threshold are used in the statistical 
analysis, making them dependent.
We derive the convergence rate of the penalized estimator and propose a debiased estimator that is asymptotically normal, yielding confidence intervals for individual coefficients. Its asymptotic variance is determined by the covariance of the score, which under tail localization differs from the Hessian and must be estimated separately. We apply the method to automobile insurance claims data.
\end{abstract}

\spacingset{1.9} 

\clearpage
\section{Introduction}
 
In many applications the quantity of interest is not the average behavior of a response but its behavior in the tail. The covariates that shift the center of a distribution need not be the ones that shift its extreme. An automobile insurer, for instance, cares less about the mean claim than about which policyholders generate the largest claims. The same distinction appears in other contexts. Adverse macroeconomic conditions can raise the probability of an extreme financial loss while barely moving median returns. A climate covariate can change how often extreme precipitation occurs without changing typical rainfall. Regression methods that describe the conditional mean, or conditional quantiles at moderate levels, carry little information about extreme events.

A second difficulty is that covariate information is often high-dimensional. The question is not only how covariates affect the tail but which among many candidates do. The two difficulties compound. In tail estimation the data are scarce by construction, since only the largest observations are informative, whereas variable selection requires large samples.

Extreme quantile regression \citep{chernozhukov2005extremal, wang2012estimation} is a natural tool for such questions. There, quantile regression is applied at quantile levels tending to one, usually with a Pareto-type extrapolation beyond the range of the data; see \cite{tang2026recent} for a survey of this literature. Recent studies on extreme quantile regression allow for non-linearity in the covariate effect, obtained by kernel smoothing \citep{daouia2023inference} or by machine learning methods such as gradient boosting \citep{velthoen2023gradient}. \cite{tang2024high} propose a regularized framework for extreme conditional quantiles with high-dimensional covariates, under a linear specification and without inference on the coefficients. By contrast, \cite{leng2026reinforced} obtain inference theory through a double bootstrap but stay in fixed dimension.

Two gaps in the literature motivate this study. First, these methods target the conditional quantile at a pre-specified level. They describe one point of the conditional tail rather than the tail itself. Second, most of them require the response variable to be heavy-tailed. We take a different route by assuming that covariates rescale the tail without reshaping it.

Our model follows the framework of heteroscedastic extremes, in which the tail of the response variable $Y$ varies systematically across the covariate space. Let $\mbX =(X_1,\dots, X_p)^\top \in \mathbb{R}^{p} $ denote the vector of covariates, where, following standard practice, we set $X_1 \equiv 1$ to incorporate an intercept. Specifically, we assume that there exists a non-negative continuous function $c: \mathbb{R}^p \to \mathbb{R}^+$ such that
\begin{equation}\label{eq:model}
	\lim_{y\to y^{+}} \frac{1-F_Y(y|\mbX=\mbx)}{1-F_Y(y)} =\lim_{y\to y^{+}} c_y(\mbx) = c(\mbx),
\end{equation}
where $y^{+}$ denotes the upper endpoint of the marginal support of $Y$. Under this formulation, the conditional tail of $Y$ given $\mbX=\mbx$ is asymptotically proportional to the marginal tail, with the function $c(\mbx)$ quantifying the relative tail intensity associated with the covariate profile $\mbx$.

The model implies that the conditional and marginal distributions of $Y$ share the same upper endpoint and the same extreme value index. In other words, covariates move the scale of the tail, not its shape. We adopt this restriction deliberately. The extreme value index is the hardest tail feature to estimate, and letting it vary with covariates may lead to complications in statistical inference; see \cite{yang2025cautions} for a discussion. In return, the function $c$ pins down the entire conditional tail instead of a single extreme quantile. In addition, the model \eqref{eq:model} does not restrict the common index itself, which may be positive, zero or negative. The framework thus covers heavy-tailed responses along with light- and short-tailed ones; see Section \ref{sec:addition:simulation} of the Supplementary Material   for a short-tailed design.

\cite{einmahl2016statistics} introduced the heteroscedastic extremes framework and called $c$ the scedasis function, estimating it nonparametrically for a scalar index. \cite{mefleh2020trend} and \cite{einmahl2022spatial} extended it to trend detection and to space-time data. Throughout that literature the covariate is one-dimensional, typically time or location. Neither high-dimensional covariates nor inference on covariate effects is available. Sparsity in extremes has been pursued from other directions, though not for the scedasis function; see the survey in \cite{engelke2021sparse}.

In this paper, we adopt a flexible parametric specification for the heteroscedastic function $c$, 
$$
c_{\bolbeta}(\mbx):= g\suit{\mbx^\top\bolbeta }, \quad \bolbeta = (\beta_1,\dots,\beta_p)^\top \in \mathbb{R}^p, 
$$
where $g(\cdot)$ is a pre-specified positive link function. The specification combines a link function with a linear combination of the covariates, as in a generalized linear model. Taking the expectation on both sides of \eqref{eq:model} yields the normalization constraint
\begin{equation}\label{eq:condition:identify}
	\bE \suit{c_{\bolbeta}(\mbX)}=1.  
\end{equation}
The constraint is a consequence of the model, and it identifies $\bolbeta$.

We estimate $\bolbeta$ by minimizing a Bregman divergence between the conditional tail distribution and its parametric counterpart, over a region localized at the $k$ largest responses. A fundamental challenge in deriving the asymptotic theory arises from the scarcity of tail observations. Although the available sample size is $n$, the information relevant to tail inference is primarily contained in the largest $k$ observations. Consequently, the effective sample size is governed by $k$ rather than $n$, substantially amplifying the difficulty of high-dimensional estimation. To address this issue, we incorporate an $\ell_1$-penalty and establish the convergence rate $\sqrt{s\log p/k}$ under the scaling condition
$
s\log p =o(k),
$
where $s$ denotes the sparsity level. The condition permits the number of covariates to exceed the effective sample size $k$. This contrasts with classical high-dimensional theory, where the analogous condition is $s\log p = o(n)$. 

A second complication is that tail observations are selected through a common random order-statistic threshold. The selected observations are therefore dependent, even when the original observations are independent and identically distributed (i.i.d.). This dependence prevents a direct application of standard generalized linear model arguments. 

To conduct valid inference for regression coefficients, we introduce a link-specific debiasing scheme based on sample-splitting \citep{cai2023statistical}. We establish the asymptotic normality of the proposed debiased estimator. The standard information matrix identity breaks down in the localized tail region. Consequently, the covariance of the score and the Hessian are no longer asymptotically equivalent, and the covariance must be estimated separately. Our debiasing procedure constructs the projection direction by minimizing the estimated covariance of the score, subject to constraints based on the Hessian.

To our knowledge, this is the first treatment of the scedasis function with high-dimensional covariates and the first to deliver coefficient-level inference in that model. The methodology in this study connects extreme value analysis to high-dimensional generalized linear models, a well-developed area \citep{van2008high,negahban2012unified,guo2016tests,shi2019linear,cai2023statistical}. Our method adapts the estimation of generalized linear models to a region localized at the tail.

The rest of the paper is organized as follows. Section \ref{sec:estimation} presents the penalized estimation strategy and its corresponding theory. Section \ref{sec:inference} considers the debiasing procedure and statistical inference for the estimator. A simulation study is presented in Section \ref{sec:simulation}, and a real data application is provided in Section \ref{sec:realdata}. A concluding discussion is provided in Section \ref{sec:discussion}.
 All technical proofs are collected in the Supplementary Material. 

We define the following notation. For any vector $\mbv = (v_1,\dots, v_p)^\top \in \mathbb{R}^p$, $\|\mbv\|_0 = \sum_{j=1}^p \mI\suit{v_j \ne 0 } $, $\|\mbv\|_1 = \sum_{j=1}^p |v_j|$, $\|\mbv\|_2 = \suit{\sum_{j=1}^p v_j^2}^{1/2}$, $\|\mbv\|_{\infty} = \max_{1\le j\le p} |v_j|$. For any square matrix $\boldsymbol{A}$, $\lambda_{\min}(\boldsymbol{A})$ and $\lambda_{\max}(\boldsymbol{A})$ denote the least and largest eigenvalues of $\boldsymbol{A}$, respectively. We use $\nabla$ to denote the gradient operator, and $\nabla^2$ to denote the Hessian. For any real number $x\ne 0$, $\text{sgn}(x)$ denotes the sign of $x$. 
 A mean-zero random variable $X$ is sub-Gaussian with parameter $\sigma$ if its moment generating function satisfies 
 $$
 \bE \exp\suit{ sX} \le \exp\suit{\sigma^2 s^2/2}.
 $$	
For two positive sequences, $a_n \asymp b_n$ means that $a_n/b_n = O(1)$ and $b_n/a_n=O(1)$ as $n\to\infty$.

\section{Estimation}\label{sec:estimation}

\subsection{Penalized Estimation}
 
We estimate $\bolbeta$ by matching the parametric model $c_{\bolbeta}$ to the true heteroscedastic function $c_y$ under a Bregman divergence. Let $\phi: \mathbb{R}_{+}\to \mathbb{R} $ be a strictly convex, differentiable function. The Bregman divergence between $c_y$ and $c_{\bolbeta}$, weighted by the covariate distribution, is 
$$
\mathcal{D}_{\phi}(c_y, c_{\bolbeta}) = \int \msuit{ \phi(c_y(\mbx))- \phi(c_{\bolbeta}(\mbx)) - \phi^\prime (c_{\bolbeta}(\mbx) )\set{c_y(\mbx)-c_{\bolbeta}(\mbx)} }f_{\mbX}(\mbx)d\mbx, 	
$$
where $f_{\mbX}$ is the probability density function of the covariates $\mbX$. 
Ignoring terms that do not depend on $\bolbeta$, minimizing 
$\mathcal{D}_{\phi}(c_y, c_{\bolbeta})$ is equivalent to minimizing
\begin{align*}
\mL_y(\bolbeta) = \int \set{ c_{\bolbeta}(\mbx) \phi^\prime (c_{\bolbeta}(\mbx)) - \phi(c_{\bolbeta}(\mbx)) }f_{\mbX}(\mbx)d\mbx - \int \phi^\prime (c_{\bolbeta}(\mbx)) c_y(\mbx) 	f_{\mbX}(\mbx)d\mbx.
\end{align*}
By Bayes' rule,
$$
c_y(\mbx)= \frac{1-F_Y(y|\mbX=\mbx)}{1-F_Y(y)} = \frac{f_{\mbX}(\mbx|Y>y)}{ f_{\mbX}(\mbx)}.
$$
Consequently,   
\begin{align*}
	\mL_y(\bolbeta) =& \int \set{  c_{\bolbeta}(\mbx) \phi^\prime (c_{\bolbeta}(\mbx)) - \phi(c_{\bolbeta}(\mbx))   }f_{\mbX}(\mbx)d\mbx - \int \phi^\prime (c_{\bolbeta}(\mbx))   	f_{\mbX}(\mbx|Y>y)d\mbx \\
	=& \int \Psi(c_{\bolbeta}(\mbx)) f_{\mbX}(\mbx)d\mbx - \int \phi^\prime (c_{\bolbeta}(\mbx))   	f_{\mbX}(\mbx|Y>y)d\mbx\\
	=& \int \Psi(g(\mbx^\top\bolbeta )) f_{\mbX}(\mbx)d\mbx - \int \phi^\prime (g(\mbx^\top\bolbeta ))   	f_{\mbX}(\mbx|Y>y)d\mbx,
\end{align*}
where $\Psi(u) = u\phi^\prime(u) - \phi(u).$

We focus on the generator $\phi$ satisfying 
$$
  \phi^\prime(u) = g^{-1}(u).
$$
Under this specification, we have the identity
  $$
   \phi^\prime (g(\mbx^\top\bolbeta )) = \mbx^\top \bolbeta.
  $$
  Furthermore, by applying the chain rule in conjunction with the inverse function theorem, it follows that $d \Psi(g(u))/du = g(u)$, and hence 
  $$
  \Psi(g(u)) = \int_0^u g(t) dt +C =: G(u)+C,  
  $$
  for some constant $C$. 
Therefore, the resulting loss function simplifies to (up to a constant that does not depend on $\bolbeta$)
 $$
 \mL_y(\bolbeta)= \int G(\mbx^\top\bolbeta) f_{\mbX}(\mbx)d\mbx - \int \suit{\mbx^\top\bolbeta} f_{\mbX}(\mbx|Y>y)d\mbx.
 $$
 
 Note that
 \begin{align*}
 	 \frac{\partial \mL_y(\bolbeta) }{\partial \beta_1} =& \int  g\suit{\mbx^\top\bolbeta} f_{\mbX}(\mbx)d\mbx -\int    f_{\mbX}(\mbx|Y>y)d\mbx \\
 	 =&\int   g\suit{\mbx^\top\bolbeta} f_{\mbX}(\mbx)d\mbx - 1 \\
 	 =& \bE c_{\bolbeta}(\mbX)  -1.
 \end{align*}
Therefore, any minimizer of $ \mL_y(\bolbeta)$ must satisfy the normalization constraint \eqref{eq:condition:identify}.

Assume that we observe i.i.d. observations $(\mbX_i, Y_i)$, $i=1,\dots,n$.
 Let $Y_{n-k,n}$ denote the $k+1$-th largest order statistic of $\set{Y_1,\dots, Y_n}$, where $k$ is an intermediate sequence satisfying $k = k(n)\to \infty, k/n\to 0$ as $n\to\infty$. Replacing expectations by empirical averages yields the sample loss 
$$
\mL_n(\bolbeta) = \frac{1}{n}\sum_{i=1}^n G\suit{\mbX_i^\top \bolbeta} - \frac{1}{k}\sum_{i=1}^n \mbX_i^\top \bolbeta \mI\suit{Y_i>Y_{n-k,n}}.
$$
Structurally, $\mathcal{L}_n(\boldsymbol{\beta})$ adapts a generalized-linear-model loss to localized tail inference. The first term, $n^{-1} \sum_{i=1}^n G(\mathbf{X}_i^\top \boldsymbol{\beta})$, is a response-independent convex term computed from all covariates. The second term uses the covariate average among observations above the random threshold $Y_{n-k,n}$. Thus, the response enters through the $k$ upper-order observations, whereas the full covariate sample contributes to normalization and curvature.

To accommodate high-dimensional covariates, we consider the $\ell_1$-penalized estimator
 $$
 \widehat{\bolbeta}_n = \argmin_{\bolbeta} \mL_n(\bolbeta) +\lambda_n \|\bolbeta\|_1.
 $$

\subsection{Theory for fixed dimensions}

 In this section, we study the asymptotic behavior of $\widehat{\bolbeta}_n$ when $p$ is fixed. We impose the following conditions. Let $\bolbeta^0$ denote the target parameter.
 
\renewcommand{\theassumption}{(M)}
\begin{assumption}\label{all:model:bias}
There exist functions $A$ and $B$, with $A(y)\to 0$ as $y\to y^{+}$, such that
	\begin{equation*}
 \sup_{\mbx \in \mathcal{X}} \frac{1}{B(\mbx)} \abs{\frac{1-F_Y(y|\mbX=\mbx)}{1-F_Y(y)}- g\suit{\mbx^\top\bolbeta^0 }} = O(1)A(y),
\end{equation*}
where $\mathcal{X}$ is the support of the covariates $\mbX$.  
 Moreover, for some $\delta>0$,  
$$
 \bE |B(\mbX)|^{1+\delta}<\infty, \quad \max_{1\le j\le p} \bE \suit{|X_j B(\mbX)|} < \infty.
$$	
\end{assumption}
Assumption \ref{all:model:bias} quantifies the rate of convergence of the tail approximation error in model \eqref{eq:model}. Similar conditions are commonly used in extreme value statistics to control model bias \citep{haan2006extreme, einmahl2016statistics, xu2022prediction}. 
 The function $B(\mathbf{x})$ serves as a spatial scaling factor for the second-order expansion. When the tail approximation error is uniformly bounded over the covariate support $\mathcal{X}$, one can naturally specify $B(\mathbf{x}) \equiv 1$. In this case, the regularized moment conditions reduce directly to the standard first-moment constraint on the individual features, requiring only that $\max_{1 \le j \le p} \mathbb{E}[|X_j|] < \infty$.

\setcounter{assumption}{0}   
\renewcommand{\theassumption}{(A\arabic{assumption})}
\begin{assumption}\label{fix:condition:empiricalprocess}
Let $U = 1 -F_Y(Y)$. For any $j \in \set{1,\dots,p}$, the functions $u\mapsto \bE(X_j \mI\suit{U\le u})$ and $u\mapsto \bE(X_j^2 \mI\suit{U\le u})$ are differentiable on $(0,1)$ with continuous derivatives at $0$.
\end{assumption}

\begin{assumption}\label{fix:condition:moment:bound}
	For some $\delta, \ \varepsilon>0$, for any $j \in \set{1,\dots,p}$,
$$
  \bE \suit{ \suit{X_jg(\mbX^\top \bolbeta^0)}^{2} }<\infty, \quad \bE \suit{\sup_{\|\bolbeta -\bolbeta^0 \|_2 \le \varepsilon} \abs{X_j^2g^\prime\suit{ \mbX^\top \bolbeta} }^{(1+\delta)} }<\infty.
$$
\end{assumption}
Assumption \ref{fix:condition:empiricalprocess} is a technical regularity condition on the smoothness of the functions $\bE(X_j \mI\suit{U\le u})$ and $\bE(X_j^2 \mI\suit{U\le u})$. The same condition is imposed in \cite{aghbalou2024tail}. 
Assumption \ref{fix:condition:moment:bound} is a regularity condition on the moments of the covariates and the link function. The assumption is mild and is satisfied in many practical settings. For instance, if $\mbX$ has compact support and $g'$ is continuous, then the required moment bounds hold automatically for sufficiently small $\varepsilon$.

\begin{theorem}\label{fix:theorem:normality}
Suppose that Assumptions \ref{all:model:bias}, \ref{fix:condition:empiricalprocess} and \ref{fix:condition:moment:bound} 	hold. Moreover, assume that as $n\to\infty$, $k\to\infty, k/n\to 0$, $\sqrt{k}A(F_Y^{-1}(1-k/n))\to 0$ and $\sqrt{k}\lambda_n\to \lambda_0$, where $F_Y^{-1}$ denotes the quantile function of $Y$. Then, as $n\to\infty$, 
$$
\sqrt{k}\suit{\widehat{\bolbeta}_n - \bolbeta^0 } \stackrel{d}{\to} \argmin_{\Delta} \suit{ -\Delta^\top \boldsymbol{W}+\frac{1}{2}\Delta^\top \boldsymbol{H}(\bolbeta^0)\Delta+ b(\Delta) }. 
$$
Here, $b(\Delta) = \lambda_0\sum_{j=1}^p \suit{\Delta_j\text{sgn}(\beta_j^0)\mI\suit{\beta_j^0\ne 0}+|\Delta_j|\mI(\beta_j^0=0) }$, $\boldsymbol{W}$ is a mean-zero multivariate Gaussian random vector with covariance matrix 
 $$
 \boldsymbol{\Sigma}_{\boldsymbol{W}} = \bE \suit{ \mbX\mbX^\top g(\mbX^\top \bolbeta^0) } - \bE \suit{ \mbX g(\mbX^\top \bolbeta^0) } \bE \suit{ \mbX^\top g(\mbX^\top \bolbeta^0) },
 $$
 and 
 	$$
	\boldsymbol{H}(\bolbeta^0) = \bE \suit{ \mbX\mbX^\top g^\prime\suit{\mbX^\top\bolbeta^0}}. 
	  	$$
 \end{theorem}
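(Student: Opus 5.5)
This is a Knight–Fu-type argument. Reparametrize $\bolbeta=\bolbeta^0+\Delta/\sqrt{k}$ and study the convex process
$$
V_n(\Delta)=k\Bigl\{\mL_n(\bolbeta^0+\Delta/\sqrt{k})-\mL_n(\bolbeta^0)\Bigr\}+k\lambda_n\Bigl(\|\bolbeta^0+\Delta/\sqrt{k}\|_1-\|\bolbeta^0\|_1\Bigr).
$$
The minimizer of $V_n$ is $\sqrt{k}(\widehat{\bolbeta}_n-\bolbeta^0)$. Each $V_n$ is convex. So by the convexity lemma (Geyer; Hjort--Pollard), it suffices to prove finite-dimensional convergence of $V_n(\Delta)$ to
$$
V(\Delta)=-\Delta^\top\boldsymbol{W}+\tfrac12\Delta^\top\boldsymbol{H}(\bolbeta^0)\Delta+b(\Delta)
$$
and to check that $V$ has a unique minimizer. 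Uniqueness holds once $\boldsymbol{H}(\bolbeta^0)$ is positive definite; I take this as implicit, for instance from $g'>0$ and $\bE\mbX\mbX^\top$ nonsingular. The argmin then converges in distribution. The penalty part is deterministic: $\sqrt{k}\lambda_n\to\lambda_0$ gives $k\lambda_n\sum_j(|\beta_j^0+\Delta_j/\sqrt{k}|-|\beta_j^0|)\to b(\Delta)$.

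For the loss part, Taylor-expand $G$ to second order:
$$
k\{\mL_n(\bolbeta^0+\Delta/\sqrt{k})-\mL_n(\bolbeta^0)\}=-\Delta^\top\boldsymbol{W}_n+\tfrac12\Delta^\top\Bigl\{\frac{1}{n}\sum_i\mbX_i\mbX_i^\top g'(\mbX_i^\top\tilde{\bolbeta})\Bigr\}\Delta\cdot\frac{k}{k}.
$$
Here the intermediate point $\tilde{\bolbeta}$ lies between $\bolbeta^0$ and $\bolbeta^0+\Delta/\sqrt{k}$, and the score is
$$
\boldsymbol{W}_n=\frac{1}{\sqrt{k}}\sum_{i=1}^n\mbX_i\mI(Y_i>Y_{n-k,n})-\frac{\sqrt{k}}{n}\sum_{i=1}^n\mbX_i g(\mbX_i^\top\bolbeta^0).
$$
The Hessian term is handled by the uniform law of large numbers. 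By Assumption \ref{fix:condition:moment:bound}, the envelope $\sup_{\|\bolbeta-\bolbeta^0\|_2\le\varepsilon}|X_j^2g'|$ has moment of order $1+\delta$, and $g'$ is continuous. Together these give $n^{-1}\sum_i\mbX_i\mbX_i^\top g'(\mbX_i^\top\tilde{\bolbeta})\to\boldsymbol{H}(\bolbeta^0)$ in probability by dominated convergence.

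The main obstacle is the score $\boldsymbol{W}_n\stackrel{d}{\to}\boldsymbol{W}$, because of the random threshold. Write $U_i=1-F_Y(Y_i)$, which is uniform, so that $\mI(Y_i>Y_{n-k,n})=\mI(U_i<U_{k+1,n})$. Define the tail empirical process
$$
T_n(t)=\frac{1}{\sqrt{k}}\sum_{i=1}^n\Bigl\{\mbX_i\mI(U_i\le kt/n)-\bE\bigl[\mbX\mI(U\le kt/n)\bigr]\Bigr\},\qquad t\in[1/2,2].
$$
Assumption \ref{fix:condition:empiricalprocess} makes the mean and variance functions regular near $0$. I would then establish, following \cite{aghbalou2024tail}, weak convergence of $T_n$ jointly with
$$
\sqrt{k}\Bigl(\frac1n\sum_i\mbX_i g(\mbX_i^\top\bolbeta^0)-\bE\mbX g(\mbX^\top\bolbeta^0)\Bigr)\to 0,
$$
which holds since its variance is $O(k/n)$ by the second-moment bound in Assumption \ref{fix:condition:moment:bound}. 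The limit of $T_n$ is a Gaussian process whose covariance at $s,t$ is $\min(s,t)\,\bE[\mbX\mbX^\top g(\mbX^\top\bolbeta^0)]$. This uses $\frac nk\bE[\mbX\mbX^\top\mI(U\le kt/n)]\to t\,\bE[\mbX\mbX^\top c(\mbX)]$, which follows from Bayes' rule and Assumption \ref{all:model:bias}.

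Next, plug in $t_n=nU_{k+1,n}/k\to1$ and use asymptotic equicontinuity. This gives
$$
\boldsymbol{W}_n=T_n(1)+\sqrt{k}\,\frac nk\bE\bigl[\mbX\mI(U\le kt_n/n)\bigr]-\sqrt{k}\,\bE\bigl[\mbX g(\mbX^\top\bolbeta^0)\bigr]+o_P(1).
$$
The drift $\frac nk\bE[\mbX\mI(U\le ku/n)]$ equals $\int_0^u\bE[\mbX c_{y(v)}(\mbX)]\,dv$. By Assumption \ref{all:model:bias} and the moment bound on $X_jB(\mbX)$, it equals $u\,\bE[\mbX g(\mbX^\top\bolbeta^0)]+O(A(F_Y^{-1}(1-k/n)))$, and this bias term vanishes after multiplying by $\sqrt{k}$. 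Hence
$$
\boldsymbol{W}_n=T_n(1)+\bE\bigl[\mbX g(\mbX^\top\bolbeta^0)\bigr]\sqrt{k}(t_n-1)+o_P(1).
$$
The first coordinate of $T_n$, with $X_1\equiv1$, is exactly the uniform tail empirical process. It gives $\sqrt{k}(t_n-1)=-T_{n,1}(1)+o_P(1)$. Therefore
$$
\boldsymbol{W}_n=T_n(1)-\bE\bigl[\mbX g(\mbX^\top\bolbeta^0)\bigr]T_{n,1}(1)+o_P(1).
$$
Its covariance equals $\bE[\mbX\mbX^\top g]-2\boldsymbol{m}\boldsymbol{m}^\top+\boldsymbol{m}\boldsymbol{m}^\top$, where $\boldsymbol{m}=\bE[\mbX g(\mbX^\top\bolbeta^0)]$ and $\bE g=1$. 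This is $\boldsymbol{\Sigma}_{\boldsymbol{W}}$. Combining the three pieces yields $V_n\to V$ in finite-dimensional distributions, and the convexity lemma concludes the proof.
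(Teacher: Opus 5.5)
Your argument is correct in outline, and its architecture matches the paper's proof: the rescaled convex process $V_n$, a second-order Taylor expansion of $G_n$, a law of large numbers for the Hessian at the intermediate point, the deterministic limit $b(\Delta)$ of the penalty, and an argmin theorem for convex processes. The paper uses Theorem 5 of \cite{knight1999epi}; your Geyer/Hjort--Pollard convexity lemma serves equally well.

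The difference lies in the score. The paper splits it into three pieces: the fluctuation of $\nabla G_n$, the bias $\bE(\mbX g(\mbX^\top\bolbeta^0))-\bE(\mbX\mid U\le k/n)$ at the deterministic level $k/n$, and the centred random-threshold average. It handles the last piece by citing Theorem 3 of \cite{aghbalou2024tail} and reads $\boldsymbol{\Sigma}_{\boldsymbol{W}}$ off as the limiting conditional covariance. You take only the deterministic-threshold process $T_n$ from that literature and linearize $U_{k+1,n}$ yourself through the intercept coordinate. This is the device the paper itself uses for the debiased estimator in Lemmas \ref{lemma:normality}--\ref{lemma:order:stat}. Your route costs a few lines but shows where the rank-one correction $-\boldsymbol{m}\boldsymbol{m}^\top$ comes from: it is exactly the price of the random threshold, with $\bE g(\mbX^\top\bolbeta^0)=1$ giving $T_{n,1}(1)$ unit variance. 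The paper's route is shorter and pushes all regularity into the cited theorem.

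One step needs a small repair. Because $c_y$ is a ratio of survival functions, the exact identity is $\frac{n}{k}\bE[\mbX\mI(U\le ku/n)]=u\,\bE[\mbX c_{y(ku/n)}(\mbX)]$ with $y(w)=F_Y^{-1}(1-w)$, not the integral you wrote. Applying Assumption \ref{all:model:bias} at the random level $u=t_n$ would then require $\sqrt{k}A(F_Y^{-1}(1-kt_n/n))\to0$. The hypothesis $\sqrt{k}A(F_Y^{-1}(1-k/n))\to0$ does not give this for a general (non-monotone, non-regularly-varying) $A$.

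Use Assumption \ref{fix:condition:empiricalprocess} to move from $t_n$ to $1$ instead. Set $\psi_j(u)=\bE[X_j\mI(U\le u)]$. The mean value theorem gives $\frac nk\{\psi_j(kt_n/n)-\psi_j(k/n)\}=(t_n-1)\psi_j'(\xi_n)$ with $\xi_n\to0$ in probability. Moreover, $\lim_{u\downarrow0}\psi_j'(u)=\lim_{u\downarrow0}\psi_j(u)/u=\bE[X_jg(\mbX^\top\bolbeta^0)]$ by Assumption \ref{all:model:bias}. Then apply Assumption \ref{all:model:bias} only at the deterministic level $k/n$, as the paper does. Since $\sqrt{k}(t_n-1)=O_P(1)$, this yields your display $\boldsymbol{W}_n=T_n(1)+\boldsymbol{m}\sqrt{k}(t_n-1)+o_P(1)$, and the rest of your argument goes through.

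A related point: the second-moment limit behind the covariance of $T_n$ is not delivered by the moment conditions in Assumption \ref{all:model:bias}, which involve only $X_jB(\mbX)$. Like the paper, you are in effect taking it from Assumption \ref{fix:condition:empiricalprocess} and the conditions of \cite{aghbalou2024tail}.
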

 \begin{corollary}
 Under the assumptions of Theorem \ref{fix:theorem:normality} and $\lambda_0=0$, the proposed estimator is asymptotically normal; that is, as $n\to\infty$,
 $$
 \sqrt{k}\suit{\widehat{\bolbeta}_n - \bolbeta^0 } \stackrel{d}{\to} N\suit{\boldsymbol{0}, \set{\boldsymbol{H}(\bolbeta^0)}^{-1} \boldsymbol{\Sigma}_{\boldsymbol{W}} \set{\boldsymbol{H}(\bolbeta^0)}^{-1} }.
 $$
 \end{corollary}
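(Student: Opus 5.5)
The corollary follows from Theorem \ref{fix:theorem:normality} by specializing $\lambda_0=0$ and solving the limiting convex problem explicitly. When $\lambda_0=0$ the penalty term $b(\Delta)$ vanishes identically. The limit objective then becomes the quadratic
$$
Q(\Delta) = -\Delta^\top \boldsymbol{W} + \frac{1}{2}\Delta^\top \boldsymbol{H}(\bolbeta^0)\Delta .
$$
So the first step is to check that $\boldsymbol{H}(\bolbeta^0)$ is positive definite. This is what makes $Q$ strictly convex with a unique minimizer, and it also makes the argmin in Theorem \ref{fix:theorem:normality} well defined as a random vector.

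Positive definiteness will come from the link. Since $g$ is a positive link whose inverse is the derivative of the strictly convex generator $\phi$, we have $\phi' = g^{-1}$ strictly increasing. Hence $g$ is strictly increasing and $g'>0$ almost everywhere on the relevant range. For any $\mbv\neq \boldsymbol{0}$,
$$
\mbv^\top \boldsymbol{H}(\bolbeta^0)\mbv = \bE\suit{(\mbX^\top\mbv)^2 g'(\mbX^\top\bolbeta^0)} > 0,
$$
provided $\mbX$ is not concentrated on a hyperplane. This non-degeneracy is implicitly required for the limit in Theorem \ref{fix:theorem:normality} to be a well-defined argmin, so I will state it as part of the standing identifiability. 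Finiteness of $\boldsymbol{H}(\bolbeta^0)$ follows from Assumption \ref{fix:condition:moment:bound}.

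Next, the first-order condition $\nabla Q(\Delta) = -\boldsymbol{W} + \boldsymbol{H}(\bolbeta^0)\Delta = \boldsymbol{0}$ gives the unique minimizer
$$
\Delta^\ast = \set{\boldsymbol{H}(\bolbeta^0)}^{-1}\boldsymbol{W}.
$$
The map $\boldsymbol{W} \mapsto \set{\boldsymbol{H}(\bolbeta^0)}^{-1}\boldsymbol{W}$ is linear and deterministic. Since $\boldsymbol{W} \sim N(\boldsymbol{0}, \boldsymbol{\Sigma}_{\boldsymbol{W}})$, the vector $\Delta^\ast$ is mean-zero Gaussian with covariance $\set{\boldsymbol{H}(\bolbeta^0)}^{-1} \boldsymbol{\Sigma}_{\boldsymbol{W}} \set{\boldsymbol{H}(\bolbeta^0)}^{-1}$. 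Here I use that $\boldsymbol{H}(\bolbeta^0)$ is symmetric, so its inverse equals its own transpose.

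Combining this with the convergence in distribution from Theorem \ref{fix:theorem:normality} yields the stated normal limit for $\sqrt{k}\suit{\widehat{\bolbeta}_n - \bolbeta^0}$. The only point needing care is the invertibility of $\boldsymbol{H}(\bolbeta^0)$. Everything else is a direct computation, since the convex-argmin convergence itself is already contained in the theorem.
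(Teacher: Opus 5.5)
Your proposal is correct and is exactly the (implicit) argument behind the paper's corollary: with $\lambda_0=0$ the term $b(\Delta)$ vanishes, and the limit objective becomes $-\Delta^\top\boldsymbol{W}+\frac{1}{2}\Delta^\top\boldsymbol{H}(\bolbeta^0)\Delta$, whose unique minimizer $\{\boldsymbol{H}(\bolbeta^0)\}^{-1}\boldsymbol{W}$ is a linear image of $\boldsymbol{W}$ with the stated sandwich covariance. You are also right that nonsingularity of $\boldsymbol{H}(\bolbeta^0)$ is not derived from the stated assumptions but is presupposed by the paper's proof of Theorem~\ref{fix:theorem:normality} (``$V$ has a unique minimum''), so taking it as a standing assumption is appropriate; your side claim that strict monotonicity of $g$ gives $g'>0$ almost everywhere does not follow in general, and you do not need it once you take that route.
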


 \subsection{Theory in high dimensions}
 \setcounter{assumption}{0}   
\renewcommand{\theassumption}{(B\arabic{assumption})}

 \begin{assumption}\label{condition:regular:x} 
 \
 
 \begin{itemize}
 	\item[(i)] The covariates $\mbX$ are from a mean-zero distribution with covariance matrix $\bolSigma = \textnormal{Cov}(\mbX)$ such that $0<\kappa_{l}\le \lambda_{\min}(\boldsymbol{\Sigma})<\lambda_{\max}(\bolSigma)\le \kappa_s<\infty $.
 	\item[(ii)] The covariates $X_j, j=1,\dots, p$ are bounded, i.e., $ \max_{1\le j\le p} |X_j|\le \kappa_X$.  
	\item[(iii)] For any $\boldsymbol{v}\in \mathbb{R}^p$, the variable $\mbX^\top \boldsymbol{v}$ is sub-Gaussian with parameter at most $\kappa_u \|\boldsymbol{v}\|_2$.  
	\item[(iv)] There exists a constant $R>0$ such that $\|\bolbeta^0\|_2\le R$. 
 \end{itemize}
\end{assumption}
 
Assumption \ref{condition:regular:x} imposes standard regularity conditions on the design distribution. Such conditions are commonly assumed in high-dimensional regression, see for example \cite{buhlmann2011statistics, negahban2012unified}.   
 The boundedness of $\|\bolbeta^0\|_2$ guarantees that the linear functional $\mathbf{X}^\top \boldsymbol{\beta}^0$ remains uniformly well-behaved and does not introduce additional growth in high dimensions.

\begin{assumption}\label{high:condition:moment} 
\

\noindent (i)
The function $g^\prime$ is continuous, positive and increasing. 

\noindent (ii) There exist constants $R>0, \delta>0$ such that $\|\bolbeta^0\|_2\le R$, and 
	$$
 \bE g^2(\mbX^\top\bolbeta^0)\le R, \quad \bE \abs{g^\prime\suit{ \mbX^\top \bolbeta^0}}^{1+\delta } \le R.
$$

\noindent (iii) Moreover, for some constant $T>\kappa_uR$, $g(T\sqrt{\log n}) \sqrt{\log p/n}\to 0$ as $n\to\infty$.
\end{assumption}

The conditions in Assumption \ref{high:condition:moment} impose mild restrictions on the link function $g$ and the dimensionality $p$, which are easily satisfied by widely used link functions. For instance, when $g(x) = \exp(x)$, condition (i) holds inherently, and condition (ii) is satisfied under sub-Gaussian covariates. For condition (iii), $g(T\sqrt{\log n})/n^{\delta}\to 0$ holds for any $\delta>0$, implying that a sufficient condition is simply $\log p/n^{1-\delta}\to 0$ for some $\delta>0$. When $g(x)=\log(1+e^x)$ (Softplus), its derivative $g^\prime(x)$ is bounded and monotonic, rendering conditions (i) and (ii) straightforward under sub-Gaussian covariates, while condition (iii) accommodates a much larger $p$ due to the linear growth of $g$.

\begin{theorem}\label{theorem:high:main}
Suppose that Assumptions \ref{all:model:bias}, \ref{condition:regular:x},
  and \ref{high:condition:moment} hold. Assume that as $n\to\infty$, $k\to\infty$ and $k/n\to 0$, and that
$
\sqrt{k}\, A\!\left(F_Y^{-1}(1-k/n)\right) = o\!\left(\sqrt{\log p}\right).
$
Let $s := \|\bolbeta^0\|_0$, and suppose that $s\log p = o(k)$.
Let $
\lambda_n = c_0 \sqrt{\log p/k},
$
for some constant $c_0>0$. Then, with probability tending to 1, 
$$
\|\widehat{\bolbeta}_n - \bolbeta^0\|_2
\lesssim \sqrt{ \frac{s\log p}{k}}.
$$
\end{theorem}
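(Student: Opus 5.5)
We follow the standard route for $\ell_1$-penalized $M$-estimators \citep{negahban2012unified}. Two ingredients are needed: a bound on the score $\|\nabla\mL_n(\bolbeta^0)\|_\infty$, and a restricted strong convexity (RSC) property of $\mL_n$ near $\bolbeta^0$. Write $\Delta=\widehat{\bolbeta}_n-\bolbeta^0$ and $S=\textnormal{supp}(\bolbeta^0)$. Once we show $\|\nabla\mL_n(\bolbeta^0)\|_\infty\le\lambda_n/2$ with probability tending to one, the basic inequality $\mL_n(\bolbeta^0+\Delta)+\lambda_n\|\bolbeta^0+\Delta\|_1\le\mL_n(\bolbeta^0)+\lambda_n\|\bolbeta^0\|_1$ places $\Delta$ in the cone $\|\Delta_{S^c}\|_1\le3\|\Delta_S\|_1$. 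RSC on that cone with curvature $\kappa>0$ then gives $\kappa\|\Delta\|_2^2\lesssim\lambda_n\sqrt s\|\Delta\|_2$, hence $\|\Delta\|_2\lesssim\sqrt{s\log p/k}$.

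\textbf{Score bound.} The gradient is $\nabla\mL_n(\bolbeta^0)=n^{-1}\sum_i\mbX_i g(\mbX_i^\top\bolbeta^0)-k^{-1}\sum_i\mbX_i\mI(Y_i>Y_{n-k,n})$. We split it into four terms. The first is $n^{-1}\sum_i\mbX_i g(\mbX_i^\top\bolbeta^0)-\bE\mbX g(\mbX^\top\bolbeta^0)$. We truncate at $|\mbX^\top\bolbeta^0|\le T\sqrt{\log n}$, which by Assumption \ref{condition:regular:x}(iii) with $T>\kappa_uR$ excludes all observations with probability tending to one. Bernstein's inequality with bounded $X_j$ then gives an $\ell_\infty$ bound of order $\sqrt{\log p/n}+g(T\sqrt{\log n})\log p/n$, which is $o(\sqrt{\log p/k})$ by Assumption \ref{high:condition:moment}(iii). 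The second term is a bias term: by Bayes' rule and Assumption \ref{all:model:bias}, $\bE[\mbX g(\mbX^\top\bolbeta^0)]-\bE[\mbX\mid Y>y_n]$ is $O(A(y_n))$ in sup norm, where $y_n=F_Y^{-1}(1-k/n)$. This is $o(\sqrt{\log p/k})$ by the assumed rate on $A$. The third term is the deterministic-threshold empirical part $k^{-1}\sum_i\{\mbX_i\mI(Y_i>y_n)-\bE\,\mbX\mI(Y>y_n)\}$. Each summand has variance at most $\kappa_X^2k/n$, so Bernstein's inequality plus a union bound over $p$ coordinates gives order $\sqrt{\log p/k}$.

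The fourth term replaces the random threshold $Y_{n-k,n}$ by $y_n$, and \textbf{this is the main obstacle}, since it makes the selected observations dependent. The plan is to work with $U_i=1-F_Y(Y_i)$ and uniform order statistics, and to note that $|U_{k+1,n}-k/n|=O_P(\sqrt k/n)$. For each $j$ we then need to control $\sup_{|u-k/n|\le C\sqrt k/n}k^{-1}|\sum_i X_{ij}\{\mI(U_i\le u)-\mI(U_i\le k/n)\}|$. The centered part of this is handled by a maximal Bernstein bound over a grid of $u$ values, using monotonicity in $u$ after splitting $X_j=X_j^+-X_j^-$. This gives order $\sqrt{\log p}\,k^{-3/4}+\log p/k=o(\sqrt{\log p/k})$. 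The mean part is $k^{-1}n\,\bE[X_j\mI(u<U\le k/n)]\approx(n/k)(u-k/n)\bE[X_j\mid U\approx0]$, which is of order $k^{-1/2}$ and therefore not negligible. This is handled by exact cancellation: $k^{-1}\sum_i\mI(U_i\le U_{k+1,n})$ equals one exactly, and conditioning on the top-order region makes the drift proportional to $\bE[X_j\mid Y>y_n]$. The threshold error thus contributes $(\widehat{k}/k-1)\cdot O(1)$ with $\widehat{k}=k$ identically, plus a remainder bounded by $\sqrt{\log p/k}$. If this cancellation fails, the fallback is that an $O_P(k^{-1/2})$ term is still $\lesssim\sqrt{\log p/k}$, and the union over $j$ is harmless because its source is the single scalar $U_{k+1,n}$. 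Choosing $c_0$ large enough then completes the score bound.

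\textbf{RSC.} The response term is linear in $\bolbeta$, so the Bregman remainder of $\mL_n$ is $n^{-1}\sum_i\{G(\mbX_i^\top(\bolbeta^0+\Delta))-G(\mbX_i^\top\bolbeta^0)-g(\mbX_i^\top\bolbeta^0)\mbX_i^\top\Delta\}$, which involves only the $n$ i.i.d. covariates. Because $g'$ is positive and increasing, we lower bound $g'$ on the segment by $g'(\min\{\mbX_i^\top\bolbeta^0,\mbX_i^\top\bolbeta^0+\mbX_i^\top\Delta\})$. We then truncate to the event $|\mbX_i^\top\bolbeta^0|\le\tau$ and $|\mbX_i^\top\Delta|\le\tau\|\Delta\|_2$, on which $g'\ge g'(-2\tau)>0$. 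The standard Negahban et al. argument, using a sub-Gaussian design, $\lambda_{\min}(\bolSigma)\ge\kappa_l$ and a peeling device, gives the bound $\ge\kappa\|\Delta\|_2^2-C\sqrt{\log p/n}\,\|\Delta\|_1\|\Delta\|_2$ for $\|\Delta\|_2\le1$. On the cone the tolerance term is negligible because $s\log p=o(n)$. Convexity extends the argument from $\|\Delta\|_2\le1$ to the global minimizer, since $\sqrt{s\log p/k}\to0$. The curvature comes from $n$ observations while the noise comes from $k$, which produces the rate $\sqrt{s\log p/k}$.
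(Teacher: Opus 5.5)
Your proposal is correct and follows essentially the paper's route: Theorem 1 of \cite{negahban2012unified}, restricted strong convexity from the covariate-only Bregman remainder of $G_n$, and the same four-term split of $\nabla\mL_n(\bolbeta^0)$. That split consists of truncation plus Bernstein, the bias bound from Assumption \ref{all:model:bias}, Bernstein with a union bound at the deterministic threshold $y_n$, and the random-threshold correction. Drop the claimed ``exact cancellation'' for the random threshold, because it does not occur: for $j=1$ the term equals $1-k^{-1}\sum_i\mI(Y_i\ge y_n)$, a genuine $O_P(k^{-1/2})$ fluctuation (cf.\ the $W_{2,n}$ term in Lemma \ref{lemma:normality}). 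Your fallback suffices, and the paper implements it more simply by noting that the differences $\mI(Y_i>Y_{n-k,n})-\mI(Y_i\ge y_n)$ all have the same sign, so the term is at most $\kappa_X\abs{k^{-1}\sum_i\mI(Y_i\ge y_n)-1}\lesssim\sqrt{\log p/k}$ by Bernstein, with no grid or order-statistic localization needed.
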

Theorem \ref{theorem:high:main} establishes the convergence rate for the proposed $\ell_1$-regularized estimator. Compared with the existing literature on generalized linear models \citep{negahban2012unified, cai2023statistical}, the convergence rate is of order $\sqrt{ s\log p/k}$ rather than $\sqrt{ s\log p/n}$. This difference arises because the effective sample size in our setting is $k$ rather than $n$. More specifically, although the observed dataset contains $n$ samples, the information available for estimating the parameter is determined by the subsample of size $k$ that contributes to the estimation procedure.

 \section{Debiasing and Inference}\label{sec:inference}
The regularized estimator $\widehat{\bolbeta}_n$
exhibits a non-negligible bias relative to its variance, precluding direct asymptotic inference. In high-dimensional settings, this issue is commonly addressed by constructing a debiased estimator to remove the bias \citep{van2014asymptotically, zhang2014confidence, javanmard2014confidence, ning2017general, cai2023statistical}. Motivated by this approach, we propose a debiased estimation and inference procedure in this section. Specifically, we employ a sample-splitting strategy in which the bias correction is carried out on a complementary data split that is independent of the subsample used to obtain the initial estimator \citep{cai2023statistical}.

We split the sample so that the initial estimation and bias correction are conducted on different samples. Without loss of generality, we assume there are $2n$ samples $\mathcal{D} =\set{(\mbX_i, Y_i)}_{i=1}^{2n}$, divided into two disjoint subsets $\mathcal{D}_1 =\set{(\mbX_i, Y_i)}_{i=1}^{n}$ and $\mathcal{D}_2 =\set{(\mbX_i, Y_i)}_{i=n+1}^{2n}$. The initial estimator $\widehat{\bolbeta}_n$ is computed from $\mathcal{D}_2$, and the score, projection direction, and bias correction are computed exclusively from the independent sample $\mathcal{D}_1$.  

For $j=2,\dots, p$, 
 we consider the following form of the bias-correction estimator 
 $$
 \widetilde{\beta}_{j} = \widehat{\beta}_{j} -\mbu_j^\top \mL_n^\prime (\widehat{\bolbeta}_n ), 
 $$
 where $\mbu_j \in \mathbb{R}^{p}$ is a weight vector. Define 
 $$
 \boldsymbol{\Sigma}_n(\bolbeta) = \frac{1}{n}\sum_{i=1}^{n} \mbX_i\mbX_i^\top g (\mbX_i^\top \bolbeta) - \suit{ \frac{1}{n}\sum_{i=1}^{n} \mbX_i g (\mbX_i^\top \bolbeta) }\suit{ \frac{1}{n}\sum_{i=1}^{n} \mbX_i^\top g (\mbX_i^\top \bolbeta) } . 
 $$

We construct the weight vector as follows:
\begin{align}
\widehat{\mbu}_j = &\argmin_{\mbu\in\mathbb{R}^p} \mbu^\top \boldsymbol{\Sigma}_n(\widehat{\bolbeta}_n)  \mbu , \\
	s.t.\quad  & \left \|\suit{ \frac{1}{n} \sum_{i=1}^n \mbX_i\mbX_i^\top g^\prime(\mbX_i^\top \widehat{\bolbeta}_n) } \mbu - e_j  \right\|_{\infty} \le \lambda_{1,n}, \label{bias:basis} \\
	&   \max_{1\le i\le n} |\mbX_i^\top \mbu| \le \lambda_{2,n}, \label{bias:max}   
\end{align}
where $\set{e_j}_{j=1}^p$ denotes the canonical basis of the Euclidean space $\mathbb{R}^p$, and $\lambda_{1,n}, \lambda_{2,n}$ satisfy the
conditions stated in Assumption \ref{condition:choice:weight} below.

The optimization maintains the classical debiasing principle, i.e., minimizing the variance of the projection under a bias-vanishing constraint. 
However, it fundamentally differs from standard GLMs in its variance structure due to our tail-inference framework. To see this, note that on the inference sample $\mathcal{D}_1$, the score function and the Hessian matrix evaluated at $\widehat{\bolbeta}_n$ are given by
$$
\begin{aligned}
	\mL_n^\prime(\widehat{\bolbeta}_n ) = & \frac{1}{n}\sum_{i=1}^n \mbX_i g(\mbX_i^\top \widehat{\bolbeta}_n) - \frac{1}{k}\sum_{i=1}^n \mbX_i \mI(Y_i>Y_{n-k,n}), \quad \text{and}\\
	\mL_n^{\prime \prime} (\widehat{\bolbeta}_n ) = & \frac{1}{n}\sum_{i=1}^n \mbX_i\mbX_i^\top  g^\prime(\mbX_i^\top \widehat{\bolbeta}_n),
\end{aligned}
$$ 
respectively. In classical GLMs, the information matrix identity equates the asymptotic covariance of the score with the Hessian matrix, allowing the objective function and the constraint to share the same matrix profile. In contrast, this identity breaks down in our framework due to the presence of the extreme-value indicator $\mI(Y_i>Y_{n-k,n})$. Consequently, while the constraint must still rely on the Hessian $\mL_n^{\prime \prime} (\widehat{\bolbeta}_n )$ to eliminate the regularized estimation bias, the objective function must minimize the true covariance $\boldsymbol{\Sigma}_n(\widehat{\boldsymbol{\beta}}_n)$. 

 \begin{remark}\label{remark:exp:rankone}
Write
$$
\boldsymbol{M}_n(\bolbeta) = \frac{1}{n}\sum_{i=1}^n \mbX_i\mbX_i^\top g\suit{\mbX_i^\top \bolbeta}, \qquad
\boldsymbol{b}_n(\bolbeta) = \frac{1}{n}\sum_{i=1}^n \mbX_i\, g\suit{\mbX_i^\top \bolbeta},
$$
so that $\boldsymbol{\Sigma}_n = \boldsymbol{M}_n - \boldsymbol{b}_n \boldsymbol{b}_n^\top$. Since $X_{i1}\equiv 1$, we have $\boldsymbol{M}_n e_1 = \boldsymbol{b}_n$ for every $\bolbeta$ and every link, and therefore
$$
\boldsymbol{\Sigma}_n = \boldsymbol{M}_n - \boldsymbol{M}_n e_1 e_1^\top \boldsymbol{M}_n.
$$
If in addition $g=g^\prime$, that is $g(x)=e^x$, then $\boldsymbol{M}_n = \mL_n^{\prime\prime}$ and the sandwich collapses to
$$
\suit{\mL_n^{\prime\prime}}^{-1}\boldsymbol{\Sigma}_n\suit{\mL_n^{\prime\prime}}^{-1} = \suit{\mL_n^{\prime\prime}}^{-1} - e_1 e_1^\top.
$$ 
The centering correction reduces the variance of the intercept by exactly one and leaves the variance of every $\beta_j$ with $j\ge 2$ unchanged. In other words, under the canonical exponential link a statistician who ignored the failure of the information identity and studentized with the Hessian would still obtain asymptotically valid intervals for the covariate coefficients. However, for any non-canonical link function, e.g. $g(x)=\log(1+e^x)$, one has $\boldsymbol{M}_n \ne \mL_n^{\prime\prime}$. The two studentizers differ at first order. Section \ref{sec:simu:falsify} reports what the difference costs.
\end{remark}

 Finally, by substituting the optimized weight vector $\widehat{\mathbf{u}}_j$, the finalized debiased estimator for each coordinate $j=2,\dots,p$ is explicitly given by 
$$
\widetilde{\beta}_j = \widehat{\beta}_j - \widehat{\mbu}_j^\top \set{ \frac{1}{n}\sum_{i=1}^n \mbX_i g(\mbX_i^\top \widehat{\bolbeta}_n) - \frac{1}{k}\sum_{i=1}^n \mbX_i \mI(Y_i>Y_{n-k,n})}. 
$$

To establish the asymptotic normality of $\widetilde{\beta}_j$, we impose the following conditions.

 \setcounter{assumption}{0}   
\renewcommand{\theassumption}{(C\arabic{assumption})}
 \begin{assumption}\label{condition:choice:weight} 
As $n\to\infty$, $\lambda_{1,n} \asymp \sqrt{ \log p/k }$ and $\lambda_{2,n} \asymp \sqrt{\log n}$. 
 \end{assumption}

\begin{assumption}\label{high:condition:moment:second}
There exist constants $R>0, \delta>0, \varepsilon>0$ such that
	$$
 \bE \sup_{|t| \le \varepsilon} \abs{g^{\prime\prime}\suit{ \mbX^\top \bolbeta^0+t}}^{1+\delta } \le R, \quad \bE ( g(\mbX^\top \bolbeta^0) )^{1+\delta}\le R.
$$
Moreover, there exists a constant $C$ such that $g(x)\ge C g^\prime(x)$. 
\end{assumption}

\begin{assumption}\label{condition:bias:gradient}
 The function $u\mapsto \Pr(U\le u|\mbX=\mbx)$ is differentiable on $(0,1)$. Denote the derivative by $\phi(u|\mbx)$. Assume that there exist an increasing function $A_2: \mathbb{R}\mapsto \mathbb{R}$ and a function $B_2:\mathbb{R}^p\mapsto \mathbb{R}_{+}$ such that 
$$
\lim_{h\downarrow 0}\frac{1}{A_2(h)} \sup_{\mbx\in \mathcal{X} }\frac{1}{B_2(\mbx)} |\phi(h|\mbx) - \phi(0|\mbx) | =0.  
$$
Moreover, $ A_2(k/n) \log n\to 0 $ as $n\to\infty$ and $\bE |B_2(\mbX)|<\infty$.  
\end{assumption}
 
In addition to the moment conditions around $\boldsymbol{\beta}^0$ in Assumption \ref{high:condition:moment}, we impose a new set of moment conditions around $\boldsymbol{\beta}^0$ in Assumption \ref{high:condition:moment:second}. The second statement of Assumption \ref{high:condition:moment:second} is a regularity condition on the link function, and it is satisfied by links such as $g(x) = \exp(x)$ and $g(x) = \log (1+\exp(x))$.
 Assumption \ref{condition:bias:gradient} regulates the smoothness of the conditional density function near the tail boundary.

\begin{theorem}\label{theorem:normality}
Suppose that the assumptions in Theorem \ref{theorem:high:main} hold. Moreover suppose that Assumption \ref{condition:choice:weight}, \ref{high:condition:moment:second} and \ref{condition:bias:gradient}  hold and $\|\bolbeta^0\|_0\le s$. If $\sqrt{k}/\suit{\log  p \log n  } \to \infty$, 
$\sqrt{k}A(F_Y^{-1}(1-k/n))\sqrt{\log n} =o(1)$,  and  $s=o\!\left\{\sqrt{k}/(\log p\sqrt{\log n})\right\}$, then, as $n\to\infty$,
$$
\sqrt{k}\widehat{v}_j^{-1}\suit{\widetilde{\beta}_j - \beta_j^0 }\stackrel{d}{\to} N(0,1),
$$
where $	\widehat{v}_j^2 = \widehat{\mbu}_j^\top \boldsymbol{\Sigma}_n(\widehat{\bolbeta}_n) \widehat{\mbu}_j $. 
\end{theorem}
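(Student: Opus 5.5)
We decompose $\widetilde{\beta}_j-\beta_j^0$ by a first-order expansion of the score on the inference sample $\mathcal{D}_1$. The initial estimator $\widehat{\bolbeta}_n$ is computed from $\mathcal{D}_2$, so we condition on $\mathcal{D}_2$ throughout. Theorem \ref{theorem:high:main} then lets us treat $\widehat{\bolbeta}_n$ as a fixed vector with $\|\widehat{\bolbeta}_n-\bolbeta^0\|_2\lesssim\sqrt{s\log p/k}$ and, by the usual cone argument, $\|\widehat{\bolbeta}_n-\bolbeta^0\|_1\lesssim s\sqrt{\log p/k}$. A mean-value expansion gives
\begin{align*}
\widetilde{\beta}_j-\beta_j^0 = {}& -\widehat{\mbu}_j^\top \mL_n^\prime(\bolbeta^0) + \suit{e_j-\mL_n^{\prime\prime}(\widehat{\bolbeta}_n)\widehat{\mbu}_j}^\top(\widehat{\bolbeta}_n-\bolbeta^0) \\
& + \widehat{\mbu}_j^\top\suit{\mL_n^{\prime}(\widehat{\bolbeta}_n)-\mL_n^{\prime}(\bolbeta^0)-\mL_n^{\prime\prime}(\widehat{\bolbeta}_n)(\widehat{\bolbeta}_n-\bolbeta^0)}.
\end{align*}
The second term is bounded by $\lambda_{1,n}\|\widehat{\bolbeta}_n-\bolbeta^0\|_1\lesssim s\log p/k$ via constraint \eqref{bias:basis}. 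This is $o(1/\sqrt{k})$ because $s=o(\sqrt{k}/\log p)$.

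The third term is a Taylor remainder. It equals $n^{-1}\sum_i \mbX_i^\top\widehat{\mbu}_j\,\{g(\mbX_i^\top\widehat{\bolbeta}_n)-g(\mbX_i^\top\bolbeta^0)-g^\prime(\mbX_i^\top\widehat{\bolbeta}_n)\mbX_i^\top(\widehat{\bolbeta}_n-\bolbeta^0)\}$. We bound $|\mbX_i^\top\widehat{\mbu}_j|\le\lambda_{2,n}\asymp\sqrt{\log n}$ by constraint \eqref{bias:max}, and the curly bracket by $\sup|g^{\prime\prime}|\,(\mbX_i^\top(\widehat{\bolbeta}_n-\bolbeta^0))^2$. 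Since $\max_i|\mbX_i^\top(\widehat{\bolbeta}_n-\bolbeta^0)|\le\kappa_X\|\widehat{\bolbeta}_n-\bolbeta^0\|_1\to0$, Assumption \ref{high:condition:moment:second} controls the $g^{\prime\prime}$ envelope. The empirical second moment $n^{-1}\sum_i(\mbX_i^\top\Delta)^2$ is of order $\|\Delta\|_2^2$ by sub-Gaussianity (conditionally on $\mathcal{D}_2$). The term is therefore $O_p(\sqrt{\log n}\,s\log p/k)=o_p(1/\sqrt{k})$, which is exactly where $s=o\{\sqrt{k}/(\log p\sqrt{\log n})\}$ is used.

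The main work is the leading term $-\widehat{\mbu}_j^\top\mL_n^\prime(\bolbeta^0)$. We write
\[
\mL_n^\prime(\bolbeta^0)=\Big[n^{-1}\textstyle\sum_i\mbX_i g(\mbX_i^\top\bolbeta^0)-k^{-1}\sum_i\mbX_i\mI(U_i\le k/n)\Big]+\Big[k^{-1}\textstyle\sum_i\mbX_i\{\mI(U_i\le k/n)-\mI(Y_i>Y_{n-k,n})\}\Big].
\]
The first bracket splits into a bias part and a centered i.i.d.\ part. The bias part is $\bE[\mbX\{g(\mbX^\top\bolbeta^0)-c_{y_n}(\mbX)\}]$ with $y_n=F_Y^{-1}(1-k/n)$. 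Paired with $\widehat{\mbu}_j$, Assumption \ref{all:model:bias} and $|\mbX^\top\widehat{\mbu}_j|\le\sqrt{\log n}$ bound it by $\sqrt{\log n}\,A(y_n)=o(1/\sqrt{k})$.

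The centered i.i.d.\ part has per-summand variance $\approx (n/k^2)\cdot(k/n)\,\bE[(\mbX^\top\mbu)^2 g(\mbX^\top\bolbeta^0)]$, with the $n^{-1}g$ terms lower order. This corresponds to $k^{-1}$ times the variance of $\mbX^\top\mbu$ under the tail measure, i.e.\ $k^{-1}\mbu^\top\bolSigma_{\boldsymbol W}\mbu$ with $\bolSigma_{\boldsymbol W}$ from Theorem \ref{fix:theorem:normality}. The centering by $\bE[\mbX g]$ arises from the random-threshold correction handled next. The random-threshold term is handled as in the tail empirical process: $\mI(Y_i>Y_{n-k,n})=\mI(U_i\le U_{k+1,n})$ with $U_{k+1,n}=(k/n)(1+O_p(k^{-1/2}))$. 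Linearizing $u\mapsto\bE[\mbX^\top\mbu\,\mI(U\le u)]$ at $k/n$, with derivative $\bE[\mbX^\top\mbu\,\phi(k/n|\mbX)]\approx\bE[\mbX^\top\mbu\,g(\mbX^\top\bolbeta^0)]$, yields a contribution $-\bE[\mbX^\top\mbu\,g]\cdot k^{-1}\sum_i\{\mI(U_i\le k/n)-k/n\}$ plus a remainder. This produces the rank-one centering in $\boldsymbol{\Sigma}_n$. The remainder, an oscillation of the tail process over a $k^{-1/2}$ neighbourhood, is controlled via Assumption \ref{condition:bias:gradient} together with the bound $\sqrt{\log n}$ on $\mbX^\top\mbu$. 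This step is the main obstacle: it must hold uniformly enough that $\widehat{\mbu}_j$ can be plugged in. Since $\widehat{\mbu}_j$ depends on $\mathcal{D}_1$ covariates only through $\{\mbX_i\}$, we plan to condition on the covariates of $\mathcal{D}_1$ as well, so that $\widehat{\mbu}_j$ is fixed and only the $U_i$'s given $\mbX_i$ remain random. Bernstein/Lindeberg bounds then apply, with summands bounded by $\sqrt{\log n}/k$. The Lindeberg condition reduces to $\sqrt{\log n}/\sqrt{k}\to0$, and the stronger requirement $\sqrt{k}/(\log p\log n)\to\infty$ also controls the oscillation remainder.

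Finally, we show $\widehat v_j^2$ is bounded away from zero, using $\lambda_{\min}$ bounds and $g\ge Cg^\prime$ to compare $\boldsymbol{M}_n$ with the Hessian and the constraint to force $\|\widehat{\mbu}_j\|_2\gtrsim1$. We also show that $\widehat{v}_j^2$ matches the conditional variance $\widehat{\mbu}_j^\top\bolSigma_{\boldsymbol W}\widehat{\mbu}_j$ up to $o_p(1)$. For the latter, $\boldsymbol{\Sigma}_n(\widehat{\bolbeta}_n)-\boldsymbol{\Sigma}_n(\bolbeta^0)$ is controlled by the mean value theorem on $g$ and the $\lambda_{2,n}$ bound, and $\boldsymbol{\Sigma}_n(\bolbeta^0)$ versus $\bolSigma_{\boldsymbol W}$ as quadratic forms in $\widehat{\mbu}_j$ by concentration of bounded-by-$\log n$ summands. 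Slutsky's lemma then yields the standard normal limit conditionally, hence unconditionally.
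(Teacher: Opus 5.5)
Most of your plan follows the paper: the three-term decomposition, the bound $\lambda_{1,n}\|\widehat{\bolbeta}_n-\bolbeta^0\|_1\lesssim s\log p/k$, the Taylor remainder of order $\sqrt{\log n}\,s\log p/k$, and the leading term handled conditionally on $\mathcal{D}_2$ and the $\mathcal{D}_1$ covariates. For the remainder you will need H\"older with the $(1+\delta)$-moment envelope of $g''$ to combine it with the empirical moments of $\mbX_i^\top(\widehat{\bolbeta}_n-\bolbeta^0)$. For the leading term, your oscillation bound plus linearization through $\phi(\cdot\mid\mbx)$ is a hands-on substitute for the paper's tightness-plus-Vervaat argument.

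The genuine gap is the claim that $\widehat v_j$ is bounded away from zero. You need this to divide the $o_P(k^{-1/2})$ remainders by $\widehat v_j$, to turn $\widehat v_j^2-v_j^2=o_P(1)$ into a ratio tending to one, and for Lindeberg. Your route ($\lambda_{\min}$ bounds, $g\ge Cg'$, and $\|\widehat{\mbu}_j\|_2\gtrsim1$) can at best give $\widehat{\mbu}_j^\top\boldsymbol{M}_n\widehat{\mbu}_j\gtrsim1$. But $\widehat v_j^2=\widehat{\mbu}_j^\top\boldsymbol{M}_n\widehat{\mbu}_j-(\widehat{\mbu}_j^\top\boldsymbol{b}_n)^2$, and the rank-one subtraction creates a degenerate direction. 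Since $X_{i1}\equiv1$, $\boldsymbol{\Sigma}_n(\widehat{\bolbeta}_n)e_1=(1-\bar g_n)\boldsymbol{b}_n$ with $\bar g_n=n^{-1}\sum_i g(\mbX_i^\top\widehat{\bolbeta}_n)\to1$. So $e_1^\top\boldsymbol{\Sigma}_n e_1=\bar g_n(1-\bar g_n)\to0$, and $\bolSigma_{\boldsymbol W}e_1=0$ exactly by \eqref{eq:condition:identify}. Hence no eigenvalue or norm bound on $\widehat{\mbu}_j$ can deliver the claim, and sample eigenvalues vanish anyway when $p>n$, which the theorem allows.

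The missing idea is that the constraint keeps $\widehat{\mbu}_j$ away from the $e_1$-direction, through its first coordinate together with $j\ne1$. The paper writes $\widehat v_j^2$ as $\bar g_n$ times the $\hat g$-weighted variance of $Z_i=\mbX_i^\top\widehat{\mbu}_j$, plus a negligible term, and decomposes $Z_i=m_n+r_i$ around the weighted mean $m_n$. Suppose $n^{-1}\sum_i r_i^2 g(\mbX_i^\top\widehat{\bolbeta}_n)$ vanished. Then $g\ge Cg'$ and Cauchy--Schwarz give $\mL_n^{\prime\prime}(\widehat{\bolbeta}_n)\widehat{\mbu}_j=m_n n^{-1}\sum_i\mbX_i g'(\mbX_i^\top\widehat{\bolbeta}_n)+o_P(1)$ in sup-norm. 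Coordinate $1$ of \eqref{bias:basis} then forces $m_n\to0$, contradicting coordinate $j$ being close to one.

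Separately, drop the comparison with $\bolSigma_{\boldsymbol W}$. Conditionally on the $\mathcal{D}_1$ covariates, the variance of the leading term is $\widehat{\mbu}_j^\top\boldsymbol{\Sigma}_n(\bolbeta^0)\widehat{\mbu}_j+o_P(1)$, and comparing $\boldsymbol{\Sigma}_n(\widehat{\bolbeta}_n)$ with $\boldsymbol{\Sigma}_n(\bolbeta^0)$ suffices. The claim $\widehat{\mbu}_j^\top\{\boldsymbol{\Sigma}_n(\bolbeta^0)-\bolSigma_{\boldsymbol W}\}\widehat{\mbu}_j\to0$ is not available by concentration for a data-dependent $\widehat{\mbu}_j$. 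For $p>n$, adding a null-space vector of the design to $\widehat{\mbu}_j$ changes neither feasibility nor the objective, but makes the $\bolSigma_{\boldsymbol W}$-form arbitrary. For the same reason, your bias term should be the empirical average $n^{-1}\sum_i Z_i\{g(\mbX_i^\top\bolbeta^0)-(n/k)\Pr(U_i\le k/n\mid\mbX_i)\}$, not a population expectation. Constraint \eqref{bias:max} controls $\mbX^\top\widehat{\mbu}_j$ only at the sample points.
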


Based on Theorem \ref{theorem:normality}, we construct the $1-\alpha$ confidence interval of $\beta_j^0$ as 
$$
[\widetilde{\beta}_j - k^{-1/2}z_{1-\alpha/2} \sqrt{\widehat{\mbu}_j^\top \boldsymbol{\Sigma}_n( \widehat{\bolbeta}_n ) \widehat{\mbu}_j}, \widetilde{\beta}_j + k^{-1/2}z_{1-\alpha/2} \sqrt{\widehat{\mbu}_j^\top \boldsymbol{\Sigma}_n( \widehat{\bolbeta}_n ) \widehat{\mbu}_j}],  
$$
where $z_{1-\alpha/2}$ is the $1-\alpha/2$ quantile of the standard normal distribution.

\section{Simulation}\label{sec:simulation}

\subsection{Setup}\label{sec:simi:setup}

In this section, we conduct numerical simulations to evaluate the finite-sample performance of the proposed estimation and inference procedures. The covariates $(X_2, \dots, X_p)^\top$ are generated from a multivariate normal distribution $N(\boldsymbol{0}, \boldsymbol{\Sigma})$, where the covariance matrix features an autoregressive structure with $\boldsymbol{\Sigma}_{ij} = 0.6^{|i-j|}$ for $i, j = 1, \dots, p-1$. The response variable $Y$ is then generated according to the following mechanism:
$$Y = g(\boldsymbol{X}^\top \boldsymbol{\beta}^0 )Y_0,$$
where the true parameter vector is specified as $$\boldsymbol{\beta}^0 = (\beta_1^0, \beta_2^0, \dots, \beta_p^0 )^\top = (\beta_1^0, 1, 0.5, -1, -0.5, 0, \dots, 0)^\top.$$ 
Here, $\beta_1^0$ is chosen to make $\bE g(\boldsymbol{X}^\top \boldsymbol{\beta}^0 ) = 1$. 
  The term $Y_0$ denotes a heavy-tailed random variable that is independent of $\boldsymbol{X}$.

 For a comprehensive evaluation, we consider three distributions for $Y_0$: the Pareto(1), Fr\'echet(1), and absolute Student-$t$(1) distributions, crossed with two specifications for the link function $g(\cdot)$: the exponential function $g(x)=\exp(x)$, and the Softplus function $g(x) = \log(1+\exp(x))$. All simulation results are based on 5000 Monte Carlo repetitions. These three distributions have tail index one; consequently, multiplication by $g(\mbX^\top\bolbeta^0)$ yields the tail ratio in \eqref{eq:model}. Section \ref{sec:addition:simulation} of the Supplementary Material adds a finite-endpoint uniform design to demonstrate that the method is not restricted to heavy tails.

\subsection{Performance of the $\ell_1$-penalized estimator}

In this subsection, we evaluate the finite-sample performance of the $\ell_1$-penalized estimator $\widehat{\bolbeta}_n$. Under the data-generating process described in Section \ref{sec:simi:setup}, we generate datasets with sample size $n = 2500$ and dimensions $p \in \set{50, 100}$. For the estimation procedure, we set the intermediate sequence to $k = 50$ and consider the tuning parameter $\lambda_n = c_0\sqrt{\log p/k}$ with $c_0 \in \{0.5, 1.0\}$.

To assess the performance, we employ two metrics computed over the covariate coefficients $j \in \{2, \dots, p\}$. The first is the variable selection accuracy, which measures the proportion of correctly identified active and inactive variables:
$$\text{Accuracy} = \frac{1}{p-1} \sum_{j=2}^p \left[ \mI(\widehat{\beta}_{nj} \neq 0) \mI(\beta_j^0 \neq 0) + \mI(\widehat{\beta}_{nj} = 0) \mI(\beta_j^0 = 0) \right].
$$
  The second set of metrics consists of the empirical bias and standard deviation. Specifically, we report these estimation properties for two representative coefficients: $\beta_2^0 = 1$ (an active signal) and $\beta_{10}^0 = 0$ (a noise variable), to reflect the performance on relevant and irrelevant covariates, respectively.

As an idealized baseline for comparison, we also implement the \textit{Oracle estimator} $\widehat{\boldsymbol{\beta}}_n^{\text{or}}$, which assumes the true active support set $S_0 = \{1,2,3,4,5\}$ is known beforehand. This estimator is defined by solving the unpenalized loss function on $S_0$:
$$\widehat{\boldsymbol{\beta}}_{n, S_0}^{\text{or}} = \argmin_{\boldsymbol{\beta}_{S_0}} \left\{ \frac{1}{n}\sum_{i=1}^n G\left(\boldsymbol{X}_{i, S_0}^\top \boldsymbol{\beta}_{S_0}\right) - \frac{1}{k}\sum_{i=1}^n \boldsymbol{X}_{i, S_0}^\top \boldsymbol{\beta}_{S_0} I(Y_i > Y_{n-k,n}) \right\},$$
and setting $\widehat{\beta}_{nj}^{\text{or}} = 0$ for $j \notin S_0$. Because it bypasses high-dimensional noise, its estimation error is invariant to both $p$ and the choice of $\lambda_n$. The finite-sample performance of the oracle estimator is shown in Table \ref{table:oracle}.

\begin{table}
\caption{Simulation results of the oracle estimator}
\label{table:oracle}
\centering
\begin{tabular}[t]{llrr}
\toprule
g & dis & Bias ($\beta_2$) & SD ($\beta_2$)\\
\midrule
exp & Pareto & -0.040 & 0.154\\
exp & Student-t & -0.063 & 0.155\\
exp & Fr\'echet & -0.078 & 0.155\\
softplus & Pareto & -0.014 & 0.318\\
softplus & Student-t & -0.010 & 0.321\\
softplus & Fr\'echet & -0.020 & 0.316\\
\bottomrule
\end{tabular}
\end{table}

The simulation results for the $\ell_1$-penalized estimator are summarized in Tables \ref{tab:simulation:exp} and \ref{tab:simulation:log}. First, compared with the oracle estimator, the $\ell_1$-penalized estimator has larger bias but smaller standard deviation. This is expected because the $\ell_1$-penalty introduces a shrinkage effect that reduces variance at the expense of bias. Second, the scaling factor $c_0$ controls the trade-off between variable selection and parameter estimation. Specifically, choosing $c_0 = 1.0$ yields higher variable selection accuracy, exceeding 95\% in most configurations. However, this larger penalty also leads to a larger estimation bias for the active coefficient $\beta_2^0$.

Third, the estimator shows stable performance across different dimensions $p$ and error distributions. Notably, the bias under the Pareto distribution is slightly smaller than that under the Fr\'echet and Student-$t$ distributions. This difference arises because the Pareto error distribution has no approximation error in model \eqref{eq:model}. By contrast, under Fr\'echet and Student-$t$ distributions, the estimator has two sources of bias: the shrinkage bias from the $\ell_1$-penalty and the model approximation bias.

In conclusion, the proposed $\ell_1$-penalized estimator balances sparsity recovery and estimation accuracy. We recommend $c_0 = 1.0$ if the primary goal is variable selection, and $c_0 = 0.5$ if minimizing estimation error is preferred.

 \begin{table}[htbp]
  \centering
  \caption{Simulation results of the $\ell_1$-penalized estimator under $g(x)=\exp(x)$.}
  \label{tab:simulation:exp}
  \footnotesize
  \begin{tabular}{lrrrrrrr}
\toprule
dis & p & $c_0$ & accuracy & bias($\beta_2$) & sd ($\beta_2$) & bias($\beta_{10}$) & sd ($\beta_{10}$)\\
\midrule
Pareto & 50 & 0.5 & 0.798 & -0.149 & 0.146 & -0.001 & 0.040\\
Pareto & 50 & 1.0 & 0.953 & -0.296 & 0.138 & -0.001 & 0.012\\
Pareto & 100 & 0.5 & 0.824 & -0.161 & 0.146 & -0.001 & 0.036\\
Pareto & 100 & 1.0 & 0.970 & -0.322 & 0.138 & 0.000 & 0.008\\
Student-t & 50 & 0.5 & 0.797 & -0.176 & 0.149 & -0.001 & 0.040\\
Student-t & 50 & 1.0 & 0.952 & -0.325 & 0.140 & -0.001 & 0.011\\
Student-t & 100 & 0.5 & 0.820 & -0.189 & 0.143 & 0.000 & 0.034\\
Student-t & 100 & 1.0 & 0.969 & -0.358 & 0.138 & 0.000 & 0.011\\
Fr\'echet & 50 & 0.5 & 0.795 & -0.189 & 0.149 & 0.000 & 0.041\\
Fr\'echet & 50 & 1.0 & 0.951 & -0.342 & 0.136 & 0.000 & 0.011\\
Fr\'echet & 100 & 0.5 & 0.819 & -0.205 & 0.147 & -0.002 & 0.037\\
Fr\'echet & 100 & 1.0 & 0.969 & -0.374 & 0.137 & 0.000 & 0.009\\
\bottomrule
\end{tabular}
\end{table}

\begin{table}[htbp]
  \centering
  \caption{Simulation results of the $\ell_1$-penalized estimator under $g(x)=\log(1+\exp(x))$.}
  \label{tab:simulation:log}
  \footnotesize
  \begin{tabular}{lrrrrrrr}
\toprule
dis & p & $c_0$ & accuracy & bias($\beta_2$) & sd ($\beta_2$) & bias($\beta_{10}$) & sd ($\beta_{10}$)\\\midrule
Pareto & 50 & 0.5 & 0.768 & -0.324 & 0.262 & -0.001 & 0.072\\
Pareto & 50 & 1.0 & 0.929 & -0.666 & 0.208 & -0.001 & 0.019\\
Pareto & 100 & 0.5 & 0.796 & -0.344 & 0.259 & -0.002 & 0.071\\
Pareto & 100 & 1.0 & 0.957 & -0.717 & 0.197 & 0.000 & 0.013\\
Student-t & 50 & 0.5 & 0.768 & -0.325 & 0.262 & -0.003 & 0.077\\
Student-t & 50 & 1.0 & 0.930 & -0.667 & 0.209 & -0.001 & 0.022\\
Student-t & 100 & 0.5 & 0.796 & -0.341 & 0.258 & -0.003 & 0.067\\
Student-t & 100 & 1.0 & 0.956 & -0.716 & 0.196 & -0.001 & 0.015\\
Fr\'echet & 50 & 0.5 & 0.768 & -0.343 & 0.259 & -0.002 & 0.077\\
Fr\'echet & 50 & 1.0 & 0.930 & -0.680 & 0.201 & -0.002 & 0.020\\
Fr\'echet & 100 & 0.5 & 0.795 & -0.357 & 0.260 & -0.003 & 0.066\\
Fr\'echet & 100 & 1.0 & 0.956 & -0.726 & 0.193 & -0.001 & 0.017\\
\bottomrule
\end{tabular}
\end{table}

 \subsection{Performance of the debiased estimator}\label{sec:simu:ci}

In this subsection, we evaluate the finite-sample performance of the proposed debiased estimator $\widetilde{\beta}_j$ and its associated confidence intervals. Under the data-generating process described in Section \ref{sec:simi:setup}, we generate datasets with a total sample size of $2n$, where $n=2500$, and consider dimensions $p \in \{50, 100\}$. The intermediate tail sequence is fixed at $k = 50$. For the high-dimensional regularized estimation step, the tuning parameter is specified as $\lambda_n = c_0\sqrt{\log p/k}$. Since our primary objective is statistical inference rather than variable selection, we employ a relatively smaller regularization strength by setting $c_0=0.5$.

To select $\lambda_{1,n}=c_1\sqrt{\log p/k}$, we arrange the candidate values of $c_1$ in decreasing order and sequentially tighten the constraint. Starting from the largest candidate, we continue to the next smaller value whenever the optimization problem remains feasible and stop when feasibility is first lost. The chosen $c_1$ is the smallest feasible value. In addition, set $\lambda_{2n}=10\sqrt{\log n}$. Here the constant $10$ is relatively large because the second constraint is a technical safeguard and should not affect the variance minimization.

To empirically verify the asymptotic normality established in Theorem \ref{theorem:normality}, we examine the quantile-to-quantile (QQ) plots for the studentized statistics defined as
\begin{equation*}
\frac{\widetilde{\beta}_j - \beta_j^0}{\sqrt{\widehat{\mbu}_j^\top \boldsymbol{\Sigma}_n( \widehat{\boldsymbol{\beta}}_n ) \widehat{\mbu}_j/k} }.
\end{equation*}
The plots are displayed in Figures \ref{fig:all_qqplots:exp} and \ref{fig:all_qqplots:softplus}. For all configurations, the empirical quantiles exhibit a tight alignment with the $y=x$ reference line, suggesting that the standard normal distribution serves as an excellent approximation for the finite-sample distribution of the proposed debiased estimator.

\begin{figure}[htbp]
    \centering
    \begin{subfigure}{\textwidth}
        \centering
        \includegraphics[width=0.9\textwidth]{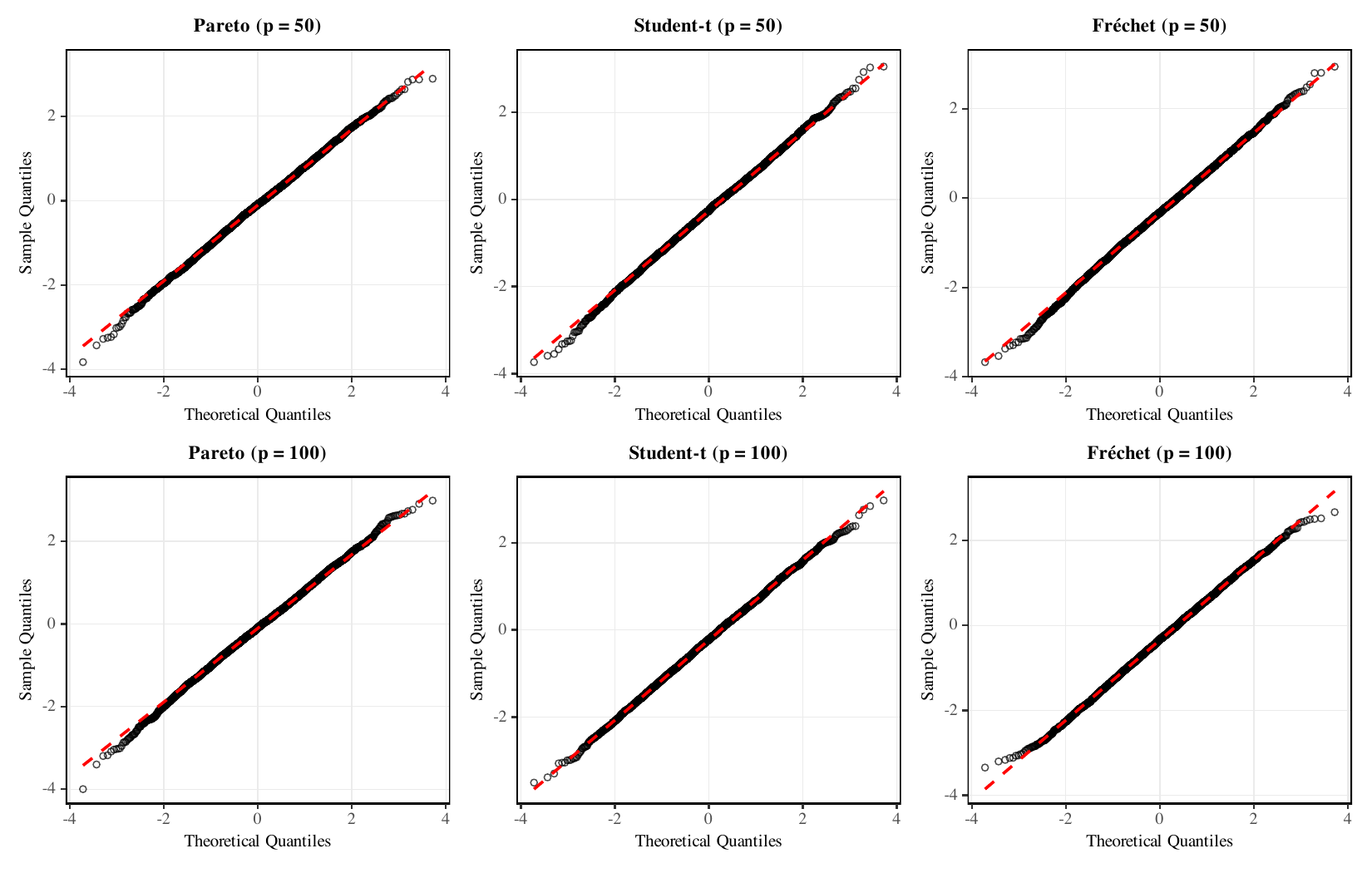}
        \caption{QQ plots for coefficient $\beta_2$.}
    \end{subfigure}    
    \vspace{1em} 
    \begin{subfigure}{\textwidth}
        \centering
        \includegraphics[width=0.9\textwidth]{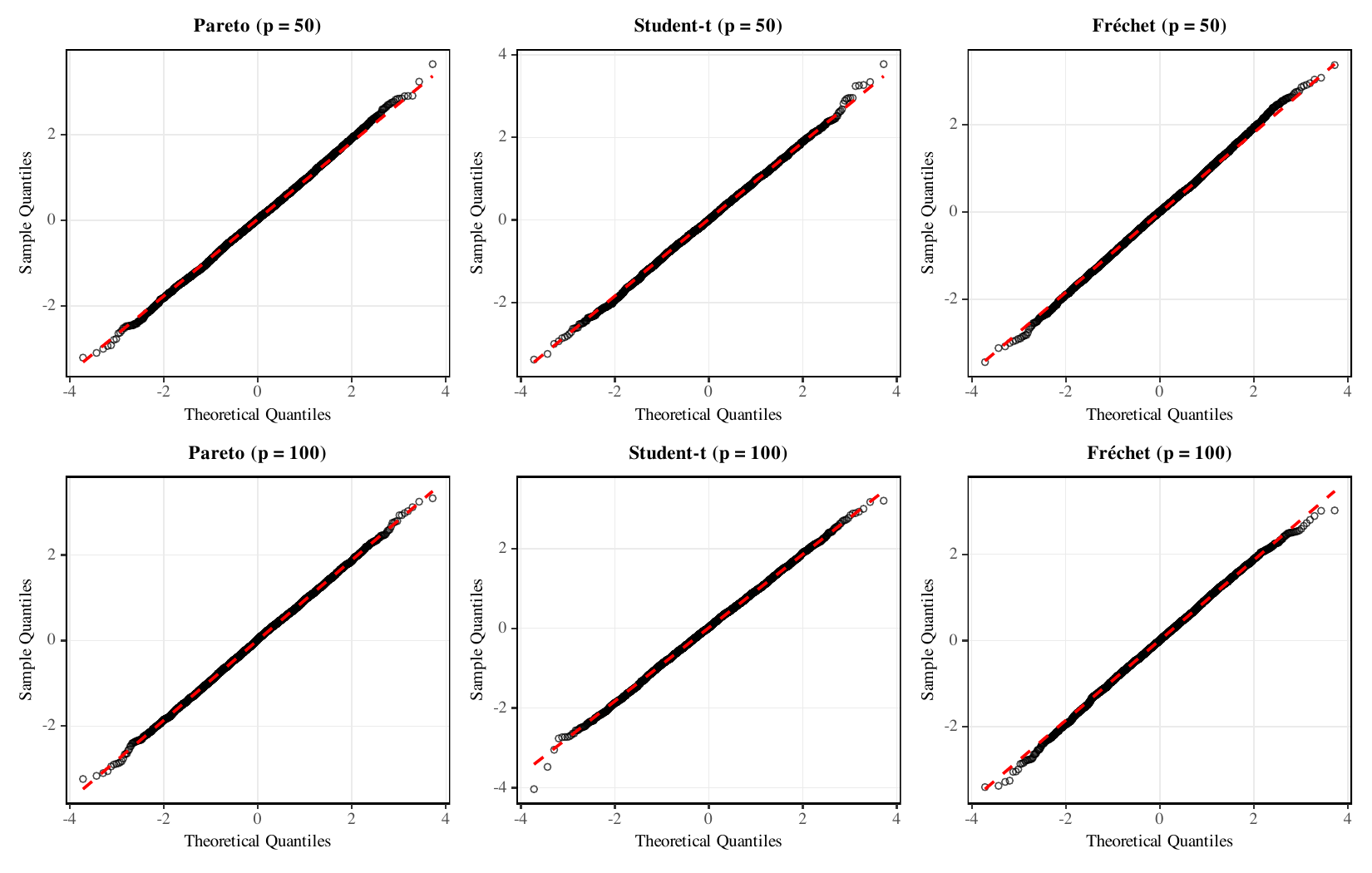}
        \caption{QQ plots for coefficient $\beta_{10}$.}
    \end{subfigure}
    \caption{QQ plots for the debiased estimator for the case $g(x)=\exp(x)$.}
    \label{fig:all_qqplots:exp}
\end{figure}

 \begin{figure}[htbp]
    \centering
    \begin{subfigure}{\textwidth}
        \centering
        \includegraphics[width=0.9\textwidth]{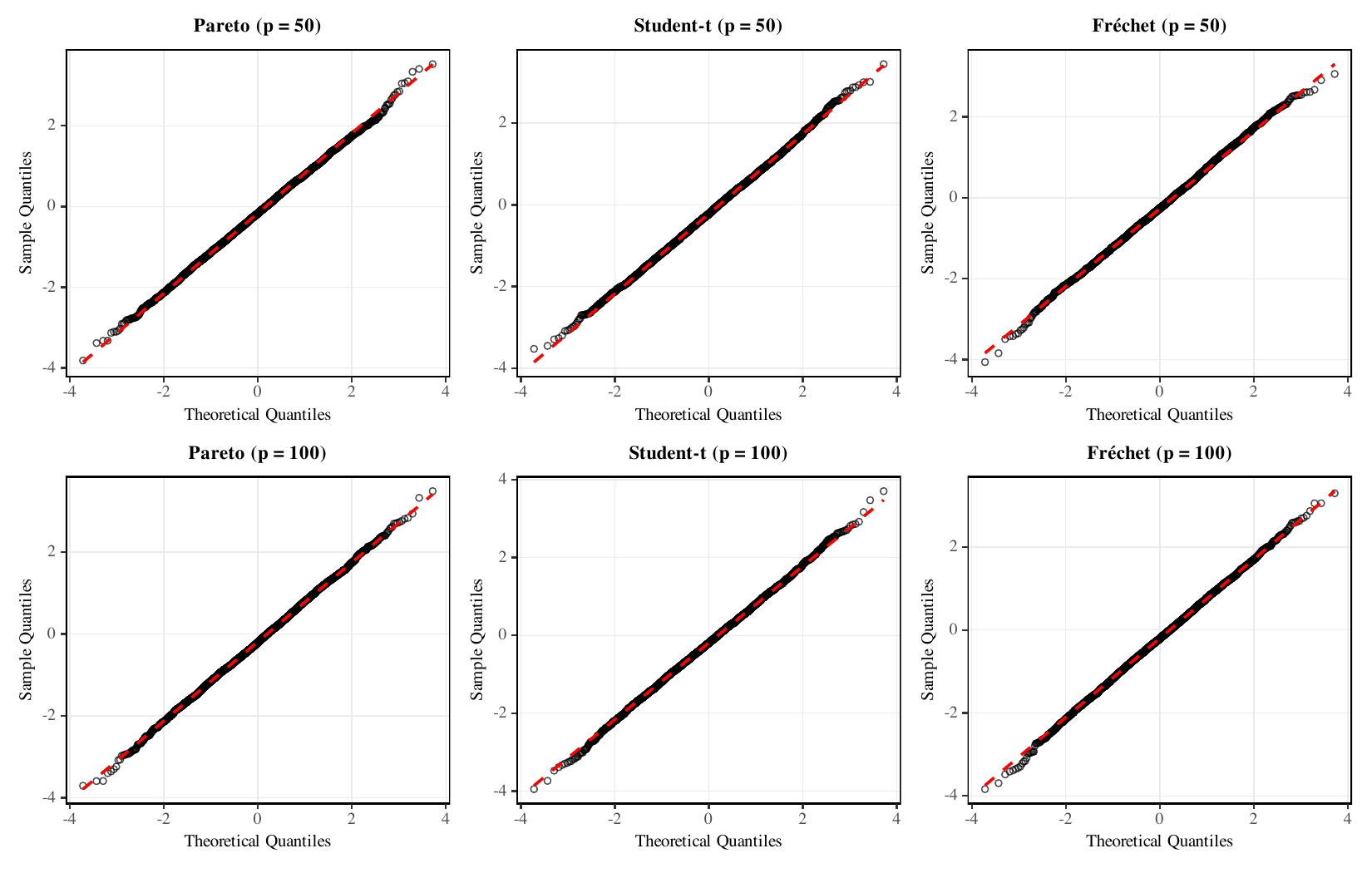}
        \caption{QQ plots for coefficient $\beta_2$.}
    \end{subfigure}    
    \vspace{1em} 
    \begin{subfigure}{\textwidth}
        \centering
        \includegraphics[width=0.9\textwidth]{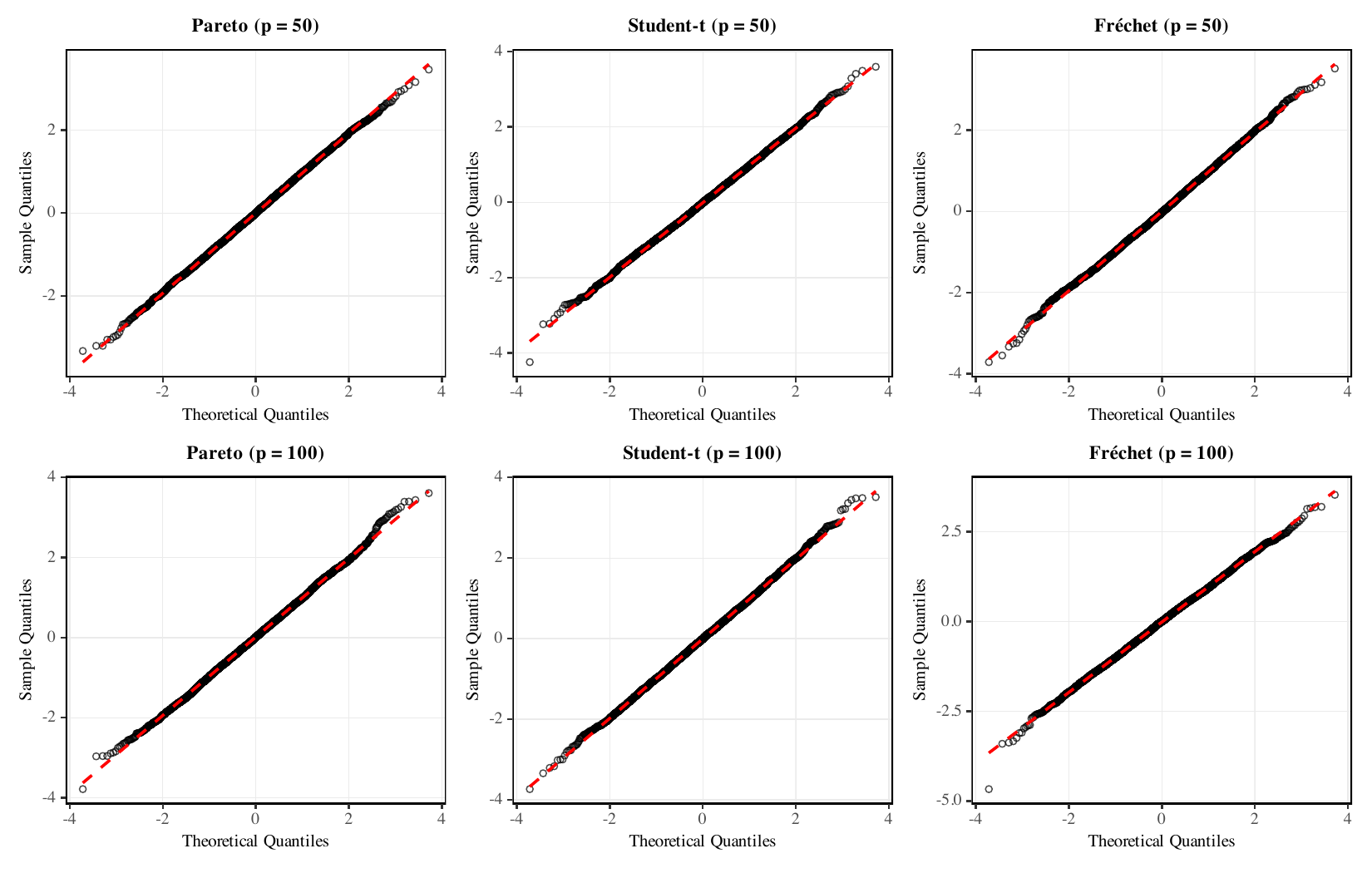}
        \caption{QQ plots for coefficient $\beta_{10}$.}
    \end{subfigure}
    \caption{QQ plots for the debiased estimator for the case $g(x)=\log(1+\exp(x))$.}
    \label{fig:all_qqplots:softplus}
\end{figure}

In addition, we evaluate the finite-sample inference performance by reporting the empirical biases, standard deviations (SDs), and coverage probabilities of the nominal $95\%$ confidence intervals. The results are summarized in Table \ref{tab:inference_results}. Several observations can be drawn. 

First, the debiasing procedure substantially reduces the shrinkage bias of the $\ell_1$-penalized estimator and produces coverage close to the nominal level across the reported designs. Second, the active coefficient $\beta_2$ generally has a smaller absolute bias under Pareto errors than under Student-$t$ and Fr\'echet errors. This difference arises because the Pareto error distribution has no approximation error in model \eqref{eq:model}, whereas the other two distributions also introduce model approximation bias. Third, for the noise coordinate $\beta_{10}=0$, empirical biases are close to zero and coverage is near the nominal level. These results support the finite-sample calibration of the procedure in the simulated settings.

 \begin{table}[htbp]
  \centering
\caption{Simulation results of the debiased estimator.}
\label{tab:inference_results}
    \begin{subtable}{\textwidth} \centering
    \caption{$g(x)=\exp(x)$}  
\begin{tabular}{lccccccc}
\toprule
dis & p & coverage($\beta_2$) & bias($\beta_2$) & sd($\beta_2$) & coverage($\beta_{10}$) & bias($\beta_{10}$) & sd($\beta_{10}$)\\
\midrule
Pareto & 50 & 0.965 & -0.018 & 0.155 & 0.963 & 0.005 & 0.183\\
Pareto & 100 & 0.964 & -0.018 & 0.157 & 0.964 & 0.002 & 0.185\\
Student-t & 50 & 0.958 & -0.044 & 0.157 & 0.958 & 0.002 & 0.185\\
Student-t & 100 & 0.961 & -0.039 & 0.155 & 0.963 & 0.003 & 0.183\\
Fr\'echet & 50 & 0.955 & -0.055 & 0.154 & 0.959 & 0.000 & 0.186\\
Fr\'echet & 100 & 0.949 & -0.058 & 0.160 & 0.960 & 0.001 & 0.186\\
\bottomrule
\end{tabular}
  \end{subtable}
  
  \vspace{0.2cm}  
  \begin{subtable}{\textwidth}    
    \centering
    \caption{ $g(x)=\log(1+\exp(x))$} 
  \begin{tabular}{lccccccc}
\toprule
dis & p & coverage($\beta_2$) & bias($\beta_2$) & sd($\beta_2$) & coverage($\beta_{10}$) & bias($\beta_{10}$) & sd($\beta_{10}$)\\
\midrule
Pareto & 50 & 0.954 & -0.051 & 0.296 & 0.958 & 0.001 & 0.329\\
Pareto & 100 & 0.950 & -0.058 & 0.299 & 0.958 & 0.003 & 0.340\\
Student-t & 50 & 0.950 & -0.063 & 0.296 & 0.954 & -0.004 & 0.336\\
Student-t & 100 & 0.946 & -0.058 & 0.307 & 0.952 & -0.001 & 0.341\\
Fr\'echet & 50 & 0.949 & -0.075 & 0.297 & 0.958 & -0.003 & 0.334\\
\addlinespace
Fr\'echet & 100 & 0.953 & -0.061 & 0.296 & 0.956 & -0.005 & 0.336\\
\bottomrule
\end{tabular}
  \end{subtable}

\end{table}

\subsection{Using the Hessian as the variance matrix}\label{sec:simu:falsify}

A common practice in generalized linear models is to invoke the information matrix identity and use the Hessian as the variance of the score. This identity does not hold in the present tail-localized setting; see Remark \ref{remark:exp:rankone}. To illustrate the resulting effect on inference, we consider a naive Hessian-based procedure. Specifically, we replace the estimated score covariance $\boldsymbol{\Sigma}_n(\widehat{\boldsymbol{\beta}}_n)$ by the Hessian $\mathcal{L}_n^{\prime\prime}(\widehat{\boldsymbol{\beta}}_n)$ throughout the construction of the projection direction. We retain the same tuning-parameter grids and feasibility constraints as in the proposed procedure. Thus, the naive projection direction is selected by minimizing
\begin{equation*}
\widehat{\boldsymbol{u}}_{j,H}^{\top}
\mathcal{L}_n^{\prime\prime}(\widehat{\boldsymbol{\beta}}_n)
\widehat{\boldsymbol{u}}_{j,H}
\end{equation*}
over the same feasible set used for $\widehat{\boldsymbol{u}}_j$. The corresponding debiased estimator is constructed using $\widehat{\boldsymbol{u}}_{j,H}$, and its standard error is calculated as
\begin{equation*}
\widehat{\mathrm{se}}_{j,H}
=
\sqrt{
\widehat{\boldsymbol{u}}_{j,H}^{\top}
\mathcal{L}_n^{\prime\prime}(\widehat{\boldsymbol{\beta}}_n)
\widehat{\boldsymbol{u}}_{j,H}/k
}.
\end{equation*}
We report the empirical coverage probabilities of the resulting nominal $95\%$ confidence intervals and compare them with those of the proposed variance-based procedure. This comparison isolates the consequence of incorrectly treating the Hessian as the score variance. We focus on the softplus link as explained in Remark \ref{remark:exp:rankone}.

\begin{table}[htbp]
\centering
\caption{Simulation results of the Hessian-based procedure under $g(x)=\log(1+\exp(x))$.}
\label{tab:inference:hessian}
\begin{tabular}{lccccccc}
\toprule
dis & p & coverage($\beta_2$) & bias($\beta_2$) & sd($\beta_2$) & coverage($\beta_{10}$) & bias($\beta_{10}$) & sd($\beta_{10}$)\\
\midrule
Pareto & 50  & 0.866 & -0.053 & 0.296 & 0.876 &  0.001 & 0.329\\
Pareto & 100 & 0.858 & -0.062 & 0.300 & 0.858 &  0.003 & 0.339\\
Student-t & 50  & 0.863 & -0.065 & 0.297 & 0.871 & -0.004 & 0.336\\
Student-t & 100 & 0.850 & -0.063 & 0.308 & 0.865 & -0.001 & 0.341\\
Fr\'echet & 50  & 0.854 & -0.077 & 0.297 & 0.869 & -0.003 & 0.334\\
Fr\'echet & 100 & 0.859 & -0.065 & 0.296 & 0.868 & -0.005 & 0.336\\
\bottomrule
\end{tabular}
\end{table}

Table~\ref{tab:inference:hessian} shows substantial undercoverage when the Hessian is used as the variance matrix, with coverage well below the nominal level across the reported configurations. By comparison, the proposed variance-based procedure produces coverage close to the nominal level under the same configurations.

\section{Real Data Analysis}\label{sec:realdata}

In this section, we apply the proposed heteroscedastic extremes framework to an auto insurance claims dataset to investigate how various high-dimensional risk factors drive the tail scaling behaviors of claim sizes. The high heterogeneity of insurance data typically arises from the diverse nature of underlying risks and multiple factors influencing claim frequencies and sizes. Although this heterogeneity complicates the modeling process, it also provides critical insights into risk profiles. 

The dataset, available on Kaggle\footnote{https://www.kaggle.com/datasets/xiaomengsun/car-insurance-claim-data}, consists of $n=8423$ observations. The response variable $Y$ represents the auto claim amount. Following the preprocessing steps in \cite{tang2024high}, the original design matrix contained 44 covariates, including an intercept term ($X_1 \equiv 1$). A detailed description of these variables is provided in Section S.4 of the Supplementary Material of \cite{tang2024high}. Because the claims-history indicators were perfectly linearly dependent, we removed \texttt{old.claims.1}, which indicates that the total claim amount over the previous five years lies in $(0,5000)$, leaving $p=43$ covariates in the analysis.  Except for the intercept $X_1$, all covariates are standardized to have a mean of zero and a variance of one.

We consider two different link functions, $g(x) = \exp(x)$ and $g(x) = \log(1+\exp(x))$, for the heteroscedastic extremes models. First, we apply the $\ell_1$-penalized estimator $\widehat{\bolbeta}_n$ to evaluate the predictive performance of our model. We set $k=50$ and $\lambda_n = 0.5\sqrt{\log p/k}$. Since $p=43$ is comparable to the effective sample size $k=50$, this application represents a high-dimensional setting.

 We randomly divide the dataset into a training set ($50\%$ of the data, $n_1 = 4211$ observations) and a holdout set (the remaining $50\%$, $n_2 = 4212$ observations). The current implementation evaluates the following cross-entropy criterion
$$
\text{PE} = -\frac{1}{n_2}\sum_{j\in \mathcal{J}}\suit{\mI\suit{Y_j> y_{th} }\log \widehat{p}_j + \mI\suit{Y_j\le y_{th} } \log (1-\widehat{p}_j) },
$$
where $\mathcal{J}$ is the index set of the holdout set, $y_{th}$ is the $98\%$ quantile of $Y$ in the holdout set, and the estimated probability of $Y_j>y_{th}$ is defined as
$$
\widehat{p}_j = g\suit{\mbX_j^\top \widehat{\bolbeta}_n } \suit{\frac{1}{n_2}\sum_{i \in \mathcal{J}} \mI\suit{Y_i>y_{th}}}.
$$

We compare our two proposed models with a baseline model without covariates, where 
$$
\widehat{p}_j = \frac{1}{n_2}\sum_{i \in \mathcal{J}} \mI\suit{Y_i>y_{th}}.
$$
Additionally, we compare against an $\ell_1$-penalized logistic regression model, with the penalty parameter selected via cross-validation.

The random-splitting evaluation is repeated 500 times, and the resulting PEs are reported in Table \ref{tab:l1:car}. The exponential link yields a slightly smaller prediction error than the Softplus link, and both proposed models outperform the no-covariate and penalized-logistic benchmarks.

\begin{table}[htbp]
\centering
\caption{Mean prediction errors (with standard deviations in parentheses; $\times 100$) for different methods based on 500 random splits of the auto claims data.}
\label{tab:l1:car}
\begin{tabular}{cccccc}
\hline
Methods & $g(x) = \exp(x)$ & $g(x)=\log(1+\exp(x))$   & No Covariates & Penalized Logistic \\ \hline
PE      & 9.35 (0.135)     & 9.38 (0.148)           &       9.87 (0.004) & 9.80 (0.123)  \\ \hline
\end{tabular}
\end{table}

We then repeat the sample-splitting procedure 50 times and apply the proposed debiased inference method to each split. The tuning parameters are chosen in the same manner as in Section \ref{sec:simulation}.
For each model, a variable is identified as significant if the corresponding confidence interval excludes zero in at least 30 of the 50 splits (60\%), with the direction of the effect remaining the same. Both models identify the same set of significant risk factors, as shown in Table \ref{tab:debiased:ci}.

To account for the additional uncertainty induced by random sample splitting, we follow the median aggregation method of \cite{chernozhukov2018double}. For each coefficient, let $\widehat\beta_j^{(b)}$ and $\widehat{\mathrm{se}}_j^{(b)}$ denote the debiased estimate and its standard error from split $b$, respectively, and define $\widetilde\beta_j=\operatorname{median}_{1\leq b\leq 50}\widehat\beta_j^{(b)}$. The split-adjusted variance estimator is
\[
\widetilde V_j
=
\operatorname{median}_{1\leq b\leq 50}
\left\{
\bigl(\widehat{\mathrm{se}}_j^{(b)}\bigr)^2
+
\bigl(\widehat\beta_j^{(b)}-\widetilde\beta_j\bigr)^2
\right\}.
\]
Table~\ref{tab:debiased:ci} reports $\widetilde\beta_j$ together with the corresponding 95\% confidence interval $\widetilde\beta_j\pm1.96\sqrt{\widetilde V_j}$ under each model. All of the adjusted confidence intervals for the reported variables exclude zero, confirming their statistical significance after accounting for the uncertainty induced by sample splitting.


%

Identifying these significant risk factors can help insurers monitor conditional tail risks and improve risk-management strategies. In particular, the results distinguish higher- and lower-risk policies according to individual and vehicle characteristics and can provide useful information for insurance pricing and the management of large-loss exposure.

\begin{table}[htbp]
\centering
\setlength{\tabcolsep}{4pt}
\renewcommand{\arraystretch}{0.9}
\caption{Significant variables, median debiased estimates and sample-splitting-adjusted 95\% confidence intervals under the two models, and variable descriptions. The intervals use the median variance correction of \cite{chernozhukov2018double}. Variables are selected when they are significant in at least 60\% of the 50 splits.}
\label{tab:debiased:ci}
\fontsize{10pt}{10.5pt}\selectfont
\begin{tabularx}{\textwidth}{
    l
    c
    c
    >{\leavevmode\fontsize{10pt}{10.5pt}\selectfont}X
}
\hline
Variables
& \begin{tabular}[c]{@{}c@{}}$g(x)=\exp(x)$\\ Median (95\% CI)\end{tabular}
& \begin{tabular}[c]{@{}c@{}}$g(x)=\log(1+\exp(x))$\\ Median (95\% CI)\end{tabular}
& Description \\
\hline

\multicolumn{4}{l}{\textbf{Positive Effects}} \\[2pt]

log.Vehicle.Value
& 0.180 [0.069, 0.291] & 0.343 [0.126, 0.560] & log(Vehicle Value) \\

log.Vehicle.Value2
& 0.161 [0.054, 0.268] & 0.314 [0.102, 0.526] & $\log^2$(Vehicle Value) \\

\hline
\multicolumn{4}{l}{\textbf{Negative Effects}} \\
\noalign{\vskip 5pt}

City.Population
& -0.532 [-0.904, -0.160] & -0.873 [-1.460, -0.285] & equals 1 if living in city areas \\

is.Manager
& -0.426 [-0.779, -0.073] & -0.746 [-1.333, -0.160] & equals 1 if the occupation is manager \\ 
is.minivan
& -0.424 [-0.809, -0.039] & -0.750 [-1.423, -0.077] & equals 1 if the car is a minivan \\
\hline
\end{tabularx}
\end{table}

\section{Discussion}\label{sec:discussion}

Model \eqref{eq:model} assumes that the conditional distributions of $Y$ 
are driven by the covariates. The estimation theory in Section 
\ref{sec:estimation} controls the tail approximation error through the 
second-order condition in Assumption \ref{all:model:bias}. There is no 
formal test for this model. A specification test can be based on testing 
whether conditional tail indices are constant across the covariate 
space, similar to the test in \cite{einmahl2016statistics}. A rigorous testing 
procedure for the model is left for future research.

A natural next step following this study is inference on conditional 
exceedance probabilities and extreme conditional quantiles beyond the 
observed range. Within the present framework this would combine 
Weissman-type extrapolation of the marginal tail (see, e.g., 
\cite{haan2006extreme}) with the estimated function 
$c_{\bolbeta}(\mbx)$. Theoretical justification will follow, by using 
the delta method. However, applying the delta method is not routine 
here: the extrapolation factor depends on the whole vector $\bolbeta$ 
which involves the full score covariance $\bolSigma_n$ in Section 
\ref{sec:inference}, not merely the single-coefficient variance of 
Theorem \ref{theorem:normality}. Remark \ref{remark:exp:rankone} shows 
how the intercept already shapes this covariance under the canonical 
exponential link. We leave this extension to future work.

\bibliographystyle{apalike} 
\bibliography{mybib.bib}

\appendix
\clearpage
\renewcommand{\theequation}{S.\arabic{equation}}
\setcounter{equation}{0}

\renewcommand{\thefigure}{S.\arabic{figure}}
\setcounter{figure}{0}

\renewcommand{\thetable}{S.\arabic{table}}
\setcounter{table}{0}

\setcounter{page}{1}
\renewcommand{\thepage}{S\arabic{page}}
\renewcommand{\thelemma}{S\arabic{lemma}}

Throughout the proofs, we use the notation $y_n = F_Y^{-1}(1-k/n)$. For two sequences $a_n$ and $b_n$, $a_n\lesssim b_n$ means that $a_n\le c b_n$ for all sufficiently large $n$ and some constant $c>0$.

\section{Proof of Theorem \ref{fix:theorem:normality}}

\begin{proof}[Proof of Theorem \ref{fix:theorem:normality}]
	
Let $\Delta \in \mathbb{R}^p$ and	define 
\begin{align*}
	V_n( \Delta) = & k\left\{\mL_n\suit{\bolbeta^0+ k^{-1/2}\Delta  }  - \mL_n(\bolbeta^0) + \lambda_n \left \|\bolbeta^0+  k^{-1/2}\Delta  \right \|_1 -\lambda_n \| \bolbeta^0\|_1\right\}.
\end{align*}
Then, 
$$
\sqrt{k}\suit{\widehat{\bolbeta}_n - \bolbeta^0 } = \argmin_{\Delta} V_n(\Delta).
$$

We intend to establish the asymptotic behavior of $V_n(\Delta)$ for $\|\Delta\|_2 \le C$, where $C>0$ is a fixed constant. Define $U_i = 1-F_Y(Y_i)$. Then, $U_i$ follows a uniform distribution on the interval $[0,1]$, and $Y_i>Y_{n-k,n}$ is equivalent to $ U_i\le U_{k,n}$. 
Define
$$
G_n(\bolbeta) = \frac{1}{n}\sum_{i=1}^n G\suit{\mbX_i^\top \bolbeta}.
$$ 
By applying the Taylor expansion to the function $G_n$, we have that
\begin{align*}
	&V_n(\Delta) \\
	=& k\suit{ G_n(\bolbeta^0+k^{-1/2}\Delta)- G_n(\bolbeta^0) } -\frac{1}{\sqrt{k}}\sum_{i=1}^n  \Delta^\top \mbX_i\mI(Y_i>Y_{n-k,n} ) +  k\lambda_n \sum_{j=1}^p \suit{|\beta_j^0+k^{-1/2}\Delta_j| -|\beta_j^0| }   \\
=& \sqrt{k} \Delta^\top \nabla G_n(\bolbeta^0) +  \frac{1}{2}\Delta^\top \nabla^2 G_n( \widetilde{\bolbeta} )\Delta -\frac{1}{\sqrt{k}}\sum_{i=1}^n  \Delta^\top \mbX_i\mI(U_i\le U_{k,n} ) + k\lambda_n \sum_{j=1}^p \suit{|\beta_j^0+k^{-1/2}\Delta_j| -|\beta_j^0| }  \\
=&:J_1(\Delta)+J_2(\Delta)+J_3(\Delta)+J_4(\Delta)-J_5(\Delta),
\end{align*}
 where $\widetilde{\bolbeta} = \bolbeta^0 + tk^{-1/2}\Delta$ for some $t\in [0,1]$, and
\begin{align*}
	J_1(\Delta)=&  \sqrt{k} \Delta^\top\set{  \nabla G_n(\bolbeta^0) -  \bE \suit{\mbX g(\mbX^\top\bolbeta^0) }  }, \\
	J_2(\Delta)=& \frac{1}{2} \Delta^\top \nabla^2 G_n( \widetilde{\bolbeta} )\Delta, \\
		J_3(\Delta)=&  k\lambda_n \sum_{j=1}^p \suit{|\beta_j^0+k^{-1/2} \Delta_j| -|\beta_j^0| },  \\
		J_4(\Delta)=&  \sqrt{k}\Delta^\top \set{ \bE \suit{\mbX g(\mbX^\top\bolbeta^0) } - \bE(\mbX|U\le k/n) }, \\
	J_5(\Delta)=& \sqrt{k}\Delta^\top  \suit{\frac{1}{k}\sum_{i=1}^n  \mbX_i\mI(U_i\le U_{k,n} ) -\bE(\mbX|U\le k/n) }.
\end{align*}

The proof strategy for bounding $J_1(\Delta)$, $J_2(\Delta)$, and $J_3(\Delta)$ is conceptually aligned with the analysis of Lasso-type estimators in generalized linear models \citep{knight2000asymptotics, zou2006adaptive}. In contrast, the treatments of $J_4(\Delta)$ and $J_5(\Delta)$ rely on extreme value techniques.

First, we handle $J_1(\Delta)$. Note that
 \begin{align*}
	J_1(\Delta) = &\sqrt{k}\Delta^\top \suit{   \nabla G_n(\bolbeta^0) - \bE \suit{\mbX g(\mbX^\top\bolbeta^0) } } \\
	=& \sqrt{k} \Delta^\top\set{ \frac{1}{n} \sum_{i=1}^n \mbX_i g\suit{\mbX_i^\top \bolbeta^0 }     -  \bE \suit{\mbX g(\mbX^\top\bolbeta^0) }}.
\end{align*} 
 By Assumption \ref{fix:condition:moment:bound}, we have that 
 \begin{align*}
 n\text{Var} \suit{\frac{1}{n} \sum_{i=1}^n X_{i,j} g\suit{ \mbX_i^\top  \bolbeta^0 }}  =& \text{Var}\suit{X_j g\suit{\mbX^\top  \bolbeta^0}  }\\
 \le & \bE \suit{ X_j^2 g^2\suit{\mbX^\top  \bolbeta^0} }<\infty.
 \end{align*}
 Thus, we have that as $n\to\infty$,
 $$
 \sup_{\|\Delta\|_2\le C} |J_1(\Delta)| = o_P(1).
 $$

 Next, we handle $J_2(\Delta)$. Note that
 \begin{align*}
 	\nabla^2 G_n(\widetilde{\bolbeta}) = & \frac{1}{n} \sum_{i=1}^n  \mbX_i\mbX_i^\top g^\prime\suit{ \mbX_i^\top \widetilde{\bolbeta}}.
 \end{align*}
 As $n\to\infty$, we have that 
 $\|\widetilde{\bolbeta} -\bolbeta^0\|_2 = k^{-1/2} t\|\Delta\|_2 \to 0$.
Then, by Assumption \ref{fix:condition:moment:bound} and the law of large numbers for a triangular array, we have that 
$$
\nabla^2 G_n(\widetilde{\bolbeta}) = \boldsymbol{H}(\bolbeta^0) +o_P(1).
$$  
Thus, we conclude that, as $n\to\infty$, 
$$
J_2(\Delta) = \frac{1}{2} \Delta^\top \boldsymbol{H}(\bolbeta^0) \Delta +o_P(1).
$$

 For $J_3(\Delta)$, we have that
   $$
   \begin{aligned}
   	J_3(\Delta) =& k\lambda_n \sum_{j=1}^p \suit{|\beta_j^0+ k^{-1/2}\Delta_j| -|\beta_j^0| } \\ \to & \lambda_0\sum_{j=1}^p \suit{\Delta_j\text{sgn}(\beta_j^0)\mI\suit{\beta_j^0\ne 0}+|\Delta_j|\mI(\beta_j^0=0) }\\
   	 =: &b(\Delta), 
   \end{aligned}
$$
uniformly for all $\|\Delta\|_2\le C$.

 Next, we handle $J_4(\Delta)$. By Bayes' rule,     
\begin{equation}\label{s:eq:bayes}
\begin{aligned}
	\frac{1-F_Y(y|\mbX=\mbx)}{ 1-F_Y(y)} = \frac{f_{\mbX}(\mbx|Y>y)}{f_{\mbX}(\mbx)}.
\end{aligned}
\end{equation}
By Assumption \ref{all:model:bias} and $\sqrt{k}A(y_n)\to 0$ , we have that as $n\to\infty$,

\begin{equation*} 
	\begin{aligned}
	\bE(\mbX|U\le k/n) 
	=& \int \mbx f_{\mbX}(\mbx| U \le  k/n)d\mbx \\
	=& \int \mbx f_{\mbX}(\mbx)  g(\mbx^\top\bolbeta^0 )  d\mbx + O(1)A(y_n ) \int \mbx f_{\mbX}(\mbx) B(\mbx)d\mbx \\
 =&\bE \suit{\mbX g(\mbX^\top\bolbeta^0 )} + o(1)k^{-1/2}.
	\end{aligned}
\end{equation*}
Thus, we conclude that, as $n\to\infty$, 
$$
 \sup_{\|\Delta\|_2\le C} |J_4(\Delta)| = o(1).
$$

 Finally, we handle $J_5(\Delta)$.
 Under Assumptions \ref{all:model:bias} and \ref{fix:condition:empiricalprocess}, the conditions of Theorem 3 in \cite{aghbalou2024tail} are satisfied. Therefore, by Theorem 3 of \cite{aghbalou2024tail}, we have that as $n\to\infty$, 
 $$
 \sqrt{k} \suit{\frac{1}{k}\sum_{i=1}^n \mbX_i\mI(U_i\le U_{k,n} ) -\bE(\mbX|U\le k/n) } = \boldsymbol{W}+o_p(1), 
 $$
 where $\boldsymbol{W}$ is a mean-zero Gaussian random vector with covariance 
 $$
 \begin{aligned}
 	 \boldsymbol{\Sigma}_{\boldsymbol{W}}=& \lim_{y\to y^{+}} \set{\bE(\mbX\mbX^\top | Y\ge y) - \bE(\mbX|Y\ge y)  \bE(\mbX^\top|Y\ge y)}, \\
 	 =& \bE \suit{ \mbX\mbX^\top  g(\mbX^\top \bolbeta^0)  } - \bE \suit{ \mbX g(\mbX^\top \bolbeta^0)  } \bE \suit{ \mbX^\top  g(\mbX^\top \bolbeta^0)  }.
 \end{aligned}
 $$
Therefore, we have that 
  $$
  J_5(\Delta) = \Delta^\top \boldsymbol{W} +o_P(1),
  $$
  uniformly for all $\|\Delta\|_2 \le C$.

Combining the limits of $J_1(\Delta), J_2(\Delta),J_3(\Delta), J_4(\Delta)$ and $J_5(\Delta)$, we conclude that, as $n\to\infty$, 
$$
V_n(\Delta) \stackrel{d}{\to} V(\Delta) = -\Delta^\top \boldsymbol{W} +\frac{1}{2} \Delta^\top \boldsymbol{H}(\bolbeta^0) \Delta +b(\Delta).
$$
 Since $V_n$ is convex and $V$ has a unique minimum, by Theorem 5 of \cite{knight1999epi}, we have that as $n\to\infty$,  
 $$
 \sqrt{k}\suit{\widehat{\bolbeta}_n - \bolbeta^0 } = \argmin_{\Delta} V_n(\Delta) \stackrel{d}{\to} \argmin_{\Delta}V(\Delta). 
 $$
 The proof is then complete. 
\end{proof}

\section{Proofs of Theorem \ref{theorem:high:main} }

\subsection{Preliminary Lemmas}

The following lemma gathers some well-known properties of sub-Gaussian random variables. The proofs are omitted. For more details, see, for example, \cite{Wain2019high}. 
\begin{lemma}\label{lemma:sub-gaussian}
	Assume that $X$ is a mean-zero sub-Gaussian random variable with parameter $K$. Then, 
\begin{itemize}
	\item[(i)] For any positive integer  $b\ge 1$,
	$$
		\bE |X|^b \le  2K^{b}\Gamma(b/2+1),
	$$ 
	where $\Gamma$ denotes the gamma function.
	\item[(ii)] For any $t>0$, 
	$$
		\Pr(|X|>t) \le 2\exp\suit{ -\frac{t^2}{K^2} }.
	$$
\end{itemize}
\end{lemma}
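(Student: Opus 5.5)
The plan is to prove (ii) first by a Chernoff argument and then obtain (i) by integrating the tail bound. For the constants to come out exactly as stated, ``parameter $K$'' has to be read in the normalization
$$
\bE \exp\suit{sX} \le \exp\suit{K^2 s^2/4}, \qquad s\in\mathbb{R},
$$
that is, sub-Gaussian with parameter $K/\sqrt{2}$ in the sense of the definition given in the Introduction. This reading is forced, not a matter of taste. If $X\sim N(0,K^2)$, then $\bE\exp(sX)=\exp(K^2s^2/2)$, yet $\Pr(|X|>t)$ decays like $\exp\{-t^2/(2K^2)\}$. So $\Pr(|X|>t)\le 2\exp(-t^2/K^2)$ cannot hold for all $t$ under the literal convention $\sigma=K$. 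I will therefore state this convention at the start of the proof and use it throughout. It is also the normalization under which the bound is quoted in the usual $\psi_2$-type references.

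For (ii), fix $t>0$ and $s>0$. Markov's inequality applied to $e^{sX}$ gives
$$
\Pr(X>t)\le e^{-st}\,\bE e^{sX}\le \exp\suit{-st+K^2s^2/4}.
$$
The exponent is minimized at $s=2t/K^2$, which yields $\Pr(X>t)\le \exp(-t^2/K^2)$. The variable $-X$ satisfies the same moment generating function bound, so the same argument gives $\Pr(-X>t)\le\exp(-t^2/K^2)$. A union bound then gives (ii).

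For (i), I will use the layer-cake formula together with (ii):
$$
\bE|X|^b=\int_0^\infty b t^{b-1}\Pr(|X|>t)\,dt\le 2b\int_0^\infty t^{b-1}e^{-t^2/K^2}\,dt.
$$
The substitution $u=t^2/K^2$ turns the integral into $\tfrac{1}{2}K^b\Gamma(b/2)$. The right-hand side therefore equals $bK^b\Gamma(b/2)=2K^b\cdot\tfrac{b}{2}\Gamma(b/2)=2K^b\Gamma(b/2+1)$, which is exactly the bound in (i). This computation works for every real $b>0$, so in particular for every positive integer $b\ge 1$.

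Both steps are routine. The only delicate point is the normalization of the parameter, because the two constants in the lemma are tied to it. With the convention above, the Chernoff optimization produces $e^{-t^2/K^2}$ exactly, and the Gamma integral reproduces the factor $2K^b\Gamma(b/2+1)$ with no slack. In the later proofs, the lemma is applied to $\mbX^\top\boldsymbol{v}$ under Assumption \ref{condition:regular:x}(iii). There I will take $K=\sqrt{2}\,\kappa_u\|\boldsymbol{v}\|_2$, and the factor $\sqrt2$ is absorbed into the unspecified constants.
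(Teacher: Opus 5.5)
Your proof is correct. It is the standard argument (a Chernoff bound, then integration of the tail) behind the textbook reference to which the paper delegates this omitted proof. Your normalization caveat is also right and even understated: under the Introduction's convention part (i) fails as well, since $X\sim N(0,K^2)$ has $\bE X^6=15K^6>12K^6=2K^6\Gamma(4)$.

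One minor correction concerns your closing remark. In the proof of Lemma~\ref{lemma:gradient:infinity}, with $K=\sqrt{2}\,\kappa_u\|\bolbeta^0\|_2$ the union bound gives $\Pr(\mathcal{E}_T)\to 1$ only when $T>\sqrt{2}\,\kappa_u R$. Assumption~\ref{high:condition:moment}(iii) posits merely some $T>\kappa_u R$, and because $g$ is increasing it does not transfer to a larger $T$. So there the factor $\sqrt{2}$ shifts an explicit threshold in an assumption rather than being absorbed into an unspecified constant.
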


\begin{lemma}\label{lemma:bernstein}(Bernstein inequality, see e.g., \cite{shorack1986empirical}, page 855)
Let $X_1,\dots, X_n$ be independent zero-mean random variables. Suppose that $|X_i| \le M$ almost surely, for all $i=1,\dots,n$. Then, for all positive $t$,
$$
\Pr\suit{\sum_{i=1}^n X_i \ge t} \le \exp\suit{-\frac{\frac{1}{2}t^2}{ \sum_{i=1}^n \bE X_i^2+\frac{1}{3}Mt } }.
$$
\end{lemma}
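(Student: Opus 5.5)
This is the classical Bernstein inequality, which the paper quotes from \cite{shorack1986empirical}. I would prove it by the standard Chernoff (exponential Markov) argument and then choose the free parameter explicitly. Write $S=\sum_{i=1}^n X_i$ and $V=\sum_{i=1}^n \bE X_i^2$. For any $\theta>0$, Markov's inequality and independence give
$$
\Pr(S\ge t)\le e^{-\theta t}\prod_{i=1}^n \bE e^{\theta X_i}.
$$
The work then reduces to bounding each moment generating function in terms of $\bE X_i^2$ and $M$.

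For the single-variable bound, I expand $e^{\theta X_i}$ in its power series. The linear term vanishes because $\bE X_i=0$. For $m\ge 2$, boundedness gives $|\bE X_i^m|\le M^{m-2}\bE X_i^2$. Combining this with the elementary inequality $m!\ge 2\cdot 3^{m-2}$ yields
$$
\bE e^{\theta X_i}\le 1+\frac{\theta^2\bE X_i^2}{2}\sum_{m\ge 2}\Big(\frac{\theta M}{3}\Big)^{m-2}
=1+\frac{\theta^2\bE X_i^2}{2(1-\theta M/3)},
$$
valid for $0<\theta<3/M$. Applying $1+x\le e^x$ and taking the product over $i$, I get
$$
\Pr(S\ge t)\le \exp\Big(-\theta t+\frac{\theta^2 V}{2(1-\theta M/3)}\Big).
$$

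The last step is to choose $\theta=t/(V+Mt/3)$. This choice satisfies $\theta M/3<1$, and it gives $1-\theta M/3=V/(V+Mt/3)$. The quadratic term then equals $t^2/\{2(V+Mt/3)\}$, while the linear term equals $-t^2/(V+Mt/3)$. Their sum is $-\tfrac12 t^2/(V+\tfrac13 Mt)$, which is exactly the stated bound.

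None of these steps is a genuine obstacle. The only point that needs care is the factorial bound $m!\ge 2\cdot 3^{m-2}$, which I would verify by induction on $m\ge 2$. It is this bound that produces the constant $\tfrac13$ in front of $Mt$.
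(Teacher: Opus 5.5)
Your proof is correct. The paper does not prove this lemma; it only quotes it from \cite{shorack1986empirical}. Your Chernoff argument therefore supplies a self-contained proof where the paper has only a citation.

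The steps all check out:
\begin{itemize}
\item The bound $|X_i|\le M$ gives $|\bE X_i^m|\le M^{m-2}\bE X_i^2$. It also justifies exchanging $\bE$ with the exponential series.
\item The inequality $m!\ge 2\cdot 3^{m-2}$ holds with equality at $m=2,3$, and it propagates via $(m+1)!\ge 3\,m!$.
\item The choice $\theta=t/(V+Mt/3)$ produces exactly the exponent $-\tfrac12 t^2/(V+\tfrac13 Mt)$.
\end{itemize}

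The one loose end is the degenerate case $V=0$ with $M>0$. There your $\theta$ equals $3/M$, so the claim $\theta M/3<1$ fails and $1/(1-\theta M/3)$ is undefined. Handle this case separately: $V=0$ forces $X_i=0$ almost surely, so $\Pr(S\ge t)=0$ and the bound is trivial.

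For comparison, the other standard derivation goes through Bennett's inequality. One bounds $\bE e^{\theta X_i}\le\exp\{\bE X_i^2\,(e^{\theta M}-1-\theta M)/M^2\}$ and optimizes $\theta$ exactly, obtaining $\exp\{-(V/M^2)h(Mt/V)\}$ with $h(u)=(1+u)\log(1+u)-u$. The Bernstein form then follows from $h(u)\ge u^2/\{2(1+u/3)\}$. That route gives the sharper Bennett bound along the way, and it needs only the one-sided condition $X_i\le M$. Your factorial-moment argument genuinely uses the two-sided bound, for the even moments. In exchange, it reaches the Bernstein exponent directly with an explicit $\theta$, which is all the applications in Lemma \ref{lemma:gradient:infinity} need.
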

 \begin{lemma}\label{lemma:hessian:population}
	For any $\bolbeta, \Delta$ satisfying $\|\bolbeta\|_2\le R$, $\| \Delta\|_2\le 1$, there exists a constant $C>0$ such that
	 $$
	 \Delta^\top \bE \suit{\mbX \mbX^\top g^\prime\suit{\mbX^\top \bolbeta }} \Delta \ge C \|\Delta\|_2^2.  
	 $$
\end{lemma}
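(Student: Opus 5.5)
The plan is to lower-bound the weight $g^\prime(\mbX^\top\bolbeta)$ by a positive constant on a high-probability event, and to control the complementary event with the sub-Gaussian tails from Assumption \ref{condition:regular:x}(iii). Throughout, fix a truncation level $T_0>0$, to be chosen later depending only on $\kappa_l,\kappa_u,R$.

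\textbf{Step 1 (truncation).} By Assumption \ref{high:condition:moment}(i), $g^\prime$ is positive and increasing. Hence on the event $\{|\mbX^\top\bolbeta|\le T_0\}$ we have $g^\prime(\mbX^\top\bolbeta)\ge g^\prime(-T_0)=:c_{T_0}>0$. Since $g^\prime>0$ everywhere, dropping the complementary event only decreases the expectation. This gives
\begin{align*}
\Delta^\top \bE\suit{\mbX\mbX^\top g^\prime(\mbX^\top\bolbeta)}\Delta
\ge c_{T_0}\Big\{\bE(\mbX^\top\Delta)^2-\bE\big[(\mbX^\top\Delta)^2\mI(|\mbX^\top\bolbeta|>T_0)\big]\Big\}.
\end{align*}

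\textbf{Step 2 (main term).} We have $\bE\mbX\mbX^\top=\bolSigma+\bE\mbX\,\bE\mbX^\top\succeq\bolSigma$. By Assumption \ref{condition:regular:x}(i), this yields $\bE(\mbX^\top\Delta)^2\ge\kappa_l\|\Delta\|_2^2$. Writing it this way sidesteps any tension between the intercept $X_1\equiv1$ and the mean-zero convention.

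\textbf{Step 3 (remainder; the only delicate point).} The bound must be uniform in $p$, $\bolbeta$ and $\Delta$, so we avoid using the coordinatewise bound $\kappa_X$. Instead we apply Cauchy--Schwarz:
$$
\bE\big[(\mbX^\top\Delta)^2\mI(|\mbX^\top\bolbeta|>T_0)\big]\le\{\bE(\mbX^\top\Delta)^4\}^{1/2}\Pr(|\mbX^\top\bolbeta|>T_0)^{1/2}.
\tag*{}
$$
We bound the two factors separately.
\begin{itemize}
\item By Lemma \ref{lemma:sub-gaussian}(i), with parameter $\kappa_u\|\Delta\|_2$, we get $\bE(\mbX^\top\Delta)^4\le 4\kappa_u^4\|\Delta\|_2^4$.
\item By Lemma \ref{lemma:sub-gaussian}(ii), with parameter at most $\kappa_uR$ because $\|\bolbeta\|_2\le R$, we get $\Pr(|\mbX^\top\bolbeta|>T_0)\le 2\exp\{-T_0^2/(\kappa_u^2R^2)\}$.
\end{itemize}
(If the centering convention makes $\mbX^\top\bolbeta$ not exactly mean zero, the intercept shift is at most $|\beta_1|\le R$. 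Enlarging $T_0$ by $R$ absorbs it.) Combining, the remainder is at most
$$
2\sqrt2\,\kappa_u^2\exp\{-T_0^2/(2\kappa_u^2R^2)\}\,\|\Delta\|_2^2 .
$$

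\textbf{Step 4 (choice of constants).} Choose $T_0$ large enough that
$$
2\sqrt2\,\kappa_u^2 e^{-T_0^2/(2\kappa_u^2R^2)}\le\kappa_l/2 .
$$
Then the bracket in Step 1 is at least $(\kappa_l/2)\|\Delta\|_2^2$. The claim follows with $C=g^\prime(-T_0)\kappa_l/2$. This constant depends only on $\kappa_l,\kappa_u,R$ and $g$, and not on $p$, $\bolbeta$ or $\Delta$. The restriction $\|\Delta\|_2\le1$ is not needed, since both sides are homogeneous of degree two in $\Delta$.
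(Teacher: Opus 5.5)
Your proposal is correct and follows essentially the same route as the paper's proof. Both arguments truncate on $\{|\mbX^\top\bolbeta|\le T\}$ to obtain the factor $g^\prime(-T)$, lower-bound $\bE(\mbX^\top\Delta)^2$ by $\kappa_l\|\Delta\|_2^2$, control the remainder by Cauchy--Schwarz with the sub-Gaussian fourth-moment and tail bounds, and then take $T$ large; the only differences are cosmetic (your tail exponent follows Lemma \ref{lemma:sub-gaussian}(ii) as stated, while the paper carries an extra factor $2$ and fixes $T$ explicitly), and they affect only the value of $C$.
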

\begin{proof}[Proof of Lemma \ref{lemma:hessian:population}] 
Let $T>0$ be a constant to be specified later. Then 
\begin{align*}
		 \Delta^\top  \bE \suit{\mbX \mbX^\top  g^\prime\suit{\mbX^\top \bolbeta }} \Delta=& \bE \suit{  g^\prime\suit{\mbX^\top \bolbeta } \suit{\mbX^\top \Delta }^2 } \\
	 \ge & \bE \suit{  g^\prime\suit{\mbX^\top \bolbeta } \suit{\mbX^\top \Delta }^2 \mI\suit{ |\mbX^\top\bolbeta| \le T } } \\
	 \ge & g^\prime\suit{-T} \bE \suit{\suit{\mbX^\top \Delta }^2 \mI\suit{ |\mbX^\top\bolbeta|\le T} } \\
	 =&  g^\prime\suit{-T}\set{  \bE \suit{\mbX^\top \Delta }^2 - \bE \suit{\suit{\mbX^\top \Delta }^2 \mI\suit{ |\mbX^\top\bolbeta|> T} }  }.
\end{align*}
By Assumption \ref{condition:regular:x}(i), we have that 
$$
 \bE \suit{\mbX^\top \Delta }^2 \ge \kappa_{l} \|\Delta\|_2^2. 
$$
By the Cauchy--Schwarz inequality, we have that
\begin{align*}
	\bE \suit{\suit{\mbX^\top \Delta }^2 \mI\suit{ |\mbX^\top\bolbeta|> T} } \le & \sqrt{ \bE \suit{\mbX^\top \Delta }^4 \Pr\suit{ |\mbX^\top\bolbeta|> T } }.
\end{align*}
By Assumption \ref{condition:regular:x}(iii) and Lemma \ref{lemma:sub-gaussian}, we have that 
\begin{align*}
	 \bE \suit{\mbX^\top \Delta }^4   \le &4 \kappa_u^4 \|\Delta\|_2^4,\\
	 \Pr\suit{|\mbX^\top \bolbeta|>T } \le & 2\exp\suit{ - \frac{T^2}{2\kappa_u^2 \|\bolbeta\|_2^2 }} \le 2\exp\suit{ - \frac{T^2}{2\kappa_u^2 R^2 }}.
\end{align*}
It follows that, 
$$
	\bE \suit{\suit{\mbX^\top \Delta }^2 \mI\suit{ |\mbX^\top\bolbeta|> T} } \le 2\sqrt{2} \|\Delta\|_2^2 \kappa_u^2\exp\suit{ -\frac{T^2}{4 \kappa_u^2 R^2} }.
$$
By choosing 
$$T^2 = 4\kappa_u^2 R^2\log \suit{\frac{8\kappa_u^2}{\kappa_l} },
$$ we have that 
$$
\bE \suit{\suit{\mbX^\top \Delta }^2 \mI\suit{ |\mbX^\top\bolbeta|> T} } \le \frac{\kappa_{\ell}}{2} \|\Delta\|_2^2.
$$
The proof is then complete. 

\end{proof}

\begin{lemma}\label{lemma:bound:young}
Let $X$ be a mean-zero sub-Gaussian random variable with parameter $K$. For any event $A$ with $\Pr(A) =\alpha$, it holds that, 
$$
|\bE (X\mI_A)|\le 2K\alpha \sqrt{\log (2/\alpha)}.
$$  
\end{lemma}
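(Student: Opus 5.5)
We need to prove $|\bE(X\mI_A)|\le 2K\alpha\sqrt{\log(2/\alpha)}$. The plan is to split on a level $t$, write the truncated first moment as a tail integral, and then use the sub-Gaussian tail bound of Lemma \ref{lemma:sub-gaussian}(ii).

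\textbf{Step 1 (layer-cake bound).} We have $|\bE(X\mI_A)|\le \bE(|X|\mI_A)$. For any $t\ge 0$,
$$
\bE(|X|\mI_A)=\int_0^\infty \Pr(|X|\mI_A>s)\,ds\le \int_0^\infty \min\set{\alpha,\Pr(|X|>s)}\,ds ,
$$
since $\Pr(|X|\mI_A>s)\le\min\{\Pr(A),\Pr(|X|>s)\}$. Splitting the integral at $t$ gives
$$
\bE(|X|\mI_A)\le \alpha t+\int_t^\infty \Pr(|X|>s)\,ds\le \alpha t+\int_t^\infty 2e^{-s^2/K^2}\,ds .
$$

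\textbf{Step 2 (Gaussian tail integral).} For the remaining integral, use $e^{-s^2/K^2}\le (s/t)e^{-s^2/K^2}$ for $s\ge t$. Integrating yields
$$
\int_t^\infty 2e^{-s^2/K^2}\,ds\le \frac{K^2}{t}e^{-t^2/K^2}.
$$

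\textbf{Step 3 (choice of $t$).} Take $t=K\sqrt{\log(2/\alpha)}$, so that $e^{-t^2/K^2}=\alpha/2$. Two cases arise.

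\emph{Case $\alpha\le 2e^{-1}$.} Then $\log(2/\alpha)\ge 1$, so $K^2/t\le K\sqrt{\log(2/\alpha)}$. The total bound is at most
$$
\alpha K\sqrt{\log(2/\alpha)}+\frac{\alpha}{2}K\sqrt{\log(2/\alpha)}\le 2K\alpha\sqrt{\log(2/\alpha)} .
$$

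\emph{Case $2e^{-1}<\alpha\le 1$.} Here $\log(2/\alpha)\ge\log 2$, so the target is at least $2K\alpha\sqrt{\log 2}$. We instead use the direct bounds
$$
\bE(|X|\mI_A)\le \sqrt{\alpha\,\bE X^2}\le \sqrt{\alpha}\,K\sqrt{2}
$$
from Cauchy--Schwarz and Lemma \ref{lemma:sub-gaussian}(i) with $b=2$. It then suffices that $\sqrt{2}\,\sqrt{\alpha}\le 2\alpha\sqrt{\log 2}$, i.e.\ $\alpha\ge 1/(2\log 2)\approx 0.72$, which holds for $\alpha\ge 2/e\approx 0.74$.

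\textbf{Main obstacle.} The only delicate point is the constant bookkeeping in this last case. If the margin fails, the fallback is to choose $t=K\sqrt{\log(2/\alpha)}$ anyway and integrate the tail exactly via the Mills ratio, $\int_t^\infty e^{-s^2/K^2}\,ds\le \tfrac{\sqrt{\pi}}{2}Ke^{-t^2/K^2}$. This gives the bound $\alpha t+\tfrac{\sqrt\pi}{2}K\alpha$, which is within $2K\alpha\sqrt{\log(2/\alpha)}$ whenever $\sqrt{\log(2/\alpha)}\ge\sqrt\pi/2$. Combining the two regimes covers all $\alpha\in(0,1]$.
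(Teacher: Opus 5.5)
Your argument is correct and is essentially the paper's proof: the same layer-cake bound split at $t_0=K\sqrt{\log(2/\alpha)}$, and the same $s/t_0$ trick, which gives the tail term $K\alpha/(2\sqrt{\log(2/\alpha)})$. The case split, together with the Cauchy--Schwarz and Mills-ratio fallbacks, is unnecessary: $K\alpha/(2\sqrt{\log(2/\alpha)})\le K\alpha\sqrt{\log(2/\alpha)}$ holds as soon as $\log(2/\alpha)\ge 1/2$, which is true for every $\alpha\le 1$ because $\log 2>1/2$, and this is how the paper concludes in one line.
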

\begin{proof}[Proof of Lemma \ref{lemma:bound:young}]
	For any $t_0 \ge 0$, by Lemma \ref{lemma:sub-gaussian}, we have
	$$
	\begin{aligned}
	|\bE(X \mI_A)| \le & \int_0^\infty \Pr(A \cap \{|X| > t\})  dt  \\
	\le &  \int_0^{t_0} \Pr(A) \, dt + \int_{t_0}^\infty \Pr(|X| > t)  dt \\
	\le & \alpha t_0 + 2\int_{t_0}^\infty \exp\left(-\frac{t^2}{K^2}\right)  dt. 		
	\end{aligned}
	$$
Note that
$$
\begin{aligned}
	\int_{t_0}^\infty \exp\left(-\frac{t^2}{K^2}\right)  dt \le &   \int_{t_0}^\infty \frac{t}{t_0} \exp\left(-\frac{t^2}{K^2}\right)dt \\
	 =& \frac{1}{t_0} \left[-\frac{K^2}{2}\exp\suit{-\frac{t^2}{K^2}}  \right]_{t_0}^\infty \\
	 =& \frac{K^2}{2t_0}\exp\suit{-\frac{t_0^2}{K^2}}.
\end{aligned}
$$

Setting $t_0 = K \sqrt{\log(2/\alpha)} \ge 0$ yields

$$
\begin{aligned}
	|\bE\suit{X \mI_A}| \le & K \alpha \sqrt{\log(2/\alpha)} +   \frac{K\alpha}{2\sqrt{\log (2/\alpha)} } \\
	=& K \alpha \sqrt{\log(2/\alpha)} \suit{1+\frac{1}{2\log(2/\alpha)} } \\
	\le &  2K \alpha \sqrt{\log(2/\alpha)},
	\end{aligned}
$$
The proof is then complete.
\end{proof}

\subsection{Proof of Theorem \ref{theorem:high:main}}
\begin{proof}[Proof of Theorem \ref{theorem:high:main}]
	We apply Theorem 1 of \cite{negahban2012unified} to establish the bound on $\|\widehat{\bolbeta}_n- \bolbeta^0\|_2$. To this end, we first verify the restricted strong convexity condition for our loss function $\mL_n$: there exist constants $c_1>0,c_2>0,c_3>0$ such that with probability at least $1-c_1\exp(-c_2 n)$,
	\begin{equation}\label{s:rsc}
	\delta \mL_n(\Delta) \ge 	c_3 \|\Delta\|_2^2, \quad \text{for all } \Delta\in \mathbb{C}, 
	\end{equation}
	where
	$$
\delta \mL_n (\Delta):= \mL_n(\bolbeta^0+\Delta) - \mL_n(\bolbeta^0) - \Delta^\top \nabla \mL_n(\bolbeta^0), 
	$$
	and 
	$$
	\mathbb{C}= \set{\Delta: \|\Delta\|_2\le 1, \|\Delta_{S^c}\|_1 \le 3 \|\Delta_S\|_1 }.
	$$ 	
	Here, $S$ is the support set of $\bolbeta^0$, i.e., $S = \set{j: \beta_j^0\ne 0 }$, and $S^c = \set{j: \beta_j^0= 0 }$.

Note that 
\begin{align*}
	\delta \mL_n (\Delta)= & \mL_n(\bolbeta^0+\Delta) - \mL_n(\bolbeta^0) - \Delta^\top \nabla \mL_n(\bolbeta^0) \\
	=& \frac{1}{n}\sum_{i=1}^n G\suit{\mbX_i^\top \suit{\bolbeta^0+ \Delta} } - \frac{1}{k}\sum_{i=1}^n \mbX_i^\top \suit{ \bolbeta^0 +\Delta}\mI\suit{Y_i>Y_{n-k,n} } \\
	 &- \frac{1}{n}\sum_{i=1}^n G\suit{\mbX_i^\top \bolbeta^0 } + \frac{1}{k}\sum_{i=1}^n \mbX_i^\top \bolbeta^0\mI\suit{Y_i>Y_{n-k,n} } \\
	 &- \frac{1}{n}\sum_{i=1}^n \mbX_i^\top \Delta g\suit{ \mbX_i^\top \bolbeta^0} + \frac{1}{k}\sum_{i=1}^n \mbX_i^\top \Delta \mI\suit{Y_i>Y_{n-k,n} } \\
	 =& \frac{1}{n}\sum_{i=1}^n G\suit{\mbX_i^\top \suit{\bolbeta^0+ \Delta} } - \frac{1}{n}\sum_{i=1}^n G\suit{\mbX_i^\top \bolbeta^0 } - \frac{1}{n}\sum_{i=1}^n \mbX_i^\top \Delta g\suit{ \mbX_i^\top \bolbeta^0} \\
	 =& G_n(\bolbeta^0+\Delta) - G_n(\bolbeta^0) -  \Delta^\top \nabla G_n(\bolbeta^0).
\end{align*}	
Applying a Taylor expansion to the function	$G_n$ at point $\bolbeta^0$, we have that 
$$
\delta \mL_n (\Delta) = \frac{1}{2n} \sum_{i=1}^n g^\prime\suit{ \mbX_i^\top \bolbeta^0 +t \mbX_i^\top \Delta } \suit{\mbX_i^\top \Delta}^2,
$$
for some constant $t\in [0,1]$. By equation (43) of \cite{negahban2012unified} (see Proposition 2 of \cite{negahban2013unifiedframeworkhighdimensionalanalysis} for details), there exist constants $\kappa_1>0, \kappa_2>0$ such that
$$
\delta \mL_n (\Delta) \ge \kappa_1 \|\Delta\|_2^2 -\kappa_2 \frac{\log p}{n} \|\Delta\|_1^2, \quad \text{for all} \ \|\Delta\|_2\le 1.
$$
On the cone $\mathbb{C}$, we have that 
$$
\|\Delta\|_1 \le 4 \|\Delta_{S}\|_1\le 4\sqrt{s} \|\Delta_S\|_2\le 4 \sqrt{s}\|\Delta\|_2. 
$$	
It follows that,
$$
\delta \mL_n (\Delta) \ge \suit{\kappa_1-16\kappa_2 \frac{s\log p}{n}} \|\Delta\|_2^2, \quad \text{for all} \ \Delta \in \mathbb{C}.
$$
Since $s\log p/n = o(1)$, we conclude that \eqref{s:rsc} holds.

By Lemma \ref{lemma:gradient:infinity}, we have that with probability greater than $1-n^{-c_1} -p^{-c_2}$,  
 $$
  \| \nabla \mL_n(\bolbeta^0)\|_{\infty} \lesssim \lambda_n.
  $$
Then, by Theorem 1 of \cite{negahban2012unified}, we have that with probability tending to 1,
$$
\|\widehat{\bolbeta}_n - \bolbeta^0\|_2 \lesssim \sqrt{s\lambda_n^2} \lesssim \sqrt{\frac{s\log p}{k}}.
$$ 
 The proof is then complete.

\end{proof}

\begin{lemma}\label{lemma:gradient:infinity}
	Assume the same conditions as in Theorem \ref{theorem:high:main}. Then, there exist constants $c_1>0, c_2>0, c_3>0$ such that for sufficiently large $n$, 
\begin{equation*}
	\Pr\suit{ \| \nabla \mL_n(\bolbeta^0)\|_{\infty} \ge c_1 \sqrt{\log p/k} }\le n^{-c_2} + p^{-c_3}.
\end{equation*} 
\end{lemma}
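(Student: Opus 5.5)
We need to bound $\|\nabla \mL_n(\bolbeta^0)\|_\infty$, where
$$
\nabla \mL_n(\bolbeta^0)=\frac1n\sum_{i=1}^n \mbX_i g(\mbX_i^\top\bolbeta^0)-\frac1k\sum_{i=1}^n \mbX_i\mI(U_i\le U_{k,n}).
$$
We decompose it coordinatewise into three pieces, as in the fixed-dimensional proof:
\begin{align*}
T_1&=\frac1n\sum_{i=1}^n \mbX_i g(\mbX_i^\top\bolbeta^0)-\bE\suit{\mbX g(\mbX^\top\bolbeta^0)},\\
T_2&=\bE\suit{\mbX g(\mbX^\top\bolbeta^0)}-\bE(\mbX\mid U\le k/n),\\
T_3&=\bE(\mbX\mid U\le k/n)-\frac1k\sum_{i=1}^n \mbX_i\mI(U_i\le U_{k,n}).
\end{align*}
Each piece will be bounded by a multiple of $\sqrt{\log p/k}$ with the stated probability, followed by a union bound over $j\le p$.

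\textbf{The bias term $T_2$.} This is handled exactly as $J_4$ in the proof of Theorem \ref{fix:theorem:normality}. Using \eqref{s:eq:bayes} and Assumption \ref{all:model:bias}, together with $|X_j|\le\kappa_X$ and $\bE|B(\mbX)|<\infty$, we get $\|T_2\|_\infty\lesssim A(y_n)$. The assumption $\sqrt{k}A(y_n)=o(\sqrt{\log p})$ then gives $\|T_2\|_\infty=o(\sqrt{\log p/k})$ deterministically.

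\textbf{The full-sample term $T_1$.} Here $X_j$ is bounded but $g(\mbX^\top\bolbeta^0)$ need not be, so we truncate. By Assumption \ref{condition:regular:x}(iii) and Lemma \ref{lemma:sub-gaussian}(ii), the event $\max_i|\mbX_i^\top\bolbeta^0|\le T\sqrt{\log n}$ holds with probability at least $1-2n\exp(-T^2\log n/(\kappa_uR)^2)\ge 1-n^{-c}$, since $T>\kappa_uR$. On this event each summand is bounded by $M=\kappa_X g(T\sqrt{\log n})$.

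We apply Bernstein's inequality (Lemma \ref{lemma:bernstein}) to the truncated, recentred variables. The variance is at most $\kappa_X^2\bE g^2\le\kappa_X^2R$ by Assumption \ref{high:condition:moment}(ii). The truncation bias is controlled by Cauchy--Schwarz: it is at most $\kappa_X\sqrt{R}\,n^{-c/2}$, which is negligible. This yields $|T_{1j}|\lesssim\sqrt{\log p/n}+M\log p/n$ with probability at least $1-p^{-c}$. By Assumption \ref{high:condition:moment}(iii), $M\sqrt{\log p/n}\to0$, so the second term is $o(\sqrt{\log p/n})$. Since $k\le n$, after the union bound $\|T_1\|_\infty\lesssim\sqrt{\log p/k}$.

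\textbf{The tail term $T_3$: the main obstacle.} The difficulty is the random threshold $U_{k,n}$, which makes the summands dependent. We split
$$
T_3=\suit{\bE(\mbX\mid U\le k/n)-\frac1k\sum_i\mbX_i\mI(U_i\le k/n)}+\frac1k\sum_i\mbX_i\suit{\mI(U_i\le k/n)-\mI(U_i\le U_{k,n})}.
$$

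For the first bracket the summands are i.i.d. and bounded by $\kappa_X$, with variance at most $\kappa_X^2 k/n$. Bernstein's inequality then gives a deviation of order $\sqrt{k\log p}$ for the sum, i.e.\ of order $\sqrt{\log p/k}$ after dividing by $k$, with probability at least $1-p^{-c}$.

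For the second bracket, each coordinate is bounded by
$$
\frac{\kappa_X}{k}\sum_i\abs{\mI(U_i\le k/n)-\mI(U_i\le U_{k,n})}=\frac{\kappa_X}{k}\abs{\#\{i:U_i\le k/n\}-k}.
$$
This bound is uniform in $j$, so no union bound is needed. Bernstein's inequality applied to the binomial count gives that this is $\lesssim\sqrt{\log n/k}$ with probability at least $1-n^{-c}$.

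This last bound is the delicate step. A crude bound yields $\sqrt{\log n/k}$ rather than $\sqrt{\log p/k}$, which is fine whenever $\log n\lesssim\log p$. If this fails, I would refine the argument by centring: write $\mbX_i\mI(\cdot)$ as $\bE(\mbX\mid U\le k/n)$ times the indicator plus a remainder. The count fluctuation multiplies the bounded vector $\bE(\mbX\mid U\approx k/n)$, and the centred remainder over the random window of width $O_P(\sqrt{k}/n)$ is controlled by Bernstein's inequality conditional on $U_{k,n}$, using a discretization of the threshold. Assumption \ref{fix:condition:empiricalprocess}-type smoothness would justify this step. In either case the count term contributes at most $\sqrt{\log p/k}$ for an appropriate choice of constants, with failure probability $n^{-c_2}$.

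Collecting the three bounds proves the lemma.
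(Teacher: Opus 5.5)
Your decomposition is the paper's: $T_1$, $T_2$ and the two brackets of $T_3$ are its $J_1$, $J_2$, $J_4$ and $J_3$. Your handling of $T_1$, $T_2$ and the deterministic-threshold bracket agrees with the paper. Your explicit accounting for the truncation bias in $T_1$ is more careful than the paper's, with one correction. Bound that bias using the per-observation tail $2n^{-T^2/(\kappa_uR)^2}$, which gives $o(n^{-1/2})\lesssim\sqrt{\log p/k}$. The union-bound rate $n^{-c/2}$ with $c=T^2/(\kappa_uR)^2-1$, which may be small, need not lie below $\sqrt{\log p/k}$.

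The gap is the threshold-replacement term $\kappa_X|N_n-k|/k$, where $N_n=\#\{i:U_i\le k/n\}\sim\mathrm{Bin}(n,k/n)$.

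\textbf{Where the primary argument fails.} You calibrate Bernstein's inequality to failure probability $n^{-c}$. That forces the level $\sqrt{\log n/k}$, which exceeds $c_1\sqrt{\log p/k}$ whenever $\log p=o(\log n)$. Theorem \ref{theorem:high:main} permits this regime; nothing in it forces $p$ to grow polynomially in $n$.

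\textbf{Why the fallback does not repair it.}
\begin{enumerate}
\item It invokes smoothness of the type in Assumption \ref{fix:condition:empiricalprocess}, which is not among the hypotheses of Theorem \ref{theorem:high:main}.
\item It is only sketched.
\item Its stated conclusion cannot hold. $(N_n-k)/\sqrt{k}$ is asymptotically standard normal, so $\Pr(|N_n-k|>c\sqrt{k\log p})$ decays only polynomially in $p$. When $\log p=o(\log n)$ this is far larger than $n^{-c_2}$. In coordinate $1$ ($X_1\equiv 1$) the count term is exactly $|N_n-k|/k$, so no rearrangement avoids this.
\end{enumerate}

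\textbf{The fix.} The lemma's bound already contains $p^{-c_3}$, and you accept failure probability $p^{-c}$ for $T_1$ and for the first bracket of $T_3$ anyway. So apply Lemma \ref{lemma:bernstein} directly to the centred indicators $\mI(U_i\le k/n)-k/n$, which are bounded by $1$ with total variance at most $k$, at level $t=c_1\sqrt{\log p/k}$:
\[
\Pr\Bigl(\Bigl|\frac{N_n}{k}-1\Bigr|>t\Bigr)\le 2\exp\Bigl(-\frac{kt^2/2}{1+t/3}\Bigr)\le 2p^{-c_1^2/4}
\]
for large $n$, because $t\to 0$ by $\log p=o(k)$. This is exactly the paper's treatment of $J_3$. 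The $n^{-c_2}$ term in the lemma is needed only for the truncation event in $T_1$. With this one-line replacement your proof is complete, and no refinement or additional smoothness assumption is required.
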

\begin{proof}[Proof of Lemma \ref{lemma:gradient:infinity}]
Throughout the proof, the constants $c_1,c_2, c_3$ may differ from line to line.
By the triangle inequality, we have that  
\begin{align*}
	 \| \nabla \mL_n(\bolbeta^0)\|_{\infty}  = & \|\frac{1}{n}\sum_{i=1}^n  \mbX_i  g\suit{\mbX_i^\top \bolbeta^0}   - \frac{1}{k}\sum_{i=1}^n  \mbX_i \mI\suit{Y_i>Y_{n-k,n} }      \|_{\infty} \\
	 \le &  \|\frac{1}{n}\sum_{i=1}^n  \set{\mbX_i  g\suit{\mbX_i^\top \bolbeta^0}-\bE\suit{\mbX g\suit{\mbX^\top \bolbeta^0} }   }  \|_{\infty} \\
	 &+ \| \bE\suit{\mbX g\suit{\mbX^\top \bolbeta^0} }  -\bE \suit{\mbX|Y\ge F_Y^{-1}(1-k/n)}   \|_{\infty} \\
	 &+ \| \frac{1}{k} \sum_{i=1}^n \mbX_i \mI\suit{Y_i>Y_{n-k,n} }  - \frac{1}{k} \sum_{i=1}^n \mbX_i \mI\suit{Y_i\ge y_n}       \|_{\infty}\\
	 &+\| \frac{1}{k} \sum_{i=1}^n \mbX_i \mI\suit{Y_i\ge y_n} -  \bE (\mbX|Y\ge y_n)     \|_{\infty} \\
	 =&: J_1+J_2+ J_3+ J_4.
\end{align*}
  We complete the proof by showing that,   
\begin{align*}
	&\Pr\suit{J_1 > c_1 \sqrt{\log p/k}} \le     n^{-c_2}+  p^{-c_3}, \\
	&J_2 \le  c_1\sqrt{\log p/k},\\
	&\Pr\suit{J_3 > c_1\sqrt{\log p/k}} \le       p^{-c_3}, \\
	&\Pr\suit{J_4 > c_1\sqrt{\log p/k}} \le     p^{-c_3}.
\end{align*}

First, we handle $J_1$.  
Denote 
$$
V_{ij} = X_{ij}g\suit{\mbX_i^\top \bolbeta^0} - \bE X_{ij} g\suit{\mbX_i^\top \bolbeta^0}
$$
Define the event
$$
\mathcal{E}_T = \set{ \max_{1\le i\le n} | \mbX_i^\top \bolbeta^0| \le T\sqrt{\log n} },
$$
where $T$ is constant to be specified later. 
By Lemma \ref{lemma:sub-gaussian}, we have that 
\begin{align*}
	1- \Pr\suit{\mathcal{E}_T  }  \le  & n\max_{1\le i\le n} \Pr\suit{  | \mbX_i^\top \bolbeta^0| \ge T\sqrt{\log n}}\\
	\le &  n\exp \suit{ -\frac{T^2 \log n}{\kappa_u^2 \|\bolbeta^0\|_2^2 } } 
	\le \exp \suit{ -\frac{T^2 \log n}{\kappa_u^2 R^2 }+\log n }.
\end{align*}
Choose $T$ such that $T^2> \kappa_u^2 R^2$. Then, for some $c_2>0$, 
$$
 \Pr\suit{\mathcal{E}_T } \ge 1-n^{-c_2}.
$$

On the event $\mathcal{E}_T$, by Assumption \ref{condition:regular:x}(ii) and Assumption \ref{high:condition:moment}, we have that 
$$
V_{ij} \le R\suit{ g\suit{ T\sqrt{\log n} }+ |\bE g\suit{\mbX^\top\bolbeta^0 }| } \le C g\suit{ T\sqrt{\log n} } =:M_n.
$$
Then, by the Bernstein inequality (Lemma \ref{lemma:bernstein}), we have that on the event $\mathcal{E}_T$,
$$
\begin{aligned}
	\Pr\suit{ \max_{1\le j\le p}|\frac{1}{n}\sum_{i=1}^n V_{ij}|> t } \le 2p\max_{1\le j\le p} \exp\suit{-\frac{\frac{1}{2} nt^2}{ \bE V_{ij}^2 +\frac{1}{3} M_n t } }\le 2 \exp\suit{-\frac{\frac{1}{2} nt^2}{ C +\frac{1}{3} M_n t } +\log p }.
\end{aligned}
$$
Take $t = c_1 \sqrt{\log p/n}$. Then, 
$$
-\frac{\frac{1}{2} nt^2}{ C +\frac{1}{3} M_n t } +\log p = -\frac{\frac{1}{2}c_1^2 \log p}{C+ \frac{1}{3}M_n c_1\sqrt{\log p/n}} +\log p.
$$
By Assumption \ref{high:condition:moment}, we have that as $n\to\infty$,
$$
M_n\sqrt{\log p/n} \to 0.  
$$
Thus, by choosing $c_1$ such that $\frac{1}{2}c_1^2>C$, we obtain, for some $c_3>0$, 
$$
\exp\suit{-\frac{\frac{1}{2} nt^2}{ C +\frac{1}{3} M_n t } +\log p } \le \exp(- c_3\log p) = p^{-c_3}. 
$$

For $J_{2}$, by Assumption \ref{all:model:bias} and \eqref{s:eq:bayes}, 
we have that 
$$
\begin{aligned}
J_2=	&\| \bE\suit{\mbX g\suit{\mbX^\top \bolbeta^0} }  -\bE (\mbX|Y\ge F_Y^{-1}(1-k/n)) \|_{\infty} \\
 =& 	\| \bE\suit{\mbX g\suit{\mbX^\top \bolbeta^0} }  -\int \mbx f_{\mbX}(\mbx|Y\ge F_Y^{-1}(1-k/n))d\mbx  \|_{\infty} \\
 =&\| \bE\suit{\mbX g\suit{\mbX^\top \bolbeta^0} }  -\int \mbx f_{\mbX}(\mbx)\frac{1-F_Y(y_n|\mbX=\mbx)}{1-F_Y(y_n)} d\mbx \|_{\infty} \\
 =& O(1) A(y_n) \| \int \mbx B(\mbx)   f_{\mbX}(\mbx) d\mbx  \|_{\infty} \\
 =&  o(1) \sqrt{\log p/k}.
 \end{aligned}
$$

Next, we handle $J_3$. Note that
\begin{align*}
	\abs{\frac{1}{k}\sum_{i=1}^n X_{ij} \suit{ \mI\suit{Y_i > Y_{n-k,n} }-  \mI\suit{Y_i \ge  y_n}}} \le &C\frac{1}{k} \sum_{i=1}^n \abs{ \mI\suit{Y_i \ge y_n}-\mI\suit{Y_i > Y_{n-k,n} }} \\
	=& C\abs{ \frac{1}{k} \sum_{i=1}^n \mI\suit{Y_i \ge y_n}-\frac{1}{k}\sum_{i=1}^n \mI\suit{Y_i > Y_{n-k,n} }   } \\
	=& C\abs{ \frac{1}{k} \sum_{i=1}^n \mI\suit{Y_i \ge y_n}- 1  }.
\end{align*}
 The first equality holds because the differences $\mI\suit{Y_i > Y_{n-k,n} }- \mI\suit{Y_i \ge y_n}$, $i=1,\dots,n$, are all of the same sign or equal to zero and therefore do not change sign across $i$.

 By the Bernstein inequality (Lemma \ref{lemma:bernstein}), we have that 
 $$
 \Pr\suit{\abs{ \frac{1}{k} \sum_{i=1}^n \mI\suit{Y_i \ge y_n}- 1 } > t} \le 2\exp\suit{ -\frac{k^2t^2}{ k+ \frac{1}{3}kt} }.
 $$
 Taking $t = c_1\sqrt{\log p/k}$ gives
$$
\frac{k^2t^2}{ k+ \frac{1}{3}kt} \ge c_3 \log p.
$$
Thus, 
$$
\Pr(J_3>c_1\sqrt{\log p/k}) \le p^{-c_3}.
$$

Finally, we handle $J_4$. Denote
$$
A_{ij} = X_{ij} \mI\suit{Y_i\ge y_n} -\bE(X_{j}\mI\suit{Y\ge y_n})
$$ 
Note that $A_{ij} \le 2\kappa_X$ and  
$$
\begin{aligned}
	\bE A_{ij}^2 \le  & \bE X_{ij}^2 \mI(Y_i\ge y_n)\le \kappa_X^2k/n.
\end{aligned}
$$
 By the Bernstein inequality (Lemma \ref{lemma:bernstein}), we have that for any $t>0$, 
 \begin{align*}
\Pr\suit{\abs{\frac{1}{k}\sum_{i=1}^n A_{ij}}> t } =&  \Pr\suit{\abs{\sum_{i=1}^n A_{ij}}> kt } \\
\le & 2\exp\suit{ -\frac{\frac{1}{2}k^2t^2 }{ \kappa_X^2k + \frac{2}{3}\kappa_Xkt } } =2\exp\suit{-k \frac{t^2}{2\kappa_X^2 +\frac{4}{3}\kappa_Xt}  }.
\end{align*}
Since this holds for each $j=1,\dots, p$, we have that 
\begin{align*}
	\Pr\suit{ J_4> t } =& \Pr\suit{\max_{1\le j\le p}\abs{\frac{1}{k}\sum_{i=1}^n A_{ij}}> t } \\
	\le & 2p \exp\suit{-k \frac{t^2}{2\kappa_X^2 +\frac{4}{3}\kappa_Xt}  }.
\end{align*}
Taking $t = c_1\sqrt{\log p/k}$ with $c_1>3\kappa_X^2$ gives 
$$
	\Pr\suit{ J_4> c_1\sqrt{\log p/k} } \le p^{-c_3}. 
$$
The proof is then complete.

\end{proof}

\section{Proof of Theorem \ref{theorem:normality}}

First, we show that the optimization problem is non-empty with probability tending to 1. 
\begin{lemma}\label{lemma:exist}
Assume the same conditions as in Theorem \ref{theorem:normality}. Then, with probability tending to 1, we have the following results.  
\begin{itemize}
	\item[(i)] The matrix $ \boldsymbol{\Gamma}_n$ is invertible, where $\boldsymbol{\Gamma}_n = \bE \suit{\mbX \mbX^\top g^\prime(\mbX^\top \widehat{\bolbeta}_n )|\widehat{\bolbeta}_n }.$
	\item[(ii)] Let $u_j^*$ denote the $j$-th column of $ \boldsymbol{\Gamma}_n^{-1}$. The vector $u_j^*$ satisfies the constraints \eqref{bias:basis} and \eqref{bias:max}. 
\end{itemize}
\end{lemma}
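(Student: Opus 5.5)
For part (i), I would work conditionally on $\mathcal{D}_2$, so that $\widehat{\bolbeta}_n$ is a fixed vector independent of $\mathcal{D}_1$. By Theorem \ref{theorem:high:main}, with probability tending to one $\|\widehat{\bolbeta}_n-\bolbeta^0\|_2\lesssim\sqrt{s\log p/k}=o(1)$. Hence $\|\widehat{\bolbeta}_n\|_2\le 2R$ for large $n$.

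I would then apply Lemma \ref{lemma:hessian:population} with $R$ replaced by $2R$. This gives $\lambda_{\min}(\boldsymbol{\Gamma}_n)\ge C>0$, so $\boldsymbol{\Gamma}_n$ is invertible. The lemma is stated for $\|\Delta\|_2\le1$, but the quadratic form is homogeneous, so this suffices. The same bound gives $\|\boldsymbol{\Gamma}_n^{-1}\|_2\le C^{-1}$, and hence $\|u_j^*\|_2\le C^{-1}$.

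For constraint \eqref{bias:max}, I would use that $u_j^*$ is independent of $\mathcal{D}_1$ given $\mathcal{D}_2$. By Assumption \ref{condition:regular:x}(iii), each $\mbX_i^\top u_j^*$ is sub-Gaussian with parameter at most $\kappa_u C^{-1}$. Lemma \ref{lemma:sub-gaussian}(ii) and a union bound over $i\le n$ then give
\[
\Pr\Bigl(\max_{i}|\mbX_i^\top u_j^*|>c\sqrt{\log n}\Bigr)\le 2n^{1-c^2C^2/\kappa_u^2}\to0
\]
for $c$ large. This matches $\lambda_{2,n}\asymp\sqrt{\log n}$ once the constant in Assumption \ref{condition:choice:weight} is large enough.

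For constraint \eqref{bias:basis}, I would note that $\boldsymbol{\Gamma}_n u_j^*=e_j$. It therefore suffices to bound
\[
\max_{l}\Bigl|\frac1n\sum_{i=1}^n\bigl\{X_{il}\,\mbX_i^\top u_j^*\, g'(\mbX_i^\top\widehat{\bolbeta}_n)-\bE(\cdot\mid\widehat{\bolbeta}_n)\bigr\}\Bigr|
\]
by $\lambda_{1,n}\asymp\sqrt{\log p/k}$. These are centered i.i.d. summands given $\mathcal{D}_2$. I would truncate on the event where both $\max_i|\mbX_i^\top\widehat{\bolbeta}_n|\le T\sqrt{\log n}$ and $\max_i|\mbX_i^\top u_j^*|\le c\sqrt{\log n}$. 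Both events have probability at least $1-n^{-c}$, by the same sub-Gaussian argument as in the proof of Lemma \ref{lemma:gradient:infinity}. On this event the summands are bounded by $M_n=\kappa_X c\sqrt{\log n}\,g'(T\sqrt{\log n})$. Their variance is bounded by $\kappa_X^2\bE[(\mbX^\top u_j^*)^2 g'(\mbX^\top\widehat{\bolbeta}_n)^2]$, which I would control by Hölder using sub-Gaussian moments and the moment bound on $g'$ in Assumption \ref{high:condition:moment}(ii), transferred from $\bolbeta^0$ to $\widehat{\bolbeta}_n$.

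With these bounds, Bernstein's inequality (Lemma \ref{lemma:bernstein}) and a union bound over $l\le p$ give a deviation of order $\sqrt{\log p/n}+M_n\log p/n$. This is $o(\sqrt{\log p/k})$ because $k/n\to0$ and $g'\le g/C$ by Assumption \ref{high:condition:moment:second}. Assumption \ref{high:condition:moment}(iii) then makes $M_n\sqrt{\log p/n}\,\sqrt{\log n}$ small enough; if the extra $\sqrt{\log n}$ factor is an issue, the slack from $n\gg k$ absorbs it.

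The main obstacle is the variance and moment control of $g'(\mbX^\top\widehat{\bolbeta}_n)$ at a random, data-dependent parameter. Assumption \ref{high:condition:moment}(ii) only controls moments at $\bolbeta^0$. I would first try monotonicity of $g'$ together with $|\mbX^\top(\widehat{\bolbeta}_n-\bolbeta^0)|\le\kappa_u$-sub-Gaussian$\cdot o(1)$, handled through the $\varepsilon$-neighbourhood moment condition of Assumption \ref{high:condition:moment:second}. If that proves insufficient, I would fall back on a cruder truncation bound using $g'\le g/C$ and $\bE g^{1+\delta}\le R$.
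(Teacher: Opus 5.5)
Your proposal is correct and follows essentially the same route as the paper. Part (i) comes from $\|\widehat{\bolbeta}_n\|_2=O(1)$ together with Lemma~\ref{lemma:hessian:population}. For part (ii), \eqref{bias:basis} comes from $\boldsymbol{\Gamma}_n u_j^*=e_j$ plus a $J_1$-type truncated Bernstein bound, and \eqref{bias:max} from $\|u_j^*\|_2\le\lambda_{\min}(\boldsymbol{\Gamma}_n)^{-1}=O(1)$ with a sub-Gaussian union bound.

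The paper only says the concentration step is ``similar to the handling of $J_1$'' and does not discuss the moment transfer to $\widehat{\bolbeta}_n$ that you flag as the main obstacle. A clean way to do it is to note that $g\ge Cg'$ gives $g(x+\varepsilon)\le e^{\varepsilon/C}g(x)$. Combined with $\sup_{\mbx\in\mathcal{X}}|\mbx^\top(\widehat{\bolbeta}_n-\bolbeta^0)|\le\kappa_X\|\widehat{\bolbeta}_n-\bolbeta^0\|_1\to0$, this yields $\bE\{g'(\mbX^\top\widehat{\bolbeta}_n)^2\mid\widehat{\bolbeta}_n\}\lesssim\bE g^2(\mbX^\top\bolbeta^0)\le R$ directly.
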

\begin{proof}[Proof of Lemma \ref{lemma:exist}]

By Theorem \ref{theorem:high:main}, we have that with probability tending to 1, $\| \widehat{\bolbeta}_n\|_2 \le R$, for some $R>0$. Thus, by Lemma \ref{lemma:hessian:population}, we have that  
  for some 
$\delta>0$, 
\begin{equation*}
		\omega^\top \boldsymbol{\Gamma}_n \omega = \bE \suit{ g^\prime(\mbX^\top \widehat{\bolbeta}_n ) (\mbX^\top \omega)^2|\widehat{\bolbeta}_n } >\delta, 
	\end{equation*}
	 for any unit vector $\omega \in \mathbb{R}^p$. Thus, (i) holds. 
 
 For (ii), note that, conditional on $\widehat{\bolbeta}_n$, the random variables $g^\prime(\mbX_i^\top \widehat{\bolbeta}_n )(u_j^*)^\top \mbX_i\mbX_i^\top e_k $, $i=1,\dots,n$ are independent with mean $\delta_{jk}=1$ if $j=k$ and $\delta_{jk}=0$ otherwise. Similar to the handling of $J_1$ in the proof of Lemma \ref{lemma:gradient:infinity}, we can show that, $u_j^*$ satisfies the constraint \eqref{bias:basis}. 
 By (i), we have that $\|\boldsymbol{\Gamma}_n^{-1}\|_2 = O(1)$ and hence $\|u_j^*\|_2=O(1)$. Then, by the sub-Gaussianity of $\mbX$ (Assumption \ref{condition:regular:x} and Lemma \ref{lemma:sub-gaussian}), we have that 
  $u_j^*$ satisfies the constraint \eqref{bias:max}.
 The proof is then complete. 

\end{proof}

\begin{proof}[Proof of Theorem \ref{theorem:normality}]
	Write 
\begin{align*}
		\widetilde{\beta}_j - \beta_j^0  
	=&	\widehat{\beta}_j-\beta_j^0 - \widehat{\mbu}_j^\top \set{ \frac{1}{n}\sum_{i=1}^n \mbX_i g(\mbX_i^\top \widehat{\bolbeta}_n) - \frac{1}{k}\sum_{i=1}^n \mbX_i \mI(Y_i>Y_{n-k,n})}\\
	=&  - \widehat{\mbu}_j^\top \set{ \frac{1}{n}\sum_{i=1}^n \mbX_i g(\mbX_i^\top \bolbeta^0) - \frac{1}{k}\sum_{i=1}^n \mbX_i \mI(U_i\le U_{k,n})} \\
	&- \suit{ \frac{\widehat{\mbu}_j^\top}{n}\sum_{i=1}^n \mbX_i\mbX_i^\top g^\prime(\mbX_i^\top \widehat{\bolbeta}_n) -e_j  }\suit{\widehat{\bolbeta}_n - \bolbeta^0 } \\
	&-\widehat{\mbu}_j^\top \set{\frac{1}{n}\sum_{i=1}^n\mbX_i g(\mbX_i^\top \widehat{\bolbeta}_n ) - \frac{1}{n}\sum_{i=1}^n\mbX_i g(\mbX_i^\top \bolbeta^0 )-  \frac{1}{n}\sum_{i=1}^n \mbX_i\mbX_i^\top g^\prime(\mbX_i^\top \widehat{\bolbeta}_n)\suit{\widehat{\bolbeta}_n - \bolbeta^0 } } \\
	=&: I_1 +I_2 +I_3.
\end{align*}

We intend to show that, as $n\to\infty$, $I_2 = o_P(1/\sqrt{k})$ and $I_3=o_P(1/\sqrt{k})$. We start with $I_2$. By \eqref{bias:basis} and Assumption \ref{condition:choice:weight}, we have that 
$$
I_2 \lesssim \sqrt{\log p/k} \|\widehat{\bolbeta}_n - \bolbeta^0\|_1.
$$
By Corollary 1 of \cite{negahban2012unified}, whose conditions are verified in the proof of Theorem \ref{theorem:high:main}, we have that with probability tending to 1,  
\begin{equation}\label{s:hat:l1:norm}
	\| \widehat{\bolbeta}_n- \bolbeta^0 \|_1 \lesssim \frac{s\sqrt{\log p}}{\sqrt{ k}}.
\end{equation}
Consequently, as $n\to\infty$, with probability tending to one,
$$
 I_2 \lesssim \frac{ s\log p}{k} = o(k^{-1/2}).
$$

Next, we handle $I_3$. By the mean-value theorem, we have that 
\begin{align*}
	I_3 =& -\widehat{\mbu}_j^\top \set{\frac{1}{n}\sum_{i=1}^n\mbX_i g(\mbX_i^\top \widehat{\bolbeta}_n ) - \frac{1}{n}\sum_{i=1}^n\mbX_i g(\mbX_i^\top \bolbeta^0 )-  \frac{1}{n}\sum_{i=1}^n \mbX_i\mbX_i^\top g^\prime(\mbX_i^\top  \bolbeta^0)\suit{\widehat{\bolbeta}_n - \bolbeta^0 } } \\
	&-\widehat{\mbu}_j^\top  \frac{1}{n}\sum_{i=1}^n \mbX_i\mbX_i^\top \suit{g^\prime(\mbX_i^\top  \widehat{\bolbeta}_n) - g^\prime(\mbX_i^\top  \bolbeta^0)}\suit{\widehat{\bolbeta}_n - \bolbeta^0 } \\
	=& -  \frac{1}{2n}\sum_{i=1}^n \widehat{\mbu}_j^\top \mbX_i g^{\prime\prime}\suit{\mbX_i^\top  \bolbeta^0+t_{1i}\mbX_i^\top  \suit{\widehat{\bolbeta}_n-\bolbeta^0 }} \suit{\mbX_i^\top\suit{\widehat{\bolbeta}_n-\bolbeta^0 }  }^2 \\
	&  -  \frac{1}{n}\sum_{i=1}^n \widehat{\mbu}_j^\top  \mbX_i g^{\prime\prime}\suit{\mbX_i^\top  \bolbeta^0+t_{2i}\mbX_i^\top  \suit{\widehat{\bolbeta}_n-\bolbeta^0 }} \suit{\mbX_i^\top\suit{\widehat{\bolbeta}_n-\bolbeta^0 }  }^2 \\
	=&: I_{31}+I_{32},
\end{align*}
where $t_{1i},t_{2i}\in [0,1]$.

By \eqref{bias:max} and H\"older's inequality, we have that for any $\delta>0$, 
$$
I_{31} = O_P(1) \sqrt{\log n} \suit{\frac{1}{n}\sum_{i=1}^n |\Delta_i|^{1+\delta}}^{1/(1+\delta)} \set{\frac{1}{n}\sum_{i=1}^n \suit{\mbX_i^\top\suit{\widehat{\bolbeta}_n-\bolbeta^0 } }^{2(1+1/\delta) }}^{\delta/(1+\delta)},
$$
where 
$$
\Delta_i = g^{\prime\prime}\suit{\mbX_i^\top \bolbeta^0+t_{1i}\mbX_i^\top \suit{\widehat{\bolbeta}_n-\bolbeta^0 }}. 
$$
By Assumption \ref{condition:regular:x} and Theorem \ref{theorem:high:main}, we have that with probability tending to 1, 
$$
\set{\frac{1}{n}\sum_{i=1}^n \suit{\mbX_i^\top\suit{\widehat{\bolbeta}_n-\bolbeta^0 } }^{2(1+1/\delta) }}^{\delta/(1+\delta)} \lesssim \frac{s\log p}{k}.
$$
By Assumption \ref{condition:regular:x}(ii) and \eqref{s:hat:l1:norm}, we have that 
$$
\abs{t_{1i}\mbX_i^\top \suit{\widehat{\bolbeta}_n-\bolbeta^0 }} \le \kappa_X \| \widehat{\bolbeta}_n-\bolbeta^0\|_1 \lesssim \suit{\frac{s^2\log p}{k}}^{1/2} \to 0.
$$
Then, by Assumption \ref{high:condition:moment:second} and Markov inequality, we have that 
with probability tending to 1, 
$$
\suit{\frac{1}{n}\sum_{i=1}^n |\Delta_i|^{1+\delta}}^{1/(1+\delta)} \lesssim 1. 
$$
Thus, we conclude that, with probability tending to 1, 
$$
I_{31} \lesssim \sqrt{\log n} \frac{s\log p}{k} = o(k^{-1/2}). 
$$
Similarly, we have that with probability tending to 1, 
 $I_{32} =o(k^{-1/2})$. Thus, we conclude that, as $n\to\infty$, with probability tending to 1, 
$$
I_3 = o(k^{-1/2}). 
$$  

Combining the limits of $I_2$ and $I_3$, we have that with probability tending to 1,
$$
\widetilde{\beta}_j - \beta_j^0 = - \widehat{\mbu}_j^\top \set{ \frac{1}{n}\sum_{i=1}^n \mbX_i g(\mbX_i^\top \bolbeta^0) - \frac{1}{k}\sum_{i=1}^n \mbX_i \mI(Y_i>Y_{n-k,n})}+o(k^{-1/2}).
$$
By Lemma \ref{lemma:normality} below, we then complete the proof.

\end{proof}

\begin{lemma}\label{lemma:normality}
Assume the same conditions as in Theorem \ref{theorem:normality}. Then, as $n\to\infty$, 
$$
\frac{\sqrt{k}}{\widehat v_j} \set{\frac{1}{k}\sum_{i=1}^n \widehat{\mbu}_j^\top\mbX_i \mI(U_i\le U_{k,n}) - \frac{1}{n}\sum_{i=1}^n \widehat{\mbu}_j^\top\mbX_i g(\mbX_i^\top \bolbeta^0) } \stackrel{d}{\to} N(0, 1).
$$

\end{lemma}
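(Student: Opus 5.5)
Throughout, I condition on the splitting sample $\mathcal{D}_2$, so that $\widehat{\bolbeta}_n$ is fixed. The vector $\widehat{\mbu}_j$, however, is computed from $\mathcal{D}_1$, the same sample that enters the statistic. It therefore cannot be treated as fixed.

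\textbf{Reduction to a fixed direction.} First I replace $\widehat{\mbu}_j$ by a deterministic (given $\mathcal{D}_2$) direction. A natural choice is the population analogue $\mbu_j^\circ$: the minimizer of $\mbu^\top\boldsymbol{\Sigma}(\widehat{\bolbeta}_n)\mbu$ subject to $\boldsymbol{\Gamma}_n\mbu=e_j$. Here $\boldsymbol{\Gamma}_n$ is the conditional Hessian of Lemma \ref{lemma:exist}, and $\boldsymbol{\Sigma}(\bolbeta)$ is the population version of $\boldsymbol{\Sigma}_n(\bolbeta)$. Lemma \ref{lemma:exist} shows that $\boldsymbol{\Gamma}_n$ is invertible, so this problem is well posed. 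The difference term is
$$
(\widehat{\mbu}_j-\mbu_j^\circ)^\top \boldsymbol{Z}_n,\qquad \boldsymbol{Z}_n:=\frac{1}{k}\sum_{i=1}^n \mbX_i \mI(U_i\le U_{k,n}) - \frac{1}{n}\sum_{i=1}^n \mbX_i g(\mbX_i^\top \bolbeta^0).
$$
Lemma \ref{lemma:gradient:infinity} gives $\|\boldsymbol{Z}_n\|_\infty=O_P(\sqrt{\log p/k})$, so it suffices to show $\|\widehat{\mbu}_j-\mbu_j^\circ\|_1\sqrt{\log p}\to0$. An $\ell_1$ bound of this kind needs sparsity of $\mbu_j^\circ$, which is not assumed. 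I would therefore argue more softly instead. Both $\widehat{\mbu}_j$ and $\mbu_j^\circ$ satisfy the constraint \eqref{bias:basis} up to $\lambda_{1,n}$. Using the constraint, I write $\boldsymbol{Z}_n$ as a term in the span of the Hessian columns plus a remainder that is asymptotically negligible. For the variance, optimality of $\widehat{\mbu}_j$ together with feasibility of $u_j^*$ gives $\widehat v_j^2\le (u_j^*)^\top\boldsymbol{\Sigma}_n u_j^*=O_P(1)$.

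\textbf{Fallback: conditional CLT directly in $\widehat{\mbu}_j$.} If the reduction step is too delicate, I would bypass it and prove the result directly with $\widehat{\mbu}_j$, via a uniform (over the constraint set) triangular-array CLT. The steps are as follows.
\begin{enumerate}[label=(\roman*)]
\item Replace the random threshold $U_{k,n}$ by $k/n$ plus a correction, as in the proof of Theorem \ref{fix:theorem:normality} via \cite{aghbalou2024tail}. The random threshold contributes a term $\mbu^\top\bE(\mbX\mid U\le k/n)\cdot(nU_{k,n}/k-1)$. This produces the centering $-\bE(\mbu^\top\mbX g)\,\bE(\mbX^\top g)\mbu$ in the variance, and hence the rank-one correction in $\boldsymbol{\Sigma}_n$.
\item Write the leading term as a sum of independent summands
$$
\xi_i=\mbu^\top\mbX_i\set{\tfrac{n}{k}\mI(U_i\le k/n)-g(\mbX_i^\top\bolbeta^0)}-\mbu^\top\bE(\mbX g)\set{\tfrac{n}{k}\mI(U_i\le k/n)-1}.
$$
\item Compute the variance using Assumption \ref{all:model:bias}. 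Since $\Pr(U\le k/n\mid\mbX)\approx (k/n)g(\mbX^\top\bolbeta^0)$, one gets $(k/n)\mathrm{Var}(\xi_i)\to \mbu^\top\boldsymbol{\Sigma}(\bolbeta^0)\mbu$. Assumption \ref{condition:bias:gradient} controls the dependence on $\mbx$ of the conditional tail density at level $k/n$.
\item Verify the Lyapunov condition. This uses \eqref{bias:max}, $|\mbX_i^\top\mbu|\le\lambda_{2,n}\asymp\sqrt{\log n}$, which gives a Lyapunov ratio of order $\sqrt{\log n}/\sqrt{k}\to0$.
\end{enumerate}

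The dependence between $\widehat{\mbu}_j$ and the data I would handle by proving Lindeberg uniformly. The constraint set depends on $\mbX_{1:n}$ only, not on the $Y_i$. So, conditional on all covariates in $\mathcal{D}_1$, the vector $\widehat{\mbu}_j$ is fixed, except that $\boldsymbol{\Sigma}_n(\widehat{\bolbeta}_n)$ is also a function of the covariates only. Conditionally on $\mbX_{1:n}$, the indicators $\mI(U_i\le k/n)$ are independent Bernoulli variables with probabilities $\Pr(U\le k/n\mid\mbX_i)$. A conditional Lindeberg CLT in $i$ then applies, with the $g$-term absorbed into the conditional centering up to $o_P(k^{-1/2})$. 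This conditioning trick is the crux, and I would pursue it first.

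\textbf{Variance consistency and conclusion.} With the CLT in hand, it remains to match the conditional variance to $\widehat v_j^2$. This requires $|\widehat{\mbu}_j^\top(\boldsymbol{\Sigma}_n(\widehat{\bolbeta}_n)-\boldsymbol{\Sigma}_n(\bolbeta^0))\widehat{\mbu}_j|\to0$. I would show it by a Taylor expansion in $\bolbeta$, bounding $g'$ via $g\ge Cg'$ (Assumption \ref{high:condition:moment:second}) and using $|\mbX_i^\top\widehat{\mbu}_j|\le\lambda_{2,n}$. A lower bound $\widehat v_j^2\gtrsim1$ is also needed, and follows from the constraint \eqref{bias:basis} forcing $\widehat{\mbu}_j$ away from the kernel of $\boldsymbol{\Sigma}_n$ restricted to $j\ge2$. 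Slutsky's lemma then yields the claim.

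The main obstacle is the random-threshold correction under a data-dependent direction. One needs $nU_{k,n}/k-1=O_P(k^{-1/2})$ to interact correctly with $\widehat{\mbu}_j$, and the second-order error to be $o_P(1)$ after multiplication by $\sqrt{\log n}$. This is exactly where $A_2(k/n)\log n\to0$ enters.
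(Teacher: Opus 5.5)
Your fallback is the paper's route. The paper conditions on $\mathcal{F}_n=\sigma\bigl((\mbX_i,Y_i)_{i=n+1}^{2n},(\mbX_i)_{i=1}^{n}\bigr)$, so that $\widehat{\mbu}_j$ and $Z_i=\widehat{\mbu}_j^\top\mbX_i$ are fixed and the $U_i$ are conditionally independent. Under this conditioning the reduction to $\mbu_j^\circ$ is unnecessary, and your ``main obstacle'' becomes the usual random-threshold problem for a fixed direction.

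The gap is in the execution. Steps (ii)--(iii) use population objects: $\mbu^\top\bE(\mbX g)$, $\mbu^\top\bE(\mbX\mid U\le k/n)$ and $\mbu^\top\boldsymbol{\Sigma}(\bolbeta^0)\mbu$. For $\mbu=\widehat{\mbu}_j$ these are not controlled. The program, the statistic and $\widehat v_j$ all depend on $\mbu$ only through $(\mbX_i^\top\mbu)_{i\le n}$. For $p>n$ the part of $\widehat{\mbu}_j$ orthogonal to $\mbX_1,\dots,\mbX_n$ is not even determined, and, as you note, there is no $\ell_1$ control.

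Concretely, write $s_n=nU_{k,n}/k$ and $g_i=g(\mbX_i^\top\bolbeta^0)$. Under the conditioning, Assumption \ref{condition:bias:gradient} and $|Z_i|\le\lambda_{2,n}$ give the threshold correction as $\sqrt{k}(s_n-1)\, n^{-1}\sum_i Z_i g_i+o_P(1)$. If you replace this coefficient by $\widehat{\mbu}_j^\top\bE(\mbX g)$, you add $\widehat{\mbu}_j^\top\{n^{-1}\sum_i\mbX_i g_i-\bE(\mbX g)\}\cdot O_P(1)$, which need not vanish. For the same reason $\widehat{\mbu}_j^\top\boldsymbol{\Sigma}(\bolbeta^0)\widehat{\mbu}_j$ cannot be matched to $\widehat v_j^2$. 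Everything must stay empirical, in the $Z_i$. The conditional variance is then $n^{-1}\sum_i Z_i^2 g_i+\bar g\,\bar m^2-2\bar m^2=\widehat{\mbu}_j^\top\boldsymbol{\Sigma}_n(\bolbeta^0)\widehat{\mbu}_j+o_P(1)$, with $\bar m=n^{-1}\sum_i Z_i g_i$ and $\bar g=n^{-1}\sum_i g_i$.

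Step (i) also silently needs $\widetilde\Gamma_n(s_n)-\widetilde\Gamma_n(1)=o_P(1)$ conditionally, where $\widetilde\Gamma_n(u)=k^{-1/2}\sum_i Z_i\{\mI(U_i\le ku/n)-\Pr(U_i\le ku/n\mid\mbX_i)\}$. This does not follow from \cite{aghbalou2024tail} as used in Theorem \ref{fix:theorem:normality}, which is unconditional and for a fixed finite-dimensional class. The crude bound also fails: there are $O_P(\sqrt{k})$ indices between $k/n$ and $U_{k,n}$, each weighted by $|Z_i|\lesssim\sqrt{\log n}$, giving only $O_P(\sqrt{\log n})$. The paper obtains the equicontinuity from conditional tightness of $\widetilde\Gamma_n$ via the bracketing CLT of \cite{einmahl2021empirical}, with bracket variance at most $c_0\varepsilon^2$ and envelope $\lesssim\sqrt{\log n/k}$. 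It gets $\sqrt{k}(s_n-1)=-W_{2,n}(1)+o_P(1)$ from Vervaat's lemma. Your single-sum, Bahadur-type linearization can replace the paper's joint process limit, but only after this equicontinuity is proved.

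Finally, the lower bound on $\widehat v_j$ needs a real argument. The paper writes $\widehat v_j^2=\bar g_n\mathrm{Var}_{\boldsymbol{\pi}}(Z)+\bar g_n(1-\bar g_n)\bE_{\boldsymbol{\pi}}^2 Z$, with $\pi_i\propto g(\mbX_i^\top\widehat{\bolbeta}_n)$ and $m_n=\bE_{\boldsymbol{\pi}}Z$. Suppose $\mathrm{Var}_{\boldsymbol{\pi}}(Z)\to 0$ along a subsequence. Then $g\ge Cg'$ gives $n^{-1}\sum_i (Z_i-m_n)^2 g'(\mbX_i^\top\widehat{\bolbeta}_n)\to 0$; this is where the paper uses that assumption. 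Constraint \eqref{bias:basis} then reduces to $\|m_n n^{-1}\sum_i \mbX_i g'(\mbX_i^\top\widehat{\bolbeta}_n)-e_j\|_\infty=o_P(1)$. The intercept coordinate forces $m_n\to 0$, while the $j$-th coordinate ($j\ge 2$) forces $m_n\cdot O_P(1)\to 1$, a contradiction.
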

\begin{proof}[Proof of Lemma \ref{lemma:normality}]
	
Define 
$$
\widetilde{\Gamma}_n(u) = \sqrt{k}\suit{\widetilde{D}_n(u) -D_n(u) },$$
where
$$
  \widetilde{D}_n(u) = \frac{1}{k}\sum_{i=1}^n Z_i\mI\set{U_i\le ku/n}, \quad D_n(u)= \bE\suit{ \widetilde{D}_n(u)|\mathcal{F}_n},
$$
with 
$$
Z_i = \widehat{\mbu}_j^\top\mbX_i, \quad \mathcal{F}_n = \sigma \suit{(\mbX_i, Y_i)_{i=n+1}^{2n}, \suit{\mbX_i}_{i=1}^n }. 
$$
 Write
$$
\begin{aligned}
&\sqrt{k}  \set{\frac{1}{k}\sum_{i=1}^n  \widehat{\mbu}_j^\top\mbX_i \mI(U_i\le U_{k,n}) - \frac{1}{n}\sum_{i=1}^n  \widehat{\mbu}_j^\top\mbX_ig(\mbX_i^\top \bolbeta^0)  } \\
=& \sqrt{k}  \set{\frac{1}{k}\sum_{i=1}^n Z_i \mI(U_i\le U_{k,n}) - \frac{1}{n}\sum_{i=1}^n   Z_i g(\mbX_i^\top \bolbeta^0) } \\
=&	\widetilde{\Gamma}_n(  s_{n} ) +\sqrt{k}\suit{  D_n\suit{ s_n }-D_n(1) } + \sqrt{k}\suit{D_n(1)- \frac{1}{n}\sum_{i=1}^n   Z_i g(\mbX_i^\top \bolbeta^0)  } \\
=:& J_1+ J_2 +J_3,  
\end{aligned}
$$
where 
$$
s_{n} =\frac{n}{k}U_{k,n}.
$$

 By Lemma \ref{lemma:process} and Lemma \ref{lemma:order:stat}, we have that conditional on $\mathcal{F}_n,$ as $n\to\infty$, 
 $$
 J_1 = W_{1,n}(1)+o_P(1),
 $$
 where $W_{1,n}$ is defined in Lemma \ref{lemma:process}.

 Next, we handle $J_2$. Note that
$$
D_n(u) = \frac{1}{k}\sum_{i=1}^n Z_i \Pr(U_i\le ku/n|\mbX_i).
$$
By the mean-value theorem, we have that 
$$
\begin{aligned}
	\sqrt{k}(D_n(s_n)-D_n(1))= & \frac{1}{\sqrt{k}}\sum_{i=1}^n Z_i \suit{ \Pr(U_i\le ks_n/n|\mbX_i) - \Pr(U_i\le k/n|\mbX_i) } \\
	=& \frac{1}{\sqrt{k}}\sum_{i=1}^n Z_i \phi( \widetilde{U}_{in} |X_i)\frac{k}{n} \suit{s_n-1} \\
	=& \sqrt{k}(s_n-1) \frac{1}{n}\sum_{i=1}^nZ_i \phi( \widetilde{U}_{in} |\mbX_i)
\end{aligned}
$$
where $\widetilde{U}_{in}$ is between $k/n$ and $ks_n/n$. By Assumption \ref{condition:bias:gradient}, Lemma \ref{lemma:order:stat} and \eqref{bias:max}, we have that 
$$
\begin{aligned}
	J_2 =&\sqrt{k}(s_n-1) \frac{1}{n}\sum_{i=1}^nZ_i  \suit{\phi(0|\mbX_i) +B_2(\mbX_i)A_2(k/n)} \\
	=& -W_{2,n}(1) \frac{1}{n}\sum_{i=1}^nZ_i g(\mbX_i^\top \bolbeta^0) +o_p(1),
\end{aligned}
$$
 where $W_{2,n}$ is defined in Lemma \ref{lemma:process}. 
Here, the last equality follows from Assumption \ref{all:model:bias}, 
$$
\phi(0|\mbx) = \lim_{h\downarrow 0} \frac{\Pr(U\le h|\mbX=\mbx)}{h} = g(\mbx^\top \bolbeta^0). 
$$

Finally, we handle $J_3$. By Assumption \ref{all:model:bias} and \eqref{bias:max}, we have that 
	$$
	\begin{aligned}
	J_3=&\sqrt{k}\suit{ \frac{1}{k}\sum_{i=1}^n Z_i \bE\suit{\mI(U_i\le k/n)|\mbX_i}  - \frac{1}{n}\sum_{i=1}^n   Z_i g(\mbX_i^\top \bolbeta^0)  } \\
	=&\sqrt{k}\suit{  	\frac{1}{k}\sum_{i=1}^n Z_i \Pr\suit{U_i\le k/n|\mbX_i}   - \frac{1}{n}\sum_{i=1}^n   Z_i g(\mbX_i^\top \bolbeta^0)  } \\
	=&O_P(1)\sqrt{k}A(y_n)  \frac{1}{n}\sum_{i=1}^n Z_i B(\mbX_i) \\
	=& O_P(1) \sqrt{k} A(y_n) \sqrt{\log n}  = o(1). 
	\end{aligned}
	$$

 Combining the limits of $J_1, J_2$ and $J_3$, we conclude that, conditional on $\mathcal{F}_n$, as $n\to\infty$, 
 $$
 \begin{aligned}
 	 &\sqrt{k}  \set{\frac{1}{k}\sum_{i=1}^n  \widehat{\mbu}^\top\mbX_i \mI(Y_i>Y_{n-k,n}) - \frac{1}{n}\sum_{i=1}^n  \widehat{\mbu}^\top\mbX_i g(\mbX_i^\top \bolbeta^0)  }  \\
 	 =& W_{1,n}(1) - W_{2,n}(1) \frac{1}{n}\sum_{i=1}^nZ_i g(\mbX_i^\top \bolbeta^0) +o_P(1).
 \end{aligned}
 $$
 Note that 
 $$
 \begin{aligned}
 	&\text{Var}\suit{W_{1,n}(1) - W_{2,n}(1) \frac{1}{n}\sum_{i=1}^nZ_i g(\mbX_i^\top \bolbeta^0) }  \\
 	= & \frac{1}{n}\sum_{i=1}^n Z_i^2g(\mbX_i^\top\bolbeta^0 ) + \frac{1}{n}\sum_{i=1}^n g(\mbX_i^\top \bolbeta^0) \suit{\frac{1}{n}\sum_{i=1}^nZ_i g(\mbX_i^\top \bolbeta^0)}^2 - 2 \suit{\frac{1}{n}\sum_{i=1}^nZ_i g(\mbX_i^\top \bolbeta^0)}^2.
 \end{aligned}
 $$
 Since $ \frac{1}{n}\sum_{i=1}^n g(\mbX_i^\top \bolbeta^0) =1 +o(1)$, with probability tending to 1, we have that   
 $$
 \text{Var}\suit{W_{1,n}(1) - W_{2,n}(1) \frac{1}{n}\sum_{i=1}^nZ_i g(\mbX_i^\top \bolbeta^0) } = v_j^2 +o(1),
 $$
 where $v_j^2 = \widehat{\mbu}_j^\top \boldsymbol{\Sigma}_n(\bolbeta^0) \widehat{\mbu}_j $.
 Note that 
$$
\begin{aligned}
\widehat{v}_j^2 - v_j^2 = & \frac{1}{n}\sum_{i=1}^n Z_i^2 \suit{ g(\mbX_i^\top\widehat{\bolbeta}_n ) - g(\mbX_i^\top\bolbeta^0) } -\suit{ \frac{1}{n}\sum_{i=1}^n Z_i g(\mbX_i^\top\widehat{\bolbeta}_n ) }^2 + \suit{ \frac{1}{n}\sum_{i=1}^n Z_i g(\mbX_i^\top\bolbeta^0) }^2.
\end{aligned}
$$ 
Similar to the treatment of $I_3$ in the proof of Theorem \ref{theorem:normality}, we can show that, as $n\to\infty$, 
 $$
 \widehat{v}_j^2 - v_j^2=o_P(1).
 $$
  Thus, conditional on $\mathcal{F}_n$, 
 $$
 \frac{\sqrt{k}}{\widehat{v}_j} \set{\frac{1}{k}\sum_{i=1}^n \widehat{\mbu}_j^\top\mbX_i \mI(U_i\le U_{k,n}) - \frac{1}{n}\sum_{i=1}^n \widehat{\mbu}_j^\top\mbX_i g(\mbX_i^\top \bolbeta^0) } \stackrel{d}{\to} N(0, 1),
 $$
 provided that $\widehat{v}_j\ge C>0$ for some constant $C>0$. 
 Since the right-hand side of the above limit does not depend on $\mathcal{F}_n$, the above limit holds unconditionally.

We complete the proof by showing that, with probability tending to 1, $
\widehat{v}_j\ge C>0$ for some constant $C>0$. Recall that,
$$
\widehat{v}_j^2 = \frac{1}{n}\sum_{i=1}^{n} Z_i^2 \hat{g}_i - \left( \frac{1}{n}\sum_{i=1}^{n} Z_i \hat{g}_i \right)^2, 
$$
where $\hat{g}_i = g(\mathbf{X}_i^\top \widehat{\boldsymbol{\beta}}_n)$. We introduce the empirical probability measure $\boldsymbol{\pi} = \{\pi_i\}_{i=1}^n$,  
$$\pi_i = \frac{\hat{g}_i}{\sum_{i=1}^n \hat{g}_i} = \frac{\hat{g}_i}{n \bar{g}_n}, \quad \text{with } \bar{g}_n = \frac{1}{n}\sum_{i=1}^n \hat{g}_i.
$$
Since $g$ is a positive link function, we have $\pi_i \ge 0$ and $\sum_{i=1}^n \pi_i = 1$. Consequently, 
$$
\begin{aligned}
\widehat{v}_{j}^2 &= \bar{g}_n \sum_{i=1}^n Z_i^2 \pi_i - \bar{g}_n^2 \left( \sum_{i=1}^n Z_i \pi_i \right)^2 \\
&= \bar{g}_n \left[ \sum_{i=1}^n Z_i^2 \pi_i - \left( \sum_{i=1}^n Z_i \pi_i \right)^2 \right] + \bar{g}_n (1 - \bar{g}_n) \left( \sum_{i=1}^n Z_i \pi_i \right)^2 \\
&= \bar{g}_n \operatorname{Var}_{\boldsymbol{\pi}}(Z) + \bar{g}_n (1 - \bar{g}_n) \mathbb{E}_{\boldsymbol{\pi}}^2 [Z],
\end{aligned}$$
where $\operatorname{Var}_{\boldsymbol{\pi}}(Z) = \sum_{i=1}^n \left( Z_i - \mathbb{E}_{\boldsymbol{\pi}}[Z] \right)^2 \pi_i$. 
By Theorem \ref{theorem:high:main} and \eqref{eq:condition:identify}, we have that 
$\bar{g}_n \xrightarrow{P} 1$ as $n \to \infty$, rendering the second term $\bar{g}_n (1 - \bar{g}_n) \mathbb{E}_{\boldsymbol{\pi}}^2 [Z] = o_P(1)$. Thus, the strict positivity of $\widehat{v}_{j}$ asymptotically depends on showing that $\operatorname{Var}_{\boldsymbol{\pi}}(Z)$ does not vanish.

Suppose, to the contrary, that this is not true. Then there exists a
subsequence, still indexed by $n$, such that
$
\operatorname{Var}_{\boldsymbol{\pi}}(Z)
\rightarrow0
$
in probability along this subsequence.
Define
$m_n=\mathbb E_{\boldsymbol{\pi}}Z$ and $r_i = Z_i-m_n$. Then, 
$$
\operatorname{Var}_{\boldsymbol{\pi}}(Z) =\frac{1}{n\overline{g}_n } \sum_{i=1}^n \hat{g}_i r_i^2. 
$$
Since $\overline{g}_n \stackrel{P}{\to} 1$, the above convergence implies 
$$
\frac1n\sum_{i=1}^n r_i^2\hat g_i=o_P(1).
$$
By Assumption \ref{high:condition:moment:second}, we have that 
$$
\frac1n\sum_{i=1}^n r_i^2 \hat g'_i=o_P(1),
$$
where $\hat g_i' = g'(\mbX_i^\top \widehat{\bolbeta})$.

Now consider the bias-correction constraint,
$$
\left\| \frac1n
\sum_{i=1}^n \mbX_i Z_i\hat g_i'-\mathbf e_j\right\|_\infty=o_P(1).
$$
Substituting $Z_i=m_n+r_i$ gives
$$
\|m_n \frac1n \sum_{i=1}^n \mbX_i\hat g_i'+\frac1n\sum_{i=1}^n\mbX_i r_i\hat g_i'-
\mathbf e_j\|_{\infty} = o_P(1).
$$
By Assumptions \ref{condition:regular:x} and \ref{high:condition:moment}, and the Cauchy--Schwarz inequality, we have that 
$$
\begin{aligned}
\|\frac1n
\sum_{i=1}^n \mbX_i r_i\hat g_i'\|_{\infty} \le \kappa_X \frac{1}{n}\sum_{i=1}^n |r_i\hat g_i' | \le \kappa_X \suit{\frac{1}{n}\sum_{i=1}^n r_i^2 g_i'}^{1/2} \suit{\frac{1}{n}\sum_{i=1}^n g_i'}^{1/2} =o_P(1). 
\end{aligned}
$$

Hence,
$$
\|m_n\frac1n\sum_{i=1}^n\mbX_i\hat g_i'-\mathbf e_j\|_{\infty}=o_P(1).
$$

Consider the first coordinate. Since $X_{i,1}=1$,
$$
m_n
\frac1n\sum_{i=1}^n\hat g_i'=o_P(1),
$$
which implies $
m_n=o_P(1).$
However, considering the $j$-th coordinate yields
 $$
m_n\frac1n \sum_{i=1}^n X_{i,j}\hat g_i' =1+o_P(1).
$$

Since $m_n=o_P(1)$ and the empirical average is bounded in probability,
the left-hand side converges to zero, contradicting the above display.
Therefore,
$
\operatorname{Var}_{\boldsymbol{\pi}}(Z)
$
is bounded away from zero in probability. The proof is then complete.

\end{proof}

\begin{lemma}\label{lemma:process}
	Assume the same conditions as in Theorem \ref{theorem:normality}. 
	Then, conditional on $\mathcal{F}_n$, under a suitable Skorokhod construction, there exist two mean-zero Gaussian processes 
   with covariance functions 
$$
\begin{aligned}
	\text{Cov}\suit{W_{1,n}(u_1), W_{1,n}(u_2)}  = & \suit{u_1 \wedge u_2} \frac{1}{n}\sum_{i=1}^n Z_i^2 g(\mbX_i^\top\bolbeta^0), \\
	\text{Cov}\suit{W_{2,n}(u_1), W_{2,n}(u_2)}  = &\suit{u_1 \wedge u_2} \frac{1}{n}\sum_{i=1}^n  g(\mbX_i^\top\bolbeta^0), \\
	\text{Cov}\suit{W_{1,n}(u_1), W_{2,n}(u_2)}  = &\suit{u_1 \wedge u_2} \frac{1}{n}\sum_{i=1}^nZ_i  g(\mbX_i^\top\bolbeta^0),
\end{aligned}
$$
	such that
	 as $n\to\infty$, 
	$$
	\begin{aligned}
			&\sup_{0\le u \le 1} \abs{ \widetilde{\Gamma}_n(u) - W_{1,n}(u)  } = o_P(1), \\ 
			& \sup_{0\le u\le 2}\abs{ \sqrt{k}\suit{\frac{1}{k}\sum_{i=1}^n \mI\suit{U_i\le ku/n} - u\frac{1}{n}\sum_{i=1}^n  g(\mbX_i^\top \bolbeta^0) } - W_{2,n}(u) }  = o_P(1),
	\end{aligned}
	$$
	\end{lemma}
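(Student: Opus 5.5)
Conditional on $\mathcal{F}_n$, the weights $Z_i=\widehat{\mbu}_j^\top\mbX_i$ are fixed numbers, since $\widehat{\mbu}_j$ depends only on $\mathcal{D}_2$ and the covariates of $\mathcal{D}_1$. The pairs $(Z_i,U_i)$, $i=1,\dots,n$, are then independent with $U_i\mid\mbX_i$ distributed as $\Pr(U\le\cdot\mid\mbX=\mbX_i)$. I would therefore treat the two processes jointly as a bivariate weighted tail empirical process
\[
\boldsymbol{\Gamma}_n(u)=\frac{1}{\sqrt{k}}\sum_{i=1}^n \bigl(Z_i,1\bigr)^\top\bigl\{\mI(U_i\le ku/n)-\Pr(U_i\le ku/n\mid\mbX_i)\bigr\},
\]
and prove weak convergence to a bivariate Gaussian process with the stated covariance. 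Because the limit's covariance depends on $n$ through the empirical averages, I would establish a strong approximation by a sequence of Gaussian processes, using Skorokhod's representation, rather than convergence to a fixed limit.

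The steps are as follows.

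\textbf{Step 1: covariance.} For $u_1\le u_2$ the conditional covariance of the first coordinate is $k^{-1}\sum_i Z_i^2\{p_i(u_1)-p_i(u_1)p_i(u_2)\}$, where $p_i(u)=\Pr(U_i\le ku/n\mid\mbX_i)$. Assumption~\ref{all:model:bias} together with Assumption~\ref{condition:bias:gradient} gives $p_i(u)=(k/n)u\,g(\mbX_i^\top\bolbeta^0)+(k/n)u\,O(A(y_n))B(\mbX_i)$ plus an $A_2$ remainder. The product term is $O(k/n)$ relative to the main term, so the covariance equals $(u_1\wedge u_2)\,n^{-1}\sum_i Z_i^2g(\mbX_i^\top\bolbeta^0)+o_P(1)$. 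The remainder is controlled using $|Z_i|\le\lambda_{2,n}\asymp\sqrt{\log n}$ from \eqref{bias:max}, the moment bounds on $B$ and $B_2$, and $A(y_n)\log n\to0$, $A_2(k/n)\log n\to0$. The second coordinate and the cross term are handled identically, with weights $1$ and $Z_i$.

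\textbf{Step 2: second coordinate.} In the second display, replacing $\Pr(U_i\le ku/n\mid\mbX_i)$ by $(k/n)u\,g(\mbX_i^\top\bolbeta^0)$ costs $\sqrt{k}A(y_n)\,n^{-1}\sum_iB(\mbX_i)=o_P(1)$, uniformly in $u\le2$.

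\textbf{Step 3: finite-dimensional convergence.} This follows from the Lindeberg--Feller theorem conditional on $\mathcal{F}_n$. The summands are bounded by $k^{-1/2}\max_i|Z_i|\lesssim\sqrt{\log n/k}\to0$, which gives Lindeberg at once by the rate condition $\sqrt{k}/(\log p\log n)\to\infty$. The covariance limits come from Step~1, and the law of large numbers gives $n^{-1}\sum_i g(\mbX_i^\top\bolbeta^0)\to1$.

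\textbf{Step 4: tightness.} I expect this to be the main obstacle, because the weights $Z_i$ are unbounded (of order $\sqrt{\log n}$) and change with $n$. My first approach would be a bracketing or chaining bound on the class $\{(z,v)\mapsto z\,\mI(v\le ku/n):u\in[0,2]\}$, whose envelope is $\max|Z_i|$. Concretely, I would partition $[0,2]$ into a grid of mesh $\eta$. I would then bound within-cell oscillations by monotonicity in $u$ after splitting $Z_i=Z_i^+-Z_i^-$, which makes each piece a monotone process. A Bernstein inequality (Lemma~\ref{lemma:bernstein}) on the increments gives a bound of order $\sqrt{\eta\,\overline{Z^2}}+\sqrt{\log n/k}\,\log(1/\eta)$, which is small for small $\eta$ under the rate assumptions.

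Finally, with finite-dimensional convergence and asymptotic equicontinuity in hand, a standard argument (e.g.\ the Koml\'os--Major--Tusn\'ady or Skorokhod embedding along subsequences) yields Gaussian processes $W_{1,n},W_{2,n}$ with the stated conditional covariance such that the sup-distances are $o_P(1)$.
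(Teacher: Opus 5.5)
Your architecture matches the paper's proof. You condition on $\mathcal{F}_n$ so the $Z_i$ are fixed, get the covariance from $\phi(0\mid\mbx)=g(\mbx^\top\bolbeta^0)$, use Lindeberg--Feller with the envelope bound $k^{-1/2}\max_i|Z_i|\lesssim\sqrt{\log n/k}$, and obtain tightness by bracketing.

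\textbf{The gap is the tightness argument in Step 4.} Take $\Gamma^{+}(u)=k^{-1/2}\sum_i Z_i^{+}\{\mI(U_i\le ku/n)-p_i(u)\}$ and $u\in[u_l,u_{l+1}]$ with $u_{l+1}-u_l=\eta$. Monotonicity gives only
\[
\bigl|\Gamma^{+}(u)-\Gamma^{+}(u_l)\bigr|\le\bigl|\Gamma^{+}(u_{l+1})-\Gamma^{+}(u_l)\bigr|+\frac{1}{\sqrt{k}}\sum_{i=1}^n Z_i^{+}\{p_i(u_{l+1})-p_i(u_l)\}.
\]
The last (compensator) term is approximately $\sqrt{k}\,\eta\,n^{-1}\sum_i Z_i^{+}g(\mbX_i^\top\bolbeta^0)$, which diverges as $k\to\infty$ for every fixed $\eta$.

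Bernstein with a union bound controls only the grid increments. So the single-scale argument you describe does not yield the within-cell bound $\sqrt{\eta\,\overline{Z^2}}+\sqrt{\log n/k}\,\log(1/\eta)$. Shrinking the mesh to $\eta\asymp\varepsilon/\sqrt{k}$ tames the compensator, but then there are of order $\sqrt{k}/\varepsilon$ cells. A one-level union bound over grid pairs at distance at most $\delta$ then gives only order $\sqrt{\delta\log k}$, which does not vanish. No fixed finite partition with uniformly small oscillation comes out of a one-scale argument.

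\textbf{How to close it.} Your target bound is the right order: it is what a bracketing maximal inequality with Bernstein-type chaining gives for increments over $u$-intervals of length $\eta$. Getting it requires chaining across scales:
\begin{itemize}
\item take dyadic levels $2^{-m}$ from $\delta$ down to $\varepsilon k^{-1/2}$;
\item apply Bernstein with a union bound at each level, with variance $\lesssim 2^{-m}$ and range $\lesssim\sqrt{\log n/k}$;
\item use the monotone sandwich only at the finest level, where the compensator is $O(\varepsilon)$.
\end{itemize}
The level bounds sum to $\sum_m\sqrt{2^{-m}m}$, which is small for small $\delta$, plus a range term of order $\log^2 k\,\sqrt{\log n/k}\to 0$.

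The paper avoids doing this by hand. It invokes the bracketing CLT for row-independent triangular arrays, Theorem 3 of \cite{einmahl2021empirical}, which does the chaining internally. It takes brackets $H_j$ of $u$-length $\varepsilon^2$. By the mean-value theorem with $\phi(\cdot\mid\mbx)$ and Assumption~\ref{condition:bias:gradient}, their conditional $L_2$-size is at most $\sqrt{c_0}\,\varepsilon$, so $N_{[\,]}(\varepsilon)\lesssim\varepsilon^{-2}$ and the entropy integral is finite. The Lindeberg-type envelope condition comes from the same $\sqrt{\log n/k}$ bound you already use. With either fix, the rest of your outline goes through as in the paper.
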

\begin{proof}[Proof of Lemma \ref{lemma:process}]
We first apply Theorem 3 of \cite{einmahl2021empirical} to establish asymptotic tightness of $\widetilde{\Gamma}_n$. 	Define
 $$
 h_{n,i}(u) = \frac{1}{\sqrt{k}} Z_i \mI\suit{ U_i\le ku/n }, \quad \mathcal{H}_n = \set{h_{n,i}(u), u\in [0,1] }. 
 $$
Note that as $n\to\infty$, 
 \begin{equation}\label{s:eq:bound:process}
  \| h_{n,i} \|_{\mathcal{H}}: = \sup_{u\in [0,1]} |h_{n,i}(u)| = |h_{n,i}(1)| = \frac{1}{\sqrt{k}} |Z_i|\mI\suit{U_i\le k/n } \le \frac{|Z_i|}{\sqrt{k}} \lesssim \frac{\sqrt{\log n}}{\sqrt{k}}\to 0 ,
 \end{equation}
 uniformly for all $1\le i\le n$. 
 Hence, 
 $$
 \lim_{n\to\infty}\sum_{i=1}^n \bE^*\suit{ \| h_{n,i} \|_{\mathcal{H}}\mI\suit{ \| h_{n,i} \|_{\mathcal{H}}> \lambda} } = 0. 
 $$

Denote $H_j = [ (j-1)\varepsilon^2, j \varepsilon^2 ] $, where $\varepsilon>0$ is a small constant. 
 By the mean-value theorem, there exist constants $\widetilde{u}_{ij} \in H_j$ such that 
  $$
  \begin{aligned}
 &\sum_{i=1}^n \bE^* \suit{ \sup_{u,v \in H_j } \abs{ h_{n,i}(u)-h_{n,i}(v) }^2 |\mathcal{F}_n }	\\
 =&\frac{1}{k} \sum_{i=1}^n  Z_i^2 \bE\msuit{\mI\suit{  k(j-1)\varepsilon^2/n<  U_i\le kj\varepsilon^2/n }|\mathcal{F}_n} \\
 =& \frac{1}{k}\sum_{i=1}^n Z_i^2  \suit{ \Pr( U_i\le k j\varepsilon^2/n |\mbX_i) - \Pr(U_i\le k (j-1)\varepsilon^2/n|\mbX_i)  }  \\
 =& \varepsilon^2  \frac{1}{n}  \sum_{i=1}^n Z_i^2  \phi(\widetilde{u}_{ij}k/n|\mbX_i ). 
 \end{aligned}
 $$
By Assumption \ref{high:condition:moment:second} and Assumption \ref{condition:bias:gradient}, we have that with probability tending to 1, 
$$
\begin{aligned}
	\frac{1}{n}  \sum_{i=1}^n Z_i^2  \phi(\widetilde{u}_{ij}k/n|\mbX_i ) = &  \frac{1}{n}  \sum_{i=1}^n Z_i^2 \suit{ \phi(0|\mbX_i) +B_2(\mbX_i)A_2(k/n)}   \\
	= &\frac{1}{n}  \sum_{i=1}^n Z_i^2  g(\mbX_i^\top \bolbeta^0) +o(1)<\infty.
\end{aligned} 
$$
Thus, we have that with probability tending to $1$, for some $c_0>0$, 
$$
 \sum_{i=1}^n \bE^* \suit{ \sup_{u,v \in H_j } \abs{ h_{n,i}(u)-h_{n,i}(v) }^2 |\mathcal{F}_n } \le c_0\varepsilon^2. 
$$
The bracket entropy $N_{\varepsilon}$ is thus of order $O(1/\varepsilon^2)$, satisfying the integral covering condition $\int_0^\delta \sqrt{\log N_{\varepsilon}} d\varepsilon < \infty$ for some $\delta>0$. Hence, $\widetilde{\Gamma}_n(u)$ is asymptotically tight. Similarly, we can establish the asymptotic tightness of the process 
$$
\Gamma_{n,2}(u) = \sqrt{k}\suit{\frac{1}{k}\sum_{i=1}^n \mI\suit{U_i\le ku/n} - u\frac{1}{n}\sum_{i=1}^n g(\mbX_i^\top \bolbeta^0)}.
$$

We now evaluate the covariance structure. For any $u, v \in [0,1]$, we have 
 \begin{align*}
 	\bE(\widetilde{\Gamma}_n(u) \widetilde{\Gamma}_n(v)|\mathcal{F}_n) = & \frac{1}{k}\sum_{i=1}^n Z_i^2 \Pr(U_i\le (u\wedge v ) k/n |\mathcal{F}_n)  =     (u\wedge v ) \frac{1}{n}\sum_{i=1}^n Z_i^2 g(\mbX_i^\top\bolbeta^0) +o(1), \\
 		\bE(\widetilde{\Gamma}_n(u) \widetilde{\Gamma}_{n,2}(v)|\mathcal{F}_n) = & \frac{1}{k}\sum_{i=1}^n Z_i \Pr(U_i\le (u\wedge v ) k/n |\mathcal{F}_n)  =     (u\wedge v ) \frac{1}{n}\sum_{i=1}^n Z_i g(\mbX_i^\top\bolbeta^0) +o(1), \\
 		\bE(\widetilde{\Gamma}_{n,2}(u) \widetilde{\Gamma}_{n,2}(v)|\mathcal{F}_n) = & \frac{1}{k}\sum_{i=1}^n \Pr(U_i\le (u\wedge v ) k/n |\mathcal{F}_n)  =     (u\wedge v ) \frac{1}{n}\sum_{i=1}^n  g(\mbX_i^\top\bolbeta^0) +o(1).
 \end{align*}
Finite-dimensional convergence to a Gaussian vector follows from the Cram\'er-Wold device and the Lindeberg-Feller CLT, where the Lindeberg condition is satisfied by the envelope bound established in \eqref{s:eq:bound:process}. This completes the proof.
 
\end{proof}

 \begin{lemma}\label{lemma:order:stat}
Assume the same conditions as in Theorem \ref{theorem:normality}. Then, conditional on $\mathcal{F}_n$, we have that as $n\to\infty$, 
 	$$
 	\sqrt{k}\suit{s_n-1} = -W_{2,n}(1) +o_P(1).
 	$$
 	Here, $W_{2,n}$ is defined in Lemma \ref{lemma:process}.
 \end{lemma}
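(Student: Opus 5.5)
The plan is to use the standard inversion (Vervaat-type) argument that turns the tail empirical process of the counts $N_n(u)=\sum_{i=1}^n \mI(U_i\le ku/n)$ into the order statistic $U_{k,n}$. Work conditionally on $\mathcal{F}_n$ and write $\bar g_n = n^{-1}\sum_{i=1}^n g(\mbX_i^\top\bolbeta^0)$. The second display of Lemma \ref{lemma:process} gives, under the Skorokhod construction,
\[
\sup_{0\le u\le 2}\abs{\sqrt{k}\suit{k^{-1}N_n(u) - u\bar g_n} - W_{2,n}(u)} = o_P(1).
\]
The key identity is $N_n(s_n) = k$, where $s_n = nU_{k,n}/k$, because $U_{k,n}$ is the $k$-th smallest of the $U_i$ and ties occur with probability zero given the continuity in Assumption \ref{condition:bias:gradient}. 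If $s_n\in[0,2]$ with probability tending to one, we may evaluate the display at the random point $u=s_n$, which gives
\[
\sqrt{k}\suit{1 - s_n\bar g_n} = W_{2,n}(s_n) + o_P(1).
\]

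First I will show $s_n\to 1$ in probability. The display at fixed $u$ gives $k^{-1}N_n(u) = u\bar g_n + O_P(k^{-1/2})$. I will also use $\bar g_n\to 1$, which follows from the law of large numbers under Assumption \ref{high:condition:moment:second} together with the normalization \eqref{eq:condition:identify}. Taking $u=1\pm\epsilon$, we get $N_n(1-\epsilon)<k<N_n(1+\epsilon)$ with probability tending to one. By monotonicity of $N_n$ this forces $s_n\in(1-\epsilon,1+\epsilon)$. In particular $s_n\in[0,2]$ eventually.

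Second, I will replace $W_{2,n}(s_n)$ by $W_{2,n}(1)$. The process $W_{2,n}$ is Brownian motion scaled by $\bar g_n^{1/2}=O_P(1)$, so its sample paths are uniformly continuous on $[0,2]$. Combined with $s_n\to_P 1$, this gives $W_{2,n}(s_n)-W_{2,n}(1)=o_P(1)$. Equivalently, one can use the asymptotic tightness of $\Gamma_{n,2}$ established in the proof of Lemma \ref{lemma:process}.

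Third, I will rewrite $\sqrt{k}(1-s_n\bar g_n)$ as
\[
\sqrt{k}(1-s_n\bar g_n) = -\sqrt{k}(s_n-1) - \sqrt{k}(s_n-1)(\bar g_n-1) - \sqrt{k}(\bar g_n-1).
\]
The middle term is $o_P(1)$ because the previous displays already give $\sqrt{k}(s_n-1)=O_P(1)$ and $\bar g_n-1=o_P(1)$. The conclusion $\sqrt{k}(s_n-1)=-W_{2,n}(1)+o_P(1)$ then reduces to the centering estimate $\sqrt{k}(\bar g_n - 1)=o_P(1)$.

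This centering term is where I expect the main obstacle. The CLT gives $\bar g_n - 1 = O_P(n^{-1/2})$, using $\bE g(\mbX^\top\bolbeta^0)=1$ and a finite second moment, as in Assumption \ref{high:condition:moment}(ii) with $\bE g^2\le R$. Hence $\sqrt{k}(\bar g_n-1)=O_P(\sqrt{k/n})=o_P(1)$ since $k/n\to0$. The point to handle carefully is the conditioning. Given $\mathcal{F}_n$, $\bar g_n$ is a fixed quantity, so I will use the unconditional bound to restrict to an $\mathcal{F}_n$-measurable event of probability tending to one on which $\sqrt{k}\abs{\bar g_n-1}\le \eta_n\to0$. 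On that event the statement holds conditionally, which is all that is needed later in the proof of Lemma \ref{lemma:normality}.

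A more robust alternative is to keep $\bar g_n$ rather than replacing it by $1$, which yields $\sqrt{k}(s_n\bar g_n-1)=-W_{2,n}(1)+o_P(1)$. The mean-value step for $J_2$ in the proof of Lemma \ref{lemma:normality} is insensitive to this difference up to $o_P(1)$ terms.
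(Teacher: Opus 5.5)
Your proof is correct and is essentially the paper's argument. The paper replaces $u\bar g_n$ by $u$ in the second display of Lemma \ref{lemma:process}, applies Vervaat's lemma, and sets $t=1$; you carry out the same inversion by hand at the single point $s_n$, using $N_n(s_n)=k$, consistency of $s_n$, and path continuity of $W_{2,n}$. You also correctly make explicit that this replacement needs $\sqrt{k}(\bar g_n-1)=o_P(1)$, which follows from the CLT and $k/n\to0$, and not merely $\bar g_n=1+o_P(1)$ as the paper's proof states. Your closing ``more robust alternative'' is not actually more robust: $\sqrt{k}(s_n-1)$ and $\sqrt{k}(s_n\bar g_n-1)$ differ by $\sqrt{k}\,s_n(\bar g_n-1)$, so the step for $J_2$ in the proof of Lemma \ref{lemma:normality} still relies on the same centering estimate.
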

 \begin{proof}
  
  By Lemma \ref{lemma:process} and the fact that 
  $\frac{1}{n}\sum_{i=1}^n g(\mbX_i^\top \bolbeta^0) =1+o_P(1)$, we have that 
 $$
 \sup_{0\le t\le 2}\abs{ \sqrt{k}\suit{\frac{1}{k}\sum_{i=1}^n \mI\suit{U_i\le kt/n} - t } - W_{2,n}(t) } \to 0.
 $$
 Applying Vervaat's lemma, we obtain 
 $$
 \sup_{0\le t\le 1} \abs{ \sqrt{k}\suit{\frac{n}{k}U_{kt,n}-t } + W_{2,n}(t) } =o_P(1).
 $$
 Taking $t = 1$ gives 
 $$
 	\sqrt{k}\suit{s_n-1} = -W_{2,n}(1) +o_P(1).
 $$
 The proof is then complete.
 \end{proof}

 \section{Additional Simulation Study}\label{sec:addition:simulation}
 In this section, we consider the case where $Y$ is not heavy-tailed. The response variable $Y$ is then generated according to the following mechanism:
$$
Y = 1- \frac{1}{g(\boldsymbol{X}^\top \boldsymbol{\beta}^0)}Y_0,
$$
 where the true parameter vector is specified as $$\boldsymbol{\beta}^0 = (\beta_1^0, \beta_2^0, \dots, \beta_p^0 )^\top = (\beta_1^0, 1, 0.5, -1, -0.5, 0, \dots, 0)^\top.$$ 
Here, $\beta_1^0$ is chosen to make $\bE g(\boldsymbol{X}^\top \boldsymbol{\beta}^0 ) = 1$, and $Y_0$ is a standard uniform random variable independent of $\mbX$. In this case, we have that $Y|\mbX=\mbx\sim \text{Unif}(1-1/g(\mbx^\top \bolbeta^0), 1)$ and hence 
$$
\lim_{y\to 1}\frac{1-F_Y(y|\mbX=\mbx)}{1-F_Y(y)} = g(\mbx^\top \bolbeta^0). 
$$
Other settings are the same as in Section \ref{sec:simulation}.

The finite-sample performances of the oracle estimator and the $\ell_1$-penalized estimator are summarized in Tables \ref{table:oracle:unif} and \ref{tab:simulation:unif}, respectively. We also follow the same strategy to conduct the debiasing procedure as in Section \ref{sec:simu:ci}. The finite-sample performance of the debiased estimator is reported in Table \ref{tab:inference:unif}, and the QQ plots of the studentized statistics are shown in Figure \ref{figure:unif:qq}. 
 We draw conclusions similar to those  in Section \ref{sec:simulation}. 

\begin{table}
\caption{Simulation results of the oracle estimator for the short-tailed case.}
\label{table:oracle:unif}
\centering
 \begin{tabular}[t]{lrr}
\toprule
g & Bias ($\beta_2$) & SD ($\beta_2$)\\
\midrule
exp  & -0.036 & 0.153\\
softplus & -0.007 & 0.323\\
\bottomrule
\end{tabular}
\end{table}

 \begin{table}[htbp]
  \centering
  \caption{Simulation results of the $\ell_1$-penalized estimator for the short-tailed case.}
  \label{tab:simulation:unif}
  \footnotesize
  \begin{tabular}{lrrrrrrr}
\toprule
g & p & $c_0$ & accuracy & bias($\beta_2$) & sd ($\beta_2$) & bias($\beta_{10}$) & sd ($\beta_{10}$)\\
\midrule
exp & 50 & 0.5 & 0.799 & -0.149 & 0.147 & 0.000 & 0.041\\
exp & 50 & 1.0 & 0.953 & -0.290 & 0.137 & -0.001 & 0.011\\
exp & 100 & 0.5 & 0.824 & -0.161 & 0.148 & 0.000 & 0.035\\
exp & 100 & 1.0 & 0.970 & -0.325 & 0.140 & -0.001 & 0.009\\
softplus & 50 & 0.5 & 0.767 & -0.322 & 0.262 & -0.001 & 0.076\\
softplus & 50 & 1.0 & 0.931 & -0.664 & 0.204 & -0.001 & 0.019\\
softplus & 100 & 0.5 & 0.795 & -0.345 & 0.265 & -0.003 & 0.067\\
softplus & 100 & 1.0 & 0.957 & -0.714 & 0.198 & -0.001 & 0.019\\
\bottomrule
\end{tabular}
\end{table}

 \begin{table}[htbp]
  \centering
  \caption{Simulation results of the debiased estimator for the short-tailed case.}
  \label{tab:inference:unif}
  \footnotesize
 	\begin{tabular}{lrrrrrrr}
\toprule
g & p & coverage($\beta_2$) & bias($\beta_2$) & sd($\beta_2$) & coverage($\beta_{10}$)& bias($\beta_{10}$) &sd($\beta_{10}$)\\
\midrule
exp & 50 & 0.967 & -0.019 & 0.153 & 0.965 & 0.000 & 0.183\\
exp & 100 & 0.959 & -0.016 & 0.158 & 0.962 & 0.003 & 0.185\\
softplus & 50 & 0.951 & -0.063 & 0.298 & 0.950 & -0.002 & 0.342\\
softplus & 100 & 0.948 & -0.052 & 0.302 & 0.956 & -0.002 & 0.338\\
\bottomrule
\end{tabular}
 \end{table}

 \begin{figure}[htbp]
    \centering
    \begin{subfigure}{\textwidth}
        \centering
       \includegraphics[width=0.9\textwidth]{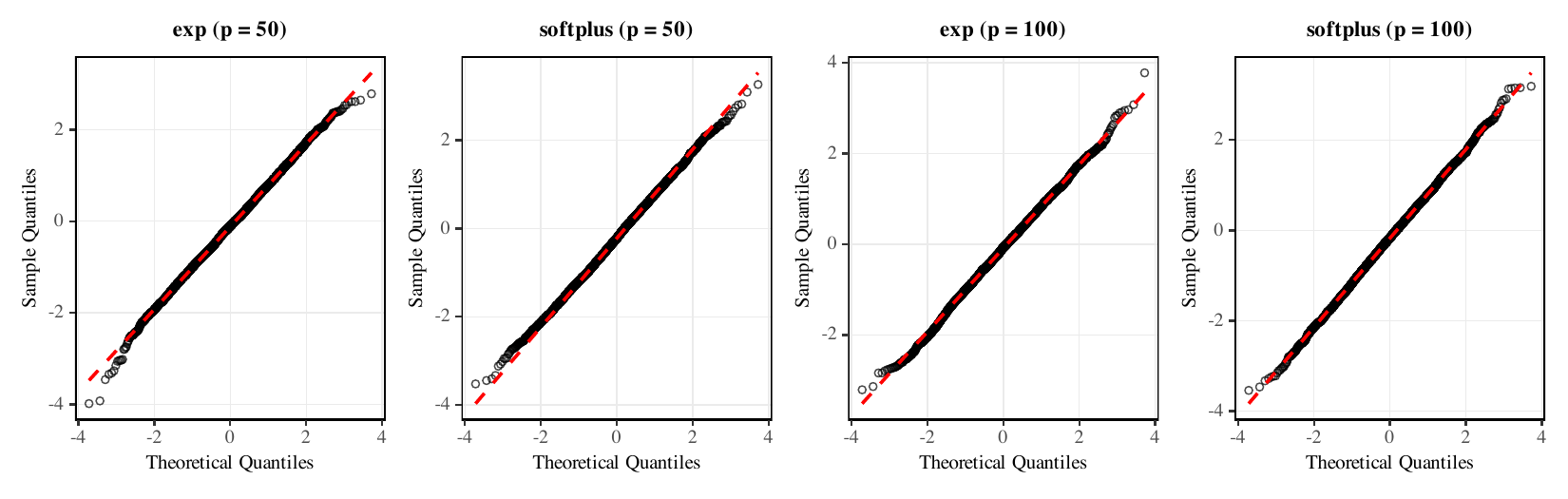}
        \caption{QQ plots for coefficient $\beta_2$.}
    \end{subfigure}    
    \vspace{1em} 
    \begin{subfigure}{\textwidth}
        \centering
        \includegraphics[width=0.9\textwidth]{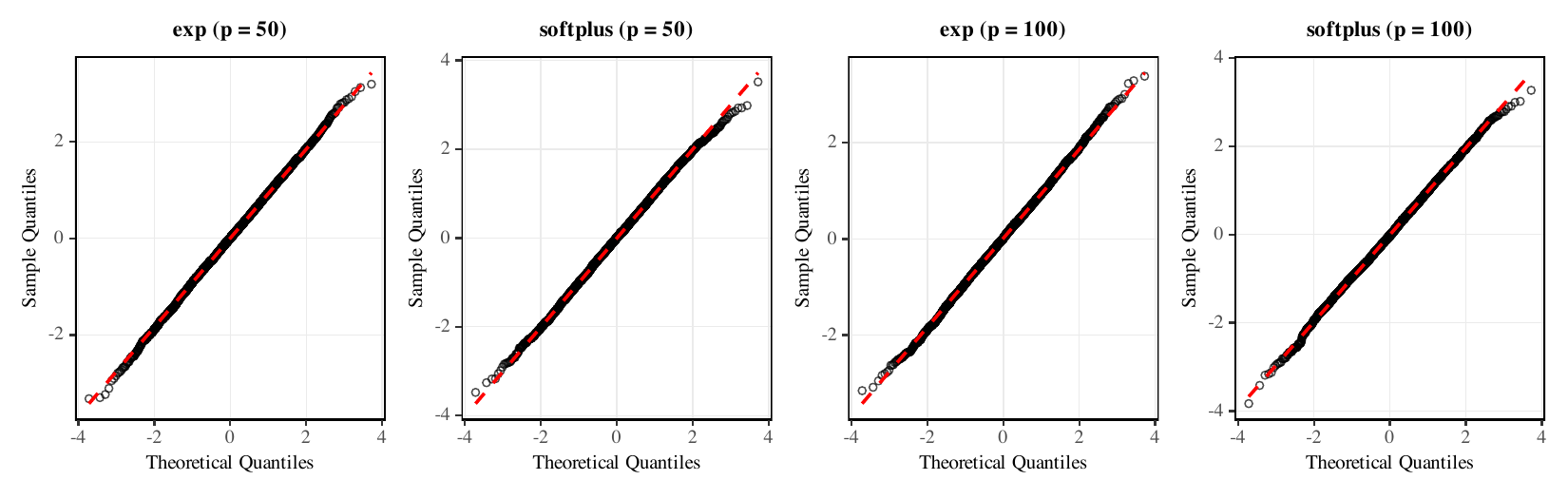}
        \caption{QQ plots for coefficient $\beta_{10}$.}
    \end{subfigure}
    \caption{QQ plots for the debiased estimator for the short-tailed case.}
    \label{figure:unif:qq}
\end{figure}
\clearpage

\end{document}